\newenvironment{assumptionp}[1]{
  
  \assumptionalt
}{\endassumptionalt}
\newtheorem{corollary}{Corollary}
\newtheorem{definition}{Definition}
\newtheorem{example}{Example}
\newtheorem{lemma}{Lemma}
\newtheorem{proposition}{Proposition}
\renewcommand{\cite}{\citeasnoun}
\newenvironment{proof}[1][Proof]{\textbf{#1.} }{\ \rule{0.5em}{0.5em}}
\newcommand{\mycomment}[1]{}
\newcommand{\f}{_{50}}
\newcommand{\e}{_{80}}
\newcommand{\pD}{\rho_{D}}
\newcommand{\pND}{\rho_{ND}}
\newcommand{\qD}{q_D}
\newcommand{\qND}{q_{ND}}
\newcommand{\END}{E_{NC}}
\newcommand{\DND}{D_{ND}}
\newcommand{\ELND}{EL_{ND}}
\newcommand{\eps}{\varepsilon}
\newcommand{\vCa}{v_{C^\alpha}}
\newcommand{\vDa}{v_{D^\alpha}}
\newcommand{\Crn}{C^1_*}
\newcommand{\Drn}{D^1_*}
\newcommand{\Da}{D^\alpha}
\newcommand{\Ca}{C^\alpha}
\newcommand{\Do}{D^1}
\newcommand{\Co}{C^1}
\date{\today}
\newcounter{ccc} 
\begin{document}

\title{{Noisy Certification in a Duopolistic Setting with Loss-Averse Buyers}\thanks{\protect\linespread{1}\protect\selectfont We are grateful to Lawrence Ausubel, Emel Filiz-Ozbay, Doruk Iris, Ginger Jin, Yusufcan Masatlioglu, Erkut Ozbay, Daniel Vincent as well as attendants of various economic seminars for their helpful feedback and comments. This work was supported by the Creative-Pioneering Researchers Program at Seoul National University and by
the Research Grant of the Center for Distributive Justice at the Institute of Economic Research, Seoul National University.}}
\author{Dmitry A. Shapiro\thanks{\protect\linespread{1}\protect\selectfont Department of Economics and SNU Institute of Economic
Research, College of Social Sciences, Seoul National University, 1 Gwanak-ro, Gwanak-gu,
Seoul 08826, South Korea. Email:
\texttt{dmitry.shapiro@snu.ac.kr.} } \and
Tri Phu Vu\thanks{\protect\linespread{1}\protect\selectfont  Department of Economics, University of Maryland, 3114 Tydings Hall, 7343 Preinkert Dr., College Park, MD 20742. Email: \texttt{tvuphu@umd.edu.} }
}

\maketitle

\begin{abstract}
\linespread{1.0}\selectfont
This paper studies how noise in certification technology affects seller profits in a duopoly with unobservable product quality. We identify two opposing effects of noisy certification. First, it reduces the informativeness of certification outcomes, homogenizing buyers' beliefs and limiting the scope for vertical differentiation. Second, it introduces randomness into buyer perceptions, endogenously generating differentiation between otherwise similar products. When buyers are risk-neutral, the first effect dominates, reducing seller profits. However, when buyers are loss averse, the negative impact of reduced informativeness is mitigated, and noisy certification can increase profits relative to accurate certification. Experimentally, treatments with inaccurate certification are more profitable than those with accurate certification, particularly in settings with intense competition.
\end{abstract}

\textbf{Keywords:} Product certification, noisy certification, quality testing, duopoly, Bertrand

\textbf{JEL codes:} C70, D82, D83, L15

\section{Introduction}

When quality is difficult to verify, third-party certification plays a crucial role in reducing uncertainty and guiding economic decisions. In pharmaceuticals, FDA approval distinguishes safe, effective drugs from potential hazards. In financial markets, credit ratings from agencies like Moody's or S\&P shape investment flows by affecting perceptions of risk and creditworthiness. In academia, peer review certifies research quality, determining publication outcomes and scholarly impact. In consumer markets, certified labels such as `organic' or `fair trade' build trust, and retailers' quality seals, certified pre-owned programs, and platform badges influence discovery, choice, and prices.

Yet, while certification helps establish credibility and inform decision-making, it is far from perfect. For example, some articles published in top journals fail to garner citations, while groundbreaking papers can initially struggle to pass peer review.\footnote{Card and DellaVigna (2013) found that among articles published in top-five economics journals from 1970 to 2012, 176 (1.3\%) received no citations. In contrast, the 2023 Nobel Prize-winning paper on mRNA vaccines (Karik\'{o}, Buckstein, Ni, and Weissman, 2005) was initially rejected by prestigious journals like \textit{Nature}, \textit{Science}, and \textit{Cell} due to its perceived incremental contribution (see \url{https://www.science.org/content/blog-post/nobel-modified-mrna}).} Similarly,
Moody's and S\&P erroneously assigned AAA ratings to mortgage-backed securities, a misjudgment that Bloomberg (2008) identified as a critical contributor to ``\textit{the world financial system's biggest crisis since the Great Depression.}'' Beyond formal certification, buyers also rely on noisy quality signals such as platform-assigned badges and algorithmic labels (for example, `Amazon's Choice' on Amazon, `Top Rated Seller' on eBay, and `Superhost' on Airbnb), as well as coarse letter grades, star ratings, and online product and service reviews.

Such inaccuracies frequently arise from systemic features of certification and rating systems rather than isolated anomalies. Unreliable consumer feedback and other forms of noisy data can distort evaluation accuracy (Resnick and Zeckhauser, 2002). Conflicts of interest, such as rating agencies being paid by issuers, consistently yield biased ratings (Beaver, Shakespeare, and Soliman, 2006). Competition among certifiers can weaken standards, as evidenced by declines in bond-rating accuracy during periods of heightened rivalry (Becker and Milbourn, 2011). Moreover, certifiers may prioritize reputation over precision, and well-meaning efforts such as disclosures of conflicts of interest can unintentionally exacerbate bias (Benabou and Laroque, 1992; Cain, Loewenstein, and Moore, 2005).

These observations motivate us to investigate the following questions. Do sellers sometimes have incentives to pursue certifications that may misrepresent quality? How does certification noise affect price competition and profitability in markets with competing sellers? Can industries, perhaps counterintuitively, benefit from noisy certification relative to perfectly accurate certification?

We develop a theoretical model and a series of laboratory experiments to study these questions. Two sellers have private information about product quality. Each can obtain certification from a noisy technology, for example a retailer or platform `verified' badge awarded or withheld based on threshold rules, or a multi-level rating such as a 1 to 5-star score or an A to C grade derived from a noisy underlying assessment. After obtaining a result, the seller chooses whether to disclose it. The two sellers then observe either the disclosed outcome or the competitor's non-disclosure and set prices. Buyers, who may be risk-neutral or loss-averse, make purchase decisions based on prices and any disclosed information, or its absence.

Noisy certification has two primary effects on seller profits. The first is a reduction in informativeness, the \textit{blurring effect}. Noise weakens the distinction between the posterior expected qualities of the two products, which narrows the difference in buyers' willingness to pay (WTP). In the limit, when certification conveys no information, willingness to pay is equalized and Bertrand competition drives profits to zero. The second is endogenous differentiation, the \textit{differentiation effect}. Randomness in outcomes can make the two products look different even when underlying qualities are similar. For example, if both sellers have the same quality and both certify, accurate certification yields equal WTP and zero profits, whereas noisy certification creates a positive probability that outcomes differ, which yields positive expected profits.

Noisy certification may also create non-equilibrium differentiation through belief heterogeneity. With accurate certification, disclosed outcomes make product quality common knowledge, so buyers share beliefs. With noisy certification, buyers can hold different beliefs about quality. This belief dispersion can soften price competition and increase profits. Although our theoretical analysis adopts a standard equilibrium framework with buyers holding correct and identical on-equilibrium beliefs, belief heterogeneity likely contributes to some of the patterns observed in our experimental results.

The theoretical analysis delivers two main results. First, with risk-neutral buyers, accurate certification is always more profitable in equilibrium than any equilibrium with noisy certification. The informativeness loss dominates the differentiation gain. Second, with loss-averse buyers, this ranking can reverse. Loss aversion attenuates the blurring effect because willingness to pay reflects both expected quality and expected loss. Noise may reduce differences in expected quality while increasing differences in expected loss, so the willingness-to-pay gap can be larger under loss aversion than under loss (and risk) neutrality. With a weaker blurring effect, the differentiation effect can dominate, making noisy certification more profitable. This can occur for moderate loss aversion and arbitrarily small noise.

Whether noisy certification raises profitability is a priori ambiguous. Outcomes depend on loss aversion, certification accuracy and cost, and the distribution of quality. Our theoretical analysis isolates the mechanisms and the conditions under which each dominates but does not deliver a universal ranking, so we complement it with laboratory experiments that compare profitability of accurate and noisy certification environments.

The experimental design mirrors the theoretical framework and introduces three key variations. First, we manipulate the availability and accuracy of certification.
We consider treatments with no certification (D1), accurate certification (D2), and noisy certification (D3).  Second, we consider two different quality distributions: a discrete uniform distribution with support $\{50,51,\dots,100\}$, labeled D$_{50}$, and another with support $\{80,81,\dots,100\}$, labeled D$\e$.\footnote{One can interpret D$\e$
as a narrow-dispersion environment with relatively homogeneous quality, and D$\f$
as a broad-dispersion environment with heterogeneous quality. Examples include phone chargers, HDMI cables, and mid-tier hotel rooms for the former; used cars and boutique electronics for the latter.} The latter case, where sellers' qualities are closer to each other, is expected to amplify the differentiation effect and boost profitability under noisy certification. Third, where applicable, we vary the certification fee, $c$, and the certification precision, $\alpha$.

Our main experimental result is that treatments with noisy certification (D3) generate higher seller profits than treatments with accurate certification (D2). This pattern holds under both quality distributions and for most values of the certification fee and the accuracy parameter. The gain is modest in the broad-dispersion environment D\(_{50}\), and sizeable in the narrow-dispersion environment D\(_{80}\).

Several factors explain why D2 is less profitable. First, sellers in D2 tend to over-certify, acquiring costly certification more frequently than their counterparts in D3. However, this alone does not account for the disparity. Even when certification costs are excluded, D2 remains less profitable than D3 across both quality distributions. Another key factor is that D3 reduces competitive intensity relative to D1 and D2. Specifically, D3$_l$ sellers ($l\in{50,80}$) charge the highest prices and capture the largest share of total surplus (the sum of seller and buyer profits) among the three D$_l$ treatments.

The effect of noisy certification on competition is especially evident in the more competitive D$\e$ treatments. Despite D$\e$ offering a higher expected quality (90 versus 75 in D$\f$), average prices in D1$\e$ and D2$\e$ fall below those in their D$\f$ counterparts. By contrast, D3$\e$ prices exceed those in D3$\f$. A similar pattern emerges with profits: earnings in D1$\e$ and D2$\e$ are roughly equivalent to those in D1$\f$ and D2$\f$, whereas D3$\e$ profits are substantially higher than those in D3$\f$. Thus, while accurate certification in D2$\e$ fails to ease competitive pressure, noisy certification in D3$\e$ does, allowing sellers to capitalize on the higher expected quality in D$\e$.

\section{Literature review}
\label{sec:review}
Product certification is a common practice for sellers to convey quality information to uninformed buyers (Dranove and Jin, 2010). This paper contributes to the literature examining the impact of certification on market outcomes. While many studies in this area analyze settings with a monopolistic seller or supplier (Strausz, 2005; Peyrache and Quesada, 2011; Van Der Schaar and Zhang, 2015; Pollrich and Wagner, 2016; Stahl and Strausz, 2017; Chen and Lee, 2017; Marinovic et al., 2018; Huh et al., 2023), our work falls within the smaller group focusing on oligopoly markets. We \textcolor{black}{theoretically and experimentally} show that the impact of certification in such markets leads to distinctly different outcomes compared to single-seller environments.

Within the \textcolor{black}{limited} literature on certification under oligopoly, our paper is related to De and Nabar (1991), Bottega and Freitas (2019), and Zhang and Li (2020). In De and Nabar's framework, the market is perfectly competitive and oligopolistic sellers are price-takers. Our setting is different as sellers can strategically set prices. Zhang and Li (2020) consider an environment with two risk-neutral sellers and loss-averse buyers like ours.  However, their work assumes fully accurate certification for determining private quality, which we relax by introducing noisy certification. Similarly, Bottega and Freitas (2019) explore inaccurate certification in a duopoly market with binary quality levels and risk-neutral buyers. Our paper differs in that the seller's quality is pure private information, i.e., it is unobserved by the competitor, and in allowing for loss-averse buyers. As we demonstrate, the latter significantly alters the impact of noisy certification.



Our study builds upon Huh, Shapiro and Ham (2023) by sharing similarities in buyer utility and the underlying certification technology. However, our paper is different in two key aspects. First, we move beyond the monopolistic setting of Huh et al. (2023) to explore a duopoly market. \textcolor{black}{In a monopoly, the seller's profit depends on the \textit{levels} of the buyers' willingness to pay. In contrast, in a duopoly, the seller's profit depends on the \textit{difference} in buyers' willingness to pay. In Section \ref{sec:theory}, we demonstrate that this distinction leads to dramatically different predictions. For instance, consider a simple environment in which buyers are risk neutral. Under this scenario, duopolistic sellers earn zero profit when certification is unavailable, as buyers hold identical beliefs about product quality. When certification becomes available, it allows duopolistic sellers to differentiate themselves and increase their profits. Conversely, a monopolistic seller earns a positive profit even without certification, and the introduction of certification reduces their profit.} Second, we introduce variable certification costs and precisions \textcolor{black}{in our experiments}, enabling us to analyze their impact on seller profits and market outcomes. This element is absent in Huh et al. (2023), where both cost and precision are fixed.

Our paper is also related to the experimental literature on Bertrand competition and quality disclosure. An important paper in this literature, Dufwenberg and Gneezy (2000), examines a simplified setting where participants choose integers between 2 and 100, with the lowest number winning. They observe the limitations of the Bertrand solution with two sellers, but not with three or more.  Our study differs in that we introduce seller quality heterogeneity and a richer strategic environment where sellers choose both prices and quality messages, \textcolor{black}{allowing us to study sellers' strategic decisions on information acquisition.} More recent papers investigate Bertrand competition under various aspects, including increasing marginal costs (Abbink and Brandts, 2008), risk aversion (Anderson et al., 2012), price complexity (Kalayci, 2015), and dishonesty about production cost (Feltovich, 2019). With regard to experimental literature on quality disclosure, quality information in our paper can be communicated via either cheap talk (non-verifiable disclosure), correct certification (verifiable disclosure), or noisy certification (partly verifiable disclosure). This departs from studies like S\'{a}nchez-Pag\'{e}s and Vorsatz (2009) and Sheth (2021) that allow for seller silence.
Additionally, our experiment permits dishonest quality reporting when sellers do not certify or choose not to reveal certification outcomes conditional on certifying. The dishonest disclosure characteristic aligns with the settings in Cai and Wang (2006) and Wang et al. (2010) and distinguishes our experiment from environments in Jin et al. (2021), or Sheth (2021), which focus solely on truthful cheap-talk messages.

\section{Theoretical Framework}
\label{sec:theory}

\subsection{Model}
\label{subsec:existence}

Consider a game between two risk-neutral sellers and a unit mass of price-taking buyers with unit demand. Each seller offers a product of quality $v\in\mathbb{V}=\{v_1,v_2,\dots,v_n\}$ with $0<v_1<\dots<v_n$. Product quality is the seller's private information and is unobservable to buyers and to the rival seller. The prior probability of type $v_i$ is $q_i\in(0,1)$ and is common knowledge. Production cost is zero, regardless of quality.

A third-party intermediary offers certification for a strictly positive fee $c$. The certification outcome $s$ takes values in the quality support, $s\in\mathbb{V}$. The technology is accurate with probability $\alpha$, in which case the outcome equals the true quality, $\Pr(s=v_i\mid v=v_i,\text{ success})=1$. If accuracy fails with probability $1-\alpha$, the outcome is drawn from the prior distribution, $\Pr(s=v_j\mid v=v_i,\text{ fail})=q_j$.\footnote{This noise structure preserves the mean and yields simple posteriors while allowing correlation between quality and outcome; see Klemperer, 1987, Chen et al., 2001, Shin and Sudhir, 2010, and Shin and Yu, 2021. Because failed outcomes are drawn from the prior, the conditional distribution of outcomes under failure does not depend on the true quality, which avoids boundary problems that arise with additive noise $v+\varepsilon$ that can leave the support of $\mathbb{V}$.} Hence,
$\Pr(s=v_i\mid v=v_i)=\alpha+(1-\alpha)q_i$ and $\Pr(s=v_j\mid v=v_i)=(1-\alpha)q_j$ for $j\ne i$. If $v^\prime>v^{\prime\prime}$ and $\alpha>0$, the distribution of certification outcomes conditional on $v^\prime$ first-order stochastically dominates that conditional on $v^{\prime\prime}$. We call certification accurate if $\alpha=1$ and noisy otherwise.

Buyers have loss-averse preferences, where the reference point is endogenously determined by buyers' beliefs regarding expected quality.\footnote{Among the variety of reference-dependent models with endogenous reference points (see, e.g., Gul, 1991; K\H{o}szegi and Rabin, 2006), this reference-dependent model ``\textit{has proven quite popular in applications, as the reference point is neither stochastic nor recursively defined, but is simply the expected consumption utility of the lottery}'' (Masatlioglu and Raymond, 2016, p. 2765).}
Let $\mu$ be buyers' beliefs regarding the distribution of product quality,
and \(v\) be the actual quality of the purchased product. Let $p$ be the paid price. Buyers' ex post utility is
\begin{eqnarray*}
u_B(v,p)=\underbrace{v-p}_{\text{consumption utility}}+\underbrace{b \cdot \min\{v-E_\mu v, 0\}}_{\text{loss utility}}.
\label{eq:utilityLAprimitive}
\end{eqnarray*}
where $b\ge 0$ is common across buyers. When $b=0$ buyers are loss neutral. When $b>0$ they experience a utility loss whenever realized quality falls short of the reference point $E_\mu v$.

The timing is as follows. First, both sellers observe their own types. Second, each seller decides whether to purchase certification and, conditional on the outcome, whether to disclose it publicly. Third, after observing rivals' disclosed outcomes or non-disclosure, sellers set prices. Finally, buyers choose which product to purchase based on prices and observed messages.

Let $C\subseteq \mathbb{V}$ denote the set of types who choose to certify their products, and $D\subseteq \mathbb{V}$ denote the set of outcomes that are disclosed to buyers. We will refer to $C$ as the certification strategy and to $D$ as the disclosure strategy.\footnote{We assume disclosure depends only on the realized outcome, not directly on the seller's type. Unless disclosing a given outcome yields zero payoff, all types optimally make the same disclosure decision for that outcome, so this restriction is without loss of generality for our analysis.} If $s\notin D$, the outcome is not disclosed. Write $ND$ for non-disclosure and let $S$ be the set of all on-equilibrium messages. If $D=\mathbb{V}$, then $S=D$; otherwise $S=D\cup{ND}$.

Buyers' willingness to pay after observing message $s_i\in S$ is
$$
p(v|s_i)=E(v|s_i)+b\cdot EL_i,
$$
where $E(v|s_i)$ is the posterior expected quality given $s_i$, and $EL_i$ is the expected loss conditional on $s_i$
\begin{equation}
EL_i=E_v[\min\{v-E(v|s_i),0\}|s_i].
\label{eq:ELi}
\end{equation}
Buyers' WTP depends on the profile $(C,D)$, the precision $\alpha$, and the degree of loss aversion $b$. If $s\in S$, beliefs are updated by Bayes' rule. If $s\in(\mathbb{V}\cup{ND})\setminus S$, then $s$ is an off-equilibrium message: $E(v|s)$ and $p(v|s)$ are determined given buyers' off-equilibrium beliefs.

Using backward induction, let $\pi_i(s_i,s_j)$ denote seller $i$'s equilibrium profit in the pricing subgame that follows messages $(s_i,s_j)\in S\times S$, where $s_i$ is the message of seller $i$ and $s_j$ is the message of its competitor, seller $j$. We first characterize $\pi_i(s_i,s_j)$ in Proposition \ref{pro:WTPs}.
\begin{proposition}
In the pricing subgame that follows $(s_i,s_j)$, if $p(v|s_i)\ge p(v|s_j)$, then $\pi_i(s_i,s_j)=p(v|s_i)-p(v|s_j)$ and $\pi_j(s_i,s_j)=0$.
\label{pro:WTPs}
\end{proposition}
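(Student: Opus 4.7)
My plan is to view the pricing subgame as a standard Bertrand duopoly with vertically differentiated products. After messages $(s_i,s_j)$ are realized, each buyer's willingness-to-pay is $p(v|s_k)$ for seller $k\in\{i,j\}$, which is already pinned down by the preceding analysis and is independent of the posted price $P_k$. With zero marginal cost and unit demand, a buyer's surplus from seller $k$ at price $P_k$ is $p(v|s_k)-P_k$; demand flows to the seller offering strictly larger nonnegative surplus, with a tie-breaking convention needed at equality.

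First, I would show that the lower-WTP seller must charge zero in equilibrium. Assume $p(v|s_i)\ge p(v|s_j)$ and suppose toward contradiction that $(P_i^*,P_j^*)$ is an equilibrium with $P_j^*>0$. Seller $i$ can choose any $P_i$ satisfying $p(v|s_i)-P_i>p(v|s_j)-P_j^*$, which lies strictly above $P_j^*$ and captures the entire market, leaving seller $j$ with zero demand. Seller $j$ would then strictly prefer to undercut to a positive price just below $P_i-(p(v|s_i)-p(v|s_j))$, triggering the classical Bertrand undercutting chain that drives $P_j^*$ down to marginal cost, which is zero.

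Next, given $P_j^*=0$, buyers weakly prefer seller $i$ iff $p(v|s_i)-P_i\ge p(v|s_j)$, i.e., $P_i\le p(v|s_i)-p(v|s_j)$. Seller $i$'s profit is strictly increasing in $P_i$ over this range and zero above it, so the optimum (under the standard tie-breaking) is $P_i^*=p(v|s_i)-p(v|s_j)$. This delivers $\pi_i(s_i,s_j)=p(v|s_i)-p(v|s_j)$ and $\pi_j(s_i,s_j)=0$, matching the claimed formula. The knife-edge case $p(v|s_i)=p(v|s_j)$ reduces to symmetric Bertrand with $P_i^*=P_j^*=0$, for which the formula correctly returns zero profits for both sellers.

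The main obstacle I anticipate is the familiar boundary issue in Bertrand games with continuous prices: if ties were broken against the higher-WTP seller, the supremum $P_i^*=p(v|s_i)-p(v|s_j)$ would not be attained because seller $i$ would always want to shade the price downward, and no pure-strategy equilibrium would exist. I would handle this with the standard convention that indifferent buyers purchase from the higher-WTP seller, which is customary in vertically differentiated Bertrand models; under this convention the profile $(p(v|s_i)-p(v|s_j),\,0)$ is a pure-strategy Nash equilibrium of the pricing subgame and yields exactly the profits asserted in the proposition.
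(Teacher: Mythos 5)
Your proof is correct and follows essentially the same standard Bertrand argument as the paper: the lower-WTP seller is driven to price zero and the higher-WTP seller extracts the WTP gap. The only cosmetic difference is that you resolve the tie at $P_i^*=p(v|s_i)-p(v|s_j)$ by imposing a tie-breaking convention, whereas the paper resolves it endogenously by noting that any buyer randomization at the tie would give seller $i$ a profitable downward deviation, so all buyers purchasing from seller $i$ is the only equilibrium-consistent behavior.
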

\noindent The logic in Proposition \ref{pro:WTPs} is standard for unit-demand Bertrand with heterogeneous reservation values. The seller for whom buyers have the higher WTP sets the price equal to the difference in WTPs, which is $p(v| s_i)-p(v| s_j)$, while the rival sets the price $0$. When WTPs are equal, the equilibrium price and profit are zero.

Let $E\pi_i(s_i)=E_{s_j}[\pi_i(s_i,s_j)]$ be seller $i$'s expected profit after disclosing $s_i\in S$. Disclosure of $s_i$ is optimal if and only if
\begin{equation*}
E\pi_i(s_i)\ge E\pi_i(ND).
\end{equation*}

Consider a seller of type $v$. Let $\Pr(s_k|v)$ denote the probability of receiving outcome $s_k$ conditional on having quality $v$. Given disclosure strategy $D$, the expected profit from purchasing certification is
$$
E\Pi_i(v)=\sum_{s_k\in D}\Pr(s_k|v) E\pi_i(s_k)+\Pr(ND) E\pi_i(ND)-c,
$$
where $\Pr(ND)=1-\sum_{s_k\in D}\Pr(s_k|v)$. Buying certification is optimal if and only if $E\Pi_i(v)\ge E\pi_i(ND)$. We are now ready to define an equilibrium.

\begin{definition}
A pure-strategy symmetric equilibrium consists of $(C,D)$, a pricing strategy, and a buyers' purchasing strategy such that:

i) Certification and disclosure decisions are optimal.

ii) The pricing strategy is a Nash equilibrium in every subgame $(s_i,s_j)$.

iii) For any on-equilibrium $s_i\in S$, buyers' posteriors follow Bayes' rule given $(C,D)$.

iv) Buyers' purchase decision is optimal.\footnote{In our theoretical analysis we assume full market coverage. This assumption is standard in duopoly analyses such as Villas-Boas, 1999 and Shin and Sudhir, 2010. In our experiments, the purchase rate was 85.2\%.}
\label{def:eq1}
\end{definition}

We impose the following restrictions on equilibria. First, for accurate certification, we assume that if buyers observe an off-equilibrium message $s\not\in S$, their beliefs are $\Pr(v=s)=1$. This ensures a unique equilibrium in the case of accurate certification.\footnote{Without this refinement, the equilibrium in Proposition \ref{pro:D2RN} need not be unique; however, as shown in its proof, it remains the most profitable one.} Second, we focus on threshold strategies in which all types above a certification threshold seek certification and all outcomes above a disclosure threshold are disclosed. Although non-threshold equilibria exist,
both our experimental results and prior studies (e.g., Jovanovic, 1982; Levin et al., 2009; and Zhang and Li, 2020) indicate that higher-quality sellers are more likely to certify, and better certification outcomes are more likely to be disclosed. Finally, when indifferent between certifying and not certifying, sellers certify with probability 1.

\subsection{Loss-Averse Buyers: Two Types}
\label{subsec:rankingLA}

We use a two-type benchmark to illustrate the two forces that determine how certification noise affects profits. The \textit{blurring effect} is informational: as accuracy falls, disclosed certification outcomes become less informative and the gap in buyers' WTP shrinks. The \textit{differentiation effect} is strategic: noise randomizes which certification outcomes sellers receive, creating the possibility that the difference in buyers' WTP exceeds the underlying quality difference. By Proposition \ref{pro:WTPs}, blurring pushes profits down, whereas differentiation can push them up. Loss aversion can weaken blurring. With loss-averse buyers, the WTP gap reflects differences in both expected qualities and expected losses. Noise reduces the former but can amplify the latter, thereby increasing WTP differences and, in turn, expected profits.

Consider a two-type setting $\mathbb{V}=\{v_L,v_H\}$ with $0<v_L<v_H$ and $\Pr(v_i)=q_i$. Let $s_H$ and $s_L$ denote the high and low certification outcomes. Let $(C,D)=(\{v_H\},\{s_H\})$, so only high types certify and only a high outcome is disclosed. When $\alpha<1$, false negatives arise because some $v_H$ sellers fail to obtain $s_H$.\footnote{$(C,D)=(\{v_L,v_H\},\{s_H\})$ is an example of an equilibrium with both false negatives and false positives.} If buyers observe $s_H$, they infer $v_H$ with certainty. If they observe $ND$, then, unless $\alpha=1$, the product may be either high or low quality.

As an example, consider platform badges that are binary pass/fail and enforced by procedural thresholds that only imperfectly track quality. On Airbnb, Superhost status depends on rolling-window targets for response, cancellations, and ratings; a single emergency cancellation can push an excellent host below a cutoff, so a provider with quality $v_H$ appears without the badge, corresponding to $ND$ in our model. On eBay, Top-Rated Seller status hinges on on-time shipping and defect rates; temporary supply shocks or courier delays can strip the badge from a high-quality seller. These mechanisms make it possible that two otherwise similar high-quality sellers present different quality signals in the same period, one with a badge and one without. In our model, this shifts the pricing subgame from symmetric $(s_H,s_H)$, where the WTP difference is zero, to asymmetric $(s_H,ND)$, where the WTP difference is strictly positive, capturing the differentiation effect.\footnote{Our model features seller-initiated certification with voluntary disclosure, whereas platform badges are awarded and displayed by the platform. The two settings are nonetheless closely related. The decision to seek certification corresponds to a seller's effort to meet badge thresholds, which entails a cost. Disclosure yields the same information sets for buyers: observing a badge signals high quality, as observing $s_H$ does in our setting; the absence of a badge does not imply low quality, as with $ND$. From the seller's perspective, it is always optimal to disclose $s_H$, so voluntary disclosure of $s_H$ and platform assignment of a high-quality badge are strategically identical in the two-type case.}

By Proposition \ref{pro:WTPs}, the only subgame where seller $i$ earns a positive profit is when $(s_i=s_H, s_j=ND)$, with $\pi_i(s_H,ND)=p(v|s_H)-p(v|ND)$.
Hence seller $i$'s expected profit is
\begin{equation}
\Pi_i=\underbrace{\Pr(s_i=s_H, s_j=ND|\alpha)}_{\text{prob of\ non-Bertrand}}\cdot \underbrace{[p(v|s_H)-p(v|ND)]}_{\text{difference in\ buyer's WTP}}- \underbrace{c q_H}_{\text{certification cost}}.
\label{eq:Pi}
\end{equation}

As the example below shows, lowering $\alpha$ moves the two components in opposite directions. It reduces $p(v|s_H)-p(v|ND)$ because certification outcomes become less informative; this is the blurring effect. It can raise $\Pr(s_i=s_H, s_j=ND|\alpha)$ by making asymmetric messages more likely; this is the differentiation effect. When $b=0$, the blurring effect dominates and noisy certification reduces profits. When $b>0$, loss aversion changes how buyers value non-disclosure signal, $ND$. At $\alpha=1$, $ND$ perfectly reveals $v_L$, so $p(v|ND)=v_L$. At $\alpha<1$, $ND$ pools $v_L$ with some $v_H$, so quality under $ND$ is uncertain. Although $E(v|ND)$ rises, loss aversion penalizes the downside: $p(v|ND)=E(v|ND)+b\cdot EL_{ND}<E(v|ND)$ with $EL_{ND}<0$. Since $p(v|s_H)=v_H$ is unchanged, $p(v|s_H)-p(v|ND)$ is larger with loss aversion than under loss (and risk) neutrality. This attenuates the blurring effect and can make noisy certification more profitable.

\begin{example} \label{ex:same} \rm  Suppose $v_L=1, v_H=3, q_H=2/3$, and $b=1$. Seller $i$ earns a positive profit only when he has type $v_H$ and discloses $s_H$, while seller $j$ does not disclose. This occurs  with probability
\[
\Pr(s_i=s_H, s_j=ND|\alpha)=\underbrace{q_H}_{\text{$Pr(v_i=v_H)$}} \underbrace{(\alpha+(1-\alpha) q_H)}_{\text{$Pr(s_i=s_H|v_i=v_H)$}} \underbrace{[q_L+q_H(1-\alpha)q_L]}_{\text{$Pr(s_j=ND)$}}.
\]
It is straightforward to verify that $\Pr(s_i=s_H, s_j=ND|\alpha)>\tfrac{2}{9}=\Pr(s_i=s_H, s_j=ND|\alpha=1)$ for every $\alpha\in[0,1)$. When $\alpha<1$, even if both sellers are high quality, their certification outcomes can differ, leading to asymmetric disclosures with positive probability; this is the differentiation effect.

Under accurate certification, the WTP difference is $p(v| s_H)-p(v| ND)=2$. Under noisy certification with risk-neutral buyers (i.e., $b=0$), the difference is lower as
$p(v|s_H)-p(v|ND)=\frac{6}{5-2\alpha}<2$, which is due to the blurring effect.
With $b=1$, the difference in WTPs equals $\frac{6}{5-2\alpha}\frac{7-4\alpha}{5-2\alpha}$.
This remains below two but is strictly larger than in the risk-neutral case for any $\alpha<1$.
With $ND$-product being risky and $s_H$-product being certain, loss aversion increases $p(v|s_H)-p(v|ND)$; thereby weakening the blurring effect.
When $c\in[4/9,2/3]$, $(C,D)=(\{v_H\},\{s_H\})$ is an equilibrium for any $\alpha\in[0,1]$. In this range,
\[
\Pi^{\alpha=1}_i=\frac{4}{9}-\frac{2c}{3} \text{< } \Pi^{\alpha}_i=\frac{16\alpha^2+4\alpha-56}{54\alpha-135}-\frac{2c}{3} \mbox{\qquad for all } \alpha\in(0.25, 1).
\]

\begin{figure}[!htb]
\minipage{0.32\textwidth}
  \includegraphics[width=\linewidth]{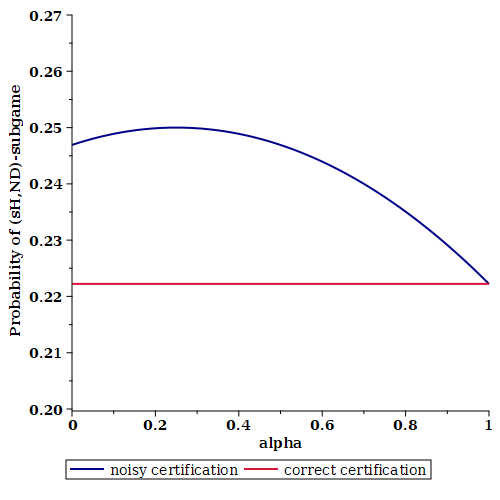}
\endminipage\hfill
\minipage{0.32\textwidth}
 \includegraphics[width=\linewidth]{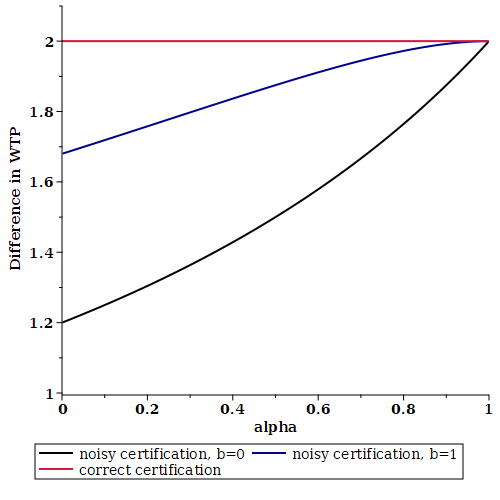}
\endminipage\hfill
\minipage{0.32\textwidth}
  \includegraphics[width=\linewidth]{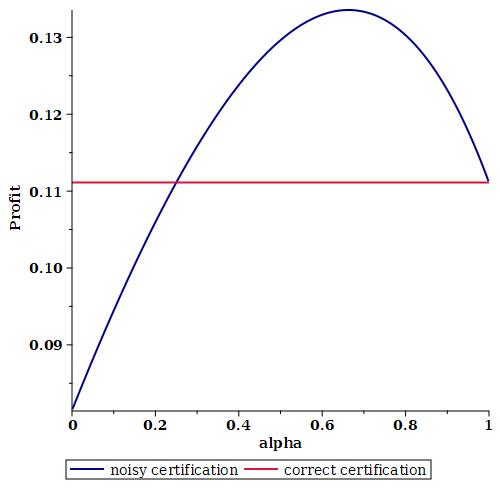}
\endminipage

\medskip
{\footnotesize \rm NOTES: Left panel shows the probability of a positive-profit outcome under accurate and noisy certification. The middle panel shows $p(v\mid s_H)-p(v\mid ND)$ under accurate certification and under noisy certification for $b=0$ and $b=1$. The right panel shows $\Pi_i^\alpha$ and $\Pi_i^1$ when $c=1/2$.}
\end{figure}
\end{example}

Proposition \ref{pro:2Tsame} below formalizes the ideas in Example \ref{ex:same} and provides necessary and sufficient conditions for noisy certification to be more profitable than accurate certification in the two-type setting.

\begin{proposition} \label{pro:2Tsame} Suppose $(C,D)=(\{v_H\},\{s_H\})$ is equilibrium under both accurate and noisy certifications.\footnote{Equilibrium existence conditions are given in Proposition A1 in the Appendix.} The environment with noisy certification is more profitable if and only if:
\begin{equation}
q_H(b+1)> \frac{1}{1-(1-q_H)(1-\alpha)}.
\label{eq:sameEq}
\end{equation}
\end{proposition}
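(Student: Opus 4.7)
The plan is to evaluate the profit formula in equation (\ref{eq:Pi}) at both $\alpha<1$ and $\alpha=1$ and compare the two. By Proposition~\ref{pro:WTPs}, the only pricing subgame that yields strictly positive profit under $(C,D)=(\{v_H\},\{s_H\})$ is $(s_i,s_j)=(s_H,ND)$. Because the certification cost term $cq_H$ is identical in both environments (only $v_H$ types certify in each), it cancels from the comparison, so I only need to track how the asymmetric-message probability $\Pr(s_i=s_H,s_j=ND\mid\alpha)$ and the WTP gap $p(v\mid s_H)-p(v\mid ND)$ depend on $\alpha$.

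For the probability, since only $v_H$ types certify, a seller's message is $s_H$ only if she is $v_H$ and draws outcome $s_H$, which happens with probability $q_H[\alpha+(1-\alpha)q_H]$; otherwise her message is $ND$. Independence across sellers then gives $\Pr(s_i=s_H,s_j=ND\mid\alpha)=q_H[\alpha+(1-\alpha)q_H](1-q_H)[1+(1-\alpha)q_H]$, which reduces to $q_H(1-q_H)$ at $\alpha=1$. For the WTP gap, $p(v\mid s_H)=v_H$ because a disclosed $s_H$ reveals $v_H$ in this equilibrium. For $p(v\mid ND)$, Bayes' rule yields $\Pr(v_L\mid ND)=1/[1+(1-\alpha)q_H]$ and $\Pr(v_H\mid ND)=(1-\alpha)q_H/[1+(1-\alpha)q_H]$, since $v_L$ types always pool into $ND$ while $v_H$ types land in $ND$ with probability $(1-\alpha)(1-q_H)$. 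Computing $E(v\mid ND)$ and plugging into (\ref{eq:ELi})---only the $v_L$-realization contributes to the expected loss---gives
\[
p(v\mid s_H)-p(v\mid ND)=(v_H-v_L)\,\frac{1+(1-\alpha)q_H(1+b)}{[1+(1-\alpha)q_H]^2},
\]
which collapses to $v_H-v_L$ at $\alpha=1$.

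Multiplying the two ingredients, cancelling the cost terms and the common factor $q_H(1-q_H)(v_H-v_L)$, the inequality $\Pi^{\alpha}>\Pi^{\alpha=1}$ reduces to
\[
[\alpha+(1-\alpha)q_H]\cdot\frac{1+(1-\alpha)q_H(1+b)}{1+(1-\alpha)q_H}>1.
\]
Introducing the shorthand $x=(1-\alpha)q_H$ and clearing denominators turns this into $(\alpha+x)[1+x(1+b)]>1+x$, which expands to $x(\alpha+x)(1+b)>1-\alpha$. Dividing both sides by $1-\alpha>0$ and applying the identity $\alpha+(1-\alpha)q_H=1-(1-\alpha)(1-q_H)$ delivers $q_H(1+b)[1-(1-\alpha)(1-q_H)]>1$, which rearranges to (\ref{eq:sameEq}). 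The only mildly delicate part of the argument is the algebraic simplification after multiplying probabilities and WTPs; the substitutions $x=(1-\alpha)q_H$ and $\alpha+(1-\alpha)q_H=1-(1-\alpha)(1-q_H)$ are what keep the computation compact and deliver the stated form in one step.
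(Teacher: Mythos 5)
Your proposal is correct and follows essentially the same route as the paper's proof: compute $\Pr(s_i=s_H,s_j=ND)$ and the loss-averse WTP gap $p(v\mid s_H)-p(v\mid ND)$, multiply, cancel the common cost term and the factor $q_H(1-q_H)\Delta v$, and reduce to (\ref{eq:sameEq}). Your explicit algebra (the substitution $x=(1-\alpha)q_H$ and division by $1-\alpha>0$, valid since noisy means $\alpha<1$) simply fills in the step the paper leaves as "one can show."
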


Several remarks follow. First, the value of $c$ does not enter (\ref{eq:sameEq}) because the paid certification cost is the same in both environments. Second, for (\ref{eq:sameEq}) to hold, $b$ must be strictly greater than 0. Noisy certification cannot be more profitable when buyers are risk-neutral. Third, $q_H$ must be sufficiently high for noisy certification to be more profitable. There are two reasons. High $q_H$ is needed for a positive differentiation effect: when $q_H$ is large, lowering $\alpha$ raises the probability of the only asymmetric subgame $(s_H,ND)$. In addition, a higher $q_H$ increases expected loss under $ND$ which, with loss-averse buyers, reduces $p(v|ND)$ and increases $\pi(s_H,ND)=p(v|s_H)-p(v|ND)$. Finally, restrictions on $\alpha$ are mild. There exist parameters for which (\ref{eq:sameEq}) holds for all $\alpha\in[0,1)$. If $q_H(b+1)>1$, values of $\alpha$ arbitrarily close to 1 can satisfy (\ref{eq:sameEq})\footnote{Proposition \ref{pro:2Tsame} applies to $\alpha<1$. In particular, the limit of (\ref{eq:sameEq}) as $\alpha\rightarrow 1$ cannot be used to compare $\lim_{\alpha\rightarrow 1}\Pi_i^\alpha$ and $\Pi_i^1$, which are equal for any parameter values.}, implying that even small amounts of certification noise can raise profits.

\subsection{General Case}
\label{subsec:generalRNLA}

In what follows, let $v_C$ be the lowest certifying type, so $C=\{v\in\mathbb{V}: v\ge v_C\}$, and let $v_D$ be the lowest disclosed outcome, so $D=\{s\in\mathbb{V}: s\ge v_D\}$. To distinguish environments, we use superscript $1$ for variables and outcomes under accurate certification and superscript $\alpha$ for those under noisy certification. When referring to environments in text, we will say ``\textit{100\% certification}'' versus ``$\alpha$ \textit{certification}.''

\subsubsection{No-Certification Benchmark}
\label{subsec:nocert}

Without certification, buyers' WTP for the two products is identical. By Proposition \ref{pro:WTPs}, Bertrand competition then yields zero profit.

\begin{corollary}
In the absence of certification, sellers' profits are zero.
\label{coro:D1}
\end{corollary}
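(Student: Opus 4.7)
The plan is to derive Corollary 1 directly from Proposition \ref{pro:WTPs} by showing that, without certification, buyers' willingness to pay is the same for both sellers' products and then invoking the Bertrand zero-profit conclusion at equal reservation values.

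First, I would observe that ``no certification'' means $C=\varnothing$, so in equilibrium both sellers send the same uninformative message (call it $ND$), and Bayes' rule gives each buyer a posterior over product quality that coincides with the prior: $\Pr(v=v_k\mid s_i)=\Pr(v=v_k\mid s_j)=q_k$ for every $k$. This step relies only on the fact that messages do not depend on the seller's type, so nothing distinguishes the two products from the buyers' perspective.

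Second, I would plug these coincident posteriors into the WTP formula. Since the posteriors are identical, $E(v\mid s_i)=E(v\mid s_j)=\sum_k q_k v_k$ and, via equation (\ref{eq:ELi}), $EL_i=EL_j$. Therefore
$$p(v\mid s_i)=E(v\mid s_i)+b\cdot EL_i=E(v\mid s_j)+b\cdot EL_j=p(v\mid s_j).$$
Finally, applying Proposition \ref{pro:WTPs} to the pricing subgame with equal WTPs gives equilibrium prices and profits of zero for both sellers, as noted in the discussion following that proposition.

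The argument has no real obstacle; the only subtlety is the off-equilibrium consideration already handled by Proposition \ref{pro:WTPs}'s tie-breaking remark (equal WTPs imply standard Bertrand undercutting drives price to marginal cost, which is zero here). Nothing in the argument depends on $b$, so the result covers both the loss-neutral and loss-averse cases uniformly.
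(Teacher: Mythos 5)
Your proof is correct and follows exactly the paper's own (brief) argument: without certification both products induce the same posterior, hence identical WTPs including the expected-loss term, and Proposition \ref{pro:WTPs} then delivers the Bertrand zero-profit outcome. Nothing to add.
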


Corollary \ref{coro:D1} highlights a key distinction between the monopoly setting in Huh et al. (2023) and our duopoly framework. In a monopoly, expected profit equals the buyers' ex ante WTP minus certification costs. With risk-neutral buyers, ex ante WTP does not depend on certification availability or precision because $E[E[v|s]]=E[v]$ by the law of iterated expectations; hence, costly certification is weakly dominated by no certification (and remains unattractive under low loss aversion).

In a duopoly, profitability depends on how much a seller's WTP exceeds the rival's, net of certification costs. Without certification, products are perceived as identical and profits are zero. With certification, disclosed outcomes differentiate products in WTP space, allowing positive expected profits. Thus, regardless of buyers' loss aversion, equilibria with certification are at least as profitable as the no-certification benchmark.

\subsubsection{Risk-Neutral Buyers}
\label{subsec:rankingRN}

We now analyze the risk-neutral setting. Under accurate certification there is a unique threshold equilibrium. With noisy certification multiple equilibria may exist, but every such equilibrium yields lower profits than the accurate-certification benchmark. Propositions \ref{pro:D2RN} and \ref{pro:RankingRN} formalize these statements.

\begin{proposition}
Suppose there exists a lowest type $v_m$ such that $\sum_{j=1}^{m} q_j(v_m-v_j)-c\ge 0$. Let $(\Crn,\Drn)$ be such that $v_C=v_D=v_m$. Then $(\Crn,\Drn)$ is the unique equilibrium under accurate certification.
If no such $v_m$ exists, there is no equilibrium under accurate certification in which sellers certify.
\label{pro:D2RN}
\end{proposition}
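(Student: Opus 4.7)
The plan is to reduce the equilibrium condition to a single scalar condition on a monotone function and then exploit that monotonicity for both existence and uniqueness within the class of threshold strategies. In the candidate equilibrium $(\Crn,\Drn)=(\{v\geq v_m\},\{v\geq v_m\})$, accurate certification makes disclosure perfectly revealing: a disclosed outcome $s=v_k$ pins down buyers' WTP at $v_k$, while non-disclosure pools all types below $v_m$ and yields WTP $E(v\mid v<v_m)$ (loss aversion is inactive since $b=0$). By Proposition \ref{pro:WTPs}, type $v_m$ earns positive profit only against a non-disclosing rival, so its expected gross profit is
\[
f(m) \;:=\; \Pr(v<v_m)\bigl(v_m-E(v\mid v<v_m)\bigr) \;=\; \sum_{l=1}^m q_l(v_m-v_l),
\]
and the hypothesis becomes $f(m)\geq c$. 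A short algebraic check yields $f(k{+}1)-f(k)=(v_{k+1}-v_k)\sum_{l\leq k}q_l\geq 0$, so $f$ is non-decreasing on the type support.

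For existence I would verify that every type $v_k\geq v_m$ has $f(k)\geq c$ by monotonicity and thus optimally certifies and discloses (the implied WTP $v_k$ exceeds $E(v\mid v<v_m)$), while every type $v_k<v_m$ has no profitable deviation: certify-and-withhold pays $-c$ against a sure zero Bertrand payoff, and certify-and-disclose triggers the off-equilibrium refinement $\Pr(v=s)=1$, so the deviation gain is bounded above by $f(k)-c<0$ via the minimality of $v_m$.

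For uniqueness within threshold strategies I would rule out every alternative threshold $v_{m'}\ne v_m$. If $m'>m$, the non-certifying type $v_{m'-1}$ can deviate to certify and disclose with gain $f(m'-1)-c\geq f(m)-c\geq 0$, which the indifference-breaking rule turns into a strict destabilizing deviation. If $m'<m$, the threshold type $v_{m'}$ itself earns $f(m')-c<0$ and strictly prefers not to certify. The case $v_D>v_C$ is also impossible: in such a putative equilibrium, any certifying type whose outcome lies in $C\setminus D$ would strictly gain by disclosing, since the fee is sunk and the induced WTP $v$ strictly exceeds the pooled $E(v\mid ND)$. If no $v_m$ satisfies the hypothesis, then $f(k)<c$ for every $k$, so the threshold type in any candidate certifying threshold equilibrium receives negative net payoff, ruling out any such equilibrium.

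I expect the hardest part to be the clean treatment of off-equilibrium certify-and-disclose deviations: without the refinement $\Pr(v=s)=1$ for $s\notin S$, higher-threshold equilibria could be propped up by sufficiently pessimistic off-path beliefs, and it is the interaction between the refinement and the monotonicity of $f$ that forces the equilibrium set to collapse to the minimal threshold $v_m$.
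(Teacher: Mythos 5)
Your proposal follows essentially the same route as the paper's proof: the same function $f(k)=\sum_{j=1}^{k}q_j(v_k-v_j)$, the same off-equilibrium belief refinement to generate the destabilizing certify-and-disclose deviation for $v_{m'-1}$ when $m'>m$, and the same negative-payoff argument for $m'<m$ and for nonexistence. Your explicit monotonicity check $f(k+1)-f(k)=(v_{k+1}-v_k)\sum_{l\le k}q_l\ge 0$ is a nice addition that the paper uses only implicitly.

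One step deserves to be made explicit. You define $f(m)$ as the gross profit of the \emph{threshold} type, which earns a positive margin only against a non-disclosing rival, but you then invoke ``$f(k)\ge c$'' to conclude that every type $v_k>v_m$ optimally certifies. For such a type the expected gross profit in the candidate equilibrium is $\sum_{j=m}^{k}q_j(v_k-v_j)+\Pr(ND)\bigl(v_k-E(v\mid v<v_m)\bigr)$, which includes margins against \emph{disclosing} lower-quality rivals; that this collapses to $f(k)=\sum_{j=1}^{k}q_j(v_k-v_j)$, independently of the threshold $m$, is the identity the paper's proof is built on (it also drives the uniqueness argument, since the deviation payoff of $v_{m'-1}$ is again $f(m'-1)-c$ regardless of the putative threshold). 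The computation is a one-line application of $\Pr(ND)\,E(v\mid ND)=\sum_{j=1}^{m-1}q_jv_j$, but without stating it your appeal to monotonicity of $f$ does not by itself establish optimality of certification for non-threshold types. With that identity inserted, the argument is complete and matches the paper's.
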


Proposition \ref{pro:D2RN} characterizes the unique equilibrium under accurate certification. For intuition, consider a certification strategy (equilibrium or not) where all types certify, $v_C=v_1$, so buyers face no uncertainty. By Proposition \ref{pro:WTPs}, a type $v_k$ earns expected profit $\sum_{j=1}^{k} q_j(v_k-v_j)-c$. This is because the rival is type $v_j$ with probability $q_j$, and profit is $v_k-v_j$ if $j\le k$ and $0$ otherwise. One can show that a certifying type's expected profit does not depend on the certification threshold $v_C$. Hence, in equilibrium only types $v\ge v_m$ choose to certify.

\begin{proposition}
Consider an $\alpha$-equilibrium $(\Ca,\Da)$ with $\vCa=\vDa$. If type $v_m$ in Proposition \ref{pro:D2RN} exists, then $(\Crn,\Drn)$ is more profitable than $(\Ca,\Da)$.\footnote{The result also holds when $\vCa\ne\vDa$. Earlier drafts contained a proof; for brevity it is omitted here and provided in a separate (non-publication) appendix. Intuitively, equilibrium conditions rule out $\vCa>\vDa$ because observing $s=\vDa$ signals a certification error, inducing beliefs $E[v| v\in C]$ that are higher than $E[v|s=\vCa]$. While equilibria with $\vCa > \vDa$ are theoretically possible under certain off-equilibrium beliefs, such strategies yield weakly lower aggregate profits than disclosure equilibria, thus reinforcing Proposition \ref{pro:RankingRN}'s profitability ranking.}
\label{pro:RankingRN}
\end{proposition}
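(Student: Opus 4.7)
The plan is to reduce the profitability ranking to a signal-informativeness comparison. By Proposition \ref{pro:WTPs}, the combined Bertrand profit in any pricing subgame equals the absolute gap in buyers' WTPs. Under risk neutrality $p(v|s)=E[v|s]$, and because sellers' types and noise draws are independent across sellers, the total industry profit in any symmetric threshold equilibrium with $v_C=v_D=v_t$ admits the representation
\[
\Pi(v_t) \;=\; E\bigl|E[v|\sigma_1]-E[v|\sigma_2]\bigr| \;-\; 2c\,\Pr(v\ge v_t),
\]
where $\sigma_l$ is seller $l$'s on-path message. This applies verbatim in both the accurate ($\sigma^1$) and the noisy ($\sigma^\alpha$) regimes.

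Next, I would fix $v_t$ and compare the two regimes at this common threshold. Under accurate certification $\sigma^1=v$ if $v\ge v_t$ and $\sigma^1=ND$ otherwise. I would couple the two signals on a single probability space by exhibiting $\sigma^\alpha$ as a garbling of $\sigma^1$: conditional on $\sigma^1=v\ge v_t$, draw the noisy outcome $s$ from the technology given $v$ and set $\sigma^\alpha=s$ if $s\ge v_t$ and $\sigma^\alpha=ND$ otherwise; conditional on $\sigma^1=ND$, set $\sigma^\alpha=ND$. This coupling reproduces the marginal of $\sigma^\alpha$ and makes $\sigma^\alpha$ conditionally independent of $v$ given $\sigma^1$, so by the tower property $E[v|\sigma^\alpha]=E\bigl[E[v|\sigma^1]\bigm|\sigma^\alpha\bigr]$. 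Using independence across sellers and Jensen's inequality for the convex map $(x,y)\mapsto|x-y|$,
\[
E\bigl|E[v|\sigma_1^\alpha]-E[v|\sigma_2^\alpha]\bigr|
\;=\; E\Bigl|E\bigl[E[v|\sigma_1^1]-E[v|\sigma_2^1]\bigm|\sigma_1^\alpha,\sigma_2^\alpha\bigr]\Bigr|
\;\le\; E\bigl|E[v|\sigma_1^1]-E[v|\sigma_2^1]\bigr|,
\]
strictly whenever $\alpha<1$. Subtracting the identical cost term $2c\,\Pr(v\ge v_t)$ yields $\Pi^\alpha(v_t)<\Pi^1(v_t)$.

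Finally, I would invoke Proposition \ref{pro:D2RN}: under accurate certification a certifying type $v_k$ earns $f(k)=\sum_{j\le k}q_j(v_k-v_j)-c$, independent of the threshold. The increments $f(k+1)-f(k)=\sum_{j\le k}q_j(v_{k+1}-v_k)\ge 0$ show $f$ is nondecreasing, so $\{k:f(k)\ge 0\}=\{k\ge m\}$ and $\Pi^1(v_t)=2\sum_{k\ge v_t}q_k f(k)$ is maximised at $v_t=v_m$. Chaining the two inequalities gives $\Pi^\alpha=\Pi^\alpha(\vCa)\le \Pi^1(\vCa)\le \Pi^1(v_m)=\Pi(\Crn,\Drn)$, with strict inequality once $\alpha<1$.

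The hard part is the garbling coupling and pinning down strictness. The coupling must enforce $\sigma^\alpha\perp v\mid\sigma^1$, which is what collapses $E[v|\sigma^\alpha]$ to $E[E[v|\sigma^1]|\sigma^\alpha]$ and drives the Jensen bound. Boundary cases also require care: when $v_t=v_n$ the disclosed signal is essentially binary and strictness must be traced to noise shifting both the disclosure probability and the posterior at $ND$; when $\vCa\ne v_m$ the second inequality in the chain becomes strict directly from the monotonicity of $f$ around the threshold.
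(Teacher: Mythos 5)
Your proposal is correct, and the middle step takes a genuinely different route from the paper. The paper proves the same two links in the same order — first that, holding the threshold $v_l=\vCa=\vDa$ fixed, the aggregate gross profit is increasing in $\alpha$ (its Lemma \ref{le:RNsame}), and then that among accurate-certification thresholds $v_m$ is optimal (its Lemma \ref{le:higherprofit}) — but it establishes the first link by direct computation: it splits joint profit into the ``both disclose'' and ``one discloses'' terms, uses the proportionality $\rho_i/q_i=\pD/\qD$ to get $|p(v|s_i)-p(v|s_j)|=\alpha(\qD/\pD)|v_i-v_j|$, and shows the one-discloser term equals $\pND(Ev-p(v|ND))$, each piece monotone in $\alpha$. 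Your Blackwell-garbling coupling plus conditional Jensen applied to $E\bigl|E[v|\sigma_1]-E[v|\sigma_2]\bigr|$ replaces that computation; it is valid (the Markov chain $v\to\sigma^1\to\sigma^\alpha$ holds exactly because $\sigma^1\ne ND$ pins down $v$ and $\sigma^1=ND$ forces $\sigma^\alpha=ND$), and it is arguably more robust since it does not use the ``failure draws from the prior'' structure, only that the noisy on-path message is a garbling of the accurate one; it also makes transparent why risk neutrality is the operative assumption. What it buys less of: the paper's computation yields monotonicity in $\alpha$ over the whole range, whereas your argument only compares each $\alpha<1$ to $\alpha=1$ (which is all the proposition needs). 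Your threshold-optimization step is the paper's Lemma \ref{le:higherprofit} in summed rather than type-by-type form. The one place you should be careful is the strictness claim: Jensen is strict only if, on a positive-probability realization of $(\sigma_1^\alpha,\sigma_2^\alpha)$, the conditional law of $E[v|\sigma_1^1]-E[v|\sigma_2^1]$ charges both signs; this holds here (e.g., conditional on equal disclosed outcomes, or on $(ND,ND)$ when $v_t>v_1$), but you flag rather than verify it, and the paper's explicit computation delivers strictness (up to the degenerate $l=n$ and $l=1$ edge cases) without extra work.
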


To understand how certification noise changes sellers' aggregate profits, consider an equilibrium with $v_{\Ca}=v_{\Da}=v_l$. Let $\pND$ be the unconditional probability of non-disclosure, and $\rho_i$ (for $i\ge l$) denote the unconditional probability of a seller disclosing $s=v_i$. The two sellers' joint expected profit (excluding certification costs) is
\begin{equation*}
\underbrace{\sum_{i,j=l}^n \rho_i \rho_j |p(v|s_i)-p(v|s_j)|}_{\mbox{\footnotesize both sellers disclose}} +\underbrace{2\sum_{j=l}^n \rho_j \pND(p(v|s_j)-p(v|ND))}_{\mbox{\footnotesize one seller discloses}},
\end{equation*}
where Proposition \ref{pro:WTPs} implies $\pi_i(s_i,s_j)+\pi_j(s_i,s_j)=|p(v|s_i)-p(v|s_j)|$, and
equilibrium requires $p(v|s_i)\ge p(v|ND)$ for all $s_i\in D$.

Lowering $\alpha$ changes joint profits through the same two forces identified earlier: blurring and differentiation. First, as $\alpha$ falls, certification becomes less informative and the WTP difference narrows as
\[
\lvert p(v | s_i) - p(v | s_j) \rvert
= \varphi(\alpha) \lvert v_i - v_j \rvert,
\qquad \text{ with } \qquad \varphi(0) = 0, \ \varphi(1) = 1, \ \varphi^\prime(\alpha) > 0,
\]
for any $s_i, s_j\in D$. This is the \textit{blurring effect}, which lowers the expected joint profit: by Proposition \ref{pro:WTPs}, smaller WTP differences lead to lower profits.

The second force can raise profits. Certification noise shifts the distribution of disclosed outcomes from $(\qND,q_l,\dots,q_n)$ to $(\rho_{ND},\rho_l,\dots,\rho_n)$, potentially increasing product differentiation.
For example, when certifying sellers have identical qualities ($v_i=v_j=v$), the only pricing subgame under perfect certification is $(s_i,s_j)=(v,v)$, where each seller earns zero profit. If $\alpha<1$, subgames with $s_i\ne s_j$ occur with positive probability, yielding positive expected profit. This positive \textit{differentiation effect} becomes stronger when the probability of identical quality types is high, especially when this probability is high for the highest-quality sellers. Under perfect certification, these sellers tie with each other and earn zero profit. Certification noise not only breaks such ties, allowing them to be perceived differently, but also preserves their advantage, as their distribution of observed outcomes still first-order stochastically dominates that of all lower-quality sellers.


Taken together, these forces move profits in opposite directions. With risk-neutral buyers, the blurring effect dominates: for any $\alpha<1$, expected profits are lower than under accurate certification (Proposition \ref{pro:RankingRN}).

\subsubsection{Loss-Averse Buyers: General Case}
\label{subsec:LAgeneral}

Assume buyers are loss averse with coefficient $b$.\footnote{Loss aversion is not strictly necessary for our results. In the supplementary appendix, which is not intended for publication, we consider the two-type setting from Section \ref{subsec:rankingLA} with risk-averse buyers (CARA utility) and show that noisy certification can also dominate accurate certification, but only under implausibly high risk aversion. However, it may require unrealistic values
of risk aversion. For the parameters in Example \ref{ex:same}, the absolute risk-aversion coefficient would need to exceed $0.797$; at that level, the certainty equivalent of a 50\% chance of winning \$5{,}000 or \$10{,}000 is \$5{,}000.87.} For any message $s_i$, buyers' WTP is
$p(v|s_i)=E_i(\alpha)+b\cdot EL_i(\alpha)$, where $E_i(\alpha)$ is expected quality conditional on $s_i$ and $EL_i(\alpha)$ is the expected-loss term defined in (\ref{eq:ELi}). We make the dependence on $\alpha$ explicit in the notations, since the focus of this subsection is on certification precision.

Let $(C,D)$ be such that $v_C=v_D=v_l$, and with a slight abuse of notation, let $ND<l<\dots<n$. As before, let $\rho_i$ and $\rho_{ND}$ denote the unconditional probabilities of a seller disclosing certification outcome $s=v_i$ and non-disclosing, respectively. The seller's expected profit is:
\begin{eqnarray*}
\Pi^\alpha&=&\sum_{s_i,s_j\in S: i<j} \rho_i\rho_j (p(v|s_j)-p(v|s_i))\\
&=&\underbrace{\sum_{s_i,s_j\in S: i<j}\rho_i\rho_j (E_j(\alpha)-E_i(\alpha)))}_{K_0(\alpha)}
+b \underbrace{\sum_{s_i,s_j\in S: i<j}\rho_i\rho_j (EL_j(\alpha)-EL_i(\alpha))}_{K_1(\alpha)}=K_0(\alpha)+bK_1(\alpha),
\end{eqnarray*}
which is linear in $b$. From Proposition \ref{pro:RankingRN},
we know that $K_0(\alpha)<K_0(1)$ when $0\le\alpha<1$. Thus, for inaccurate certification to be
more profitable, there must exist $\alpha$ such that $K_1(\alpha)>K_1(1)$ and $b$ is sufficiently large. A sufficient condition for the former is $\frac{\partial K_1(\alpha)}{\partial\alpha}\big|_{\alpha=1}<0$.

\begin{proposition}
Assume that the product qualities $\{v_i\}$ and their associated probabilities $\{q_i\}$ satisfy: (1) $v_{i+1}-v_i\ge v_i-v_{i-1}$ for $i\in\{2,\dots,n-1\}$;
and (2) $q_k>2(q_1+\dots+q_{k-1})$ for every $k\ge 2$. Let $(C,D)$ be such that
$v_C=v_D=v_l$, and assume that $(C,D)$ is an equilibrium in both the accurate and inaccurate certification settings. Then, for $\alpha$ close to $1$ and sufficiently large $b$, noisy certification is more profitable than accurate certification.
\label{pro:LAgeneral}
\end{proposition}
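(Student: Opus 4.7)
The plan exploits linearity of $\Pi^\alpha = K_0(\alpha) + b K_1(\alpha)$ in $b$. Proposition \ref{pro:RankingRN} gives $K_0(\alpha) < K_0(1)$ for every $\alpha \in [0,1)$, so $\Pi^\alpha > \Pi^1$ rearranges to $b[K_1(\alpha) - K_1(1)] > K_0(1) - K_0(\alpha) > 0$, which holds for $b$ large enough whenever $K_1(\alpha) > K_1(1)$. Because $K_1$ is smooth in $\alpha$, the proposition reduces to the one-sided inequality $K_1'(1^-) < 0$; by continuity this produces a left neighborhood of $1$ where $K_1(\alpha) > K_1(1)$, and the choice of $b$ then closes the argument.

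The first computation is to derive closed forms for the message marginals $\rho_j(\alpha) = q_j[\alpha + (1-\alpha)Q_l]$ and $\rho_{ND}(\alpha) = (1-Q_l)[1 + Q_l(1-\alpha)]$, the posterior means $E(v \mid s_j, \alpha)$ and $E(v \mid ND, \alpha)$, and hence the expected losses $EL_j(\alpha)$ and $EL_{ND}(\alpha)$. At $\alpha = 1$ each disclosed signal reveals the type, so $EL_j(1) = 0$ for $j \ge l$, whereas $EL_{ND}(1) < 0$ records the shortfall from pooling types below $v_l$. A direct calculation of $f_j := EL_j'(1^-)$ splits according to whether $v_j$ lies above $\bar v_l := E[v \mid v \ge v_l]$: $f_j = \sum_{l \le i < j} q_i(v_j - v_i)$ in the high case and $f_j = \sum_{i > j} q_i(v_i - v_j)$ in the low case. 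Substituting and rearranging yields, in the no-ND case $Q_l = 1$, the clean decomposition $K_1'(1^-) = \sum_{j \ge l} q_j f_j (m_{j-1} + m_j - 1)$, where $m_j := \sum_{i \le j} q_i$. The general case with $Q_l < 1$ adds terms from pairs $(ND, s_j)$ that involve $\rho_{ND}'(1)$, $\rho_j'(1)$, $-EL_{ND}(1)$, and $EL_{ND}'(1^-)$.

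The key observation is that Hypothesis (2) forces $q_n > 2/3$ and $m_{n-1} < q_n/2$, so $m_{j-1} + m_j - 1 < 0$ for every $j < n$ while $m_{n-1} + m_n - 1 = m_{n-1} > 0$. Hence only $j = n$ contributes positively, and $K_1'(1^-) < 0$ reduces to showing that the aggregate of negative $j < n$ terms dominates the single positive term $q_n f_n m_{n-1}$. For $j < n$ in the low case, $f_j \ge q_n(v_n - v_j)$ via the $i = n$ summand, pairing naturally with the corresponding $i = n$ summand inside $f_n$; the inequality $1 - m_{j-1} - m_j > m_{n-1}$, which follows from $m_{j-1} + m_j \le 2 m_{n-1} < q_n$, then makes the negative side strictly larger term-by-term. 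Hypothesis (1) enters when handling $j < n$ in the high case and when bounding the gaps inside $f_n$: convexity of $\{v_i\}$ prevents $v_n - v_j$ from growing faster than the anchoring gap $v_n - v_{n-1}$, so the positive term cannot inflate. The ND contribution, when $Q_l < 1$, is controlled similarly; Hypothesis (2) implies $Q_l > 1/2$, which makes the dominant probability-derivative piece $q_j(1 - Q_l)(1 - 2 Q_l)(-EL_{ND}(1))$ non-positive, and a direct computation of $EL_{ND}'(1^-)$ handles the residual piece. The main obstacle is the joint control of the high-case and ND terms, because the high-case lower bound on $f_j$ lacks the $i = n$ summand; both hypotheses are essential, since without (2) the positive $j = n$ term can dominate, and without (1) the gaps inside the Case~A sums can become arbitrarily large.
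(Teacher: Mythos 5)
Your overall reduction (linearity in $b$, Proposition \ref{pro:RankingRN} for $K_0$, and the target inequality $\frac{\partial K_1}{\partial\alpha}\big|_{\alpha=1}<0$) and your one-sided derivatives $f_j$ coincide exactly with the paper's $D_j$, and your direct domination argument for the disclosed-pairs part is a legitimately different route from the paper, which instead proves the degenerate case $l=n$ by hand and then reduces the general case to it via a mean-preserving perturbation $(v_l,v_n)\mapsto(v_l+\eps,v_n-\tau)$ that monotonically increases $S_1+S_2$. However, your sketch has two genuine gaps. First, you never establish that $E_C=E[v\mid v\ge v_l]>v_{n-1}$ (the paper's Lemma \ref{le:Evk}, which uses \emph{both} hypotheses). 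That fact is what makes every $j<n$ fall into your ``low case,'' so the term-by-term pairing $f_j\ge q_n(v_n-v_j)$ against the single positive summand $q_nf_nm_{n-1}$ actually closes; without it you are left with ``high-case $j<n$'' terms, and your proposed treatment of them (``convexity prevents the positive term from inflating'') is an assertion, not an argument. Ironically, once Lemma \ref{le:Evk} is in hand, those problematic terms vanish and your disclosed-pairs bound goes through cleanly.

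Second, and more seriously, the non-disclosure part is only gestured at. You correctly sign the piece $q_k q_{ND}(1-2q_D)(-EL_{ND}(1))\le 0$, but the remaining piece $q_{ND}\sum_{k\ge l}q_k\,(D_k-D_{ND})$ contains the strictly positive contribution $q_{ND}\sum_k q_kD_k$ and a term $-q_{ND}q_DD_{ND}$ whose sign depends on whether $E_C+v_k-2E_{ND}$ is positive; showing that the negative pieces dominate is precisely where the paper has to work hardest. Its Lemma \ref{le:l=n} reduces this to $Q_K(q_{ND}E_K-E_{ND}+v_nq_n)>0$, which in turn needs $E_K>v_{n-3}$, $E_{ND}<v_{n-1}$, and the bound $q_n>\frac{v_{n-1}-v_{n-3}}{v_n-v_{n-3}}$ obtained from the increasing-differences hypothesis; and even then the lemma only covers $l=n$, with the perturbation Lemma \ref{le:DeltaS} doing the rest. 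Your claim that ``a direct computation of $EL_{ND}'(1^-)$ handles the residual piece'' skips all of this, so as written the proof is incomplete for any equilibrium with $v_l>v_1$, i.e., whenever non-disclosure occurs on path.
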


The intuition is as follows. Consider two families of subgames: $S_1$, in which both sellers disclose, and $S_2$, in which exactly one seller discloses. At $\alpha=1$, any disclosed signal pins down quality, so disclosed options carry no loss penalty and only $ND$ is uncertain: $EL_i(1)=0$ for any $i\ne ND$. As $\alpha$ falls slightly, $ND$ begins to pool some high types with low types. The reference point for $ND$ rises while the downside remains, so loss-averse buyers penalize $ND$ more heavily. Disclosed outcomes remain comparatively safe. The WTP gap between a disclosing and a non-disclosing seller therefore widens, which raises profits in $S_2$. This mirrors the intuition from Example \ref{ex:same}, now in the general case.

For $S_1$ subgames, as $\alpha$ falls slightly below $1$, noise harms lower signals more than the top signal. The highest disclosed message, $s_n$, is associated with a product that is almost certainly high quality, so its downside and loss penalty are relatively small. Lower disclosed messages face more meaningful downside relative to their own references, so loss-averse buyers discount them more. The WTP gap between $s_n$ and any lower disclosed message therefore widens compared to $\alpha=1$, which increases profits in $S_1$.

The condition in Proposition \ref{pro:LAgeneral} is sufficient but not necessary. It is deliberately strong so that it applies for any values of $l$ and $n$. In more specific settings, as in Proposition \ref{pro:2Tsame}, much weaker parameter restrictions can ensure that noisy certification is more profitable. Likewise, Example \ref{ex:same} shows that ``sufficiently high $b$'' need not be implausibly large and $\alpha$ need not be sufficiently close to $1$ for noisy certification to yield higher profits.

\section{Experimental Design and Hypotheses}
\label{sec:design}

\subsection{Experimental design}
\label{subsec:design}

The experimental environment mirrors the duopoly model in Section~\ref{sec:theory}. In each round, we used stranger matching to form four-person groups. Roles were randomly assigned so that two participants became sellers and two became buyers; groups and roles were redrawn every round, independently of past assignments. Each buyer had unit demand. Each seller offered two units of a product with quality unknown to buyers. A seller's product quality was independently drawn each round from a discrete uniform distribution with support $\{l,l+1,\dots,100\}$, where $l$ varied by treatment and was common knowledge.

Each round proceeded as follows. After the quality draw, if certification was available, each seller decided whether to seek certification. Sellers who certified paid cost $c$ and then observed the certification outcome. Next, they chose whether to disclose the certified value or, instead, to send a public cheap-talk message $m\in\{l,l+1,\dots,100\}$. Sellers who did not certify (because certification was unavailable or they opted out) sent a public cheap-talk message $m\in\{l,l+1,\dots,100\}$.\footnote{Although the theoretical framework does not model cheap talk, sellers often rely on soft claims alongside or in place of hard certification results. Moreover, experimental papers have demonstrated that cheap talk can be more informative than what the theoretical framework would predict. In the supplementary materials (not intended for publication) we show that, if cheap talk is added to Section~\ref{sec:theory}, then for any on-equilibrium cheap-talk messages $m$ and $m^\prime$ one obtains $p^{ct}(v|m)=p^{ct}(v|m^\prime)$, so the theoretical informativeness of cheap talk is limited. Empirically, cheap talk had no significant effect on profits in D1 and D2, but had a positive, 5\% significant effect in D3$\f$ and a positive, 10\% significant effect in D3$\e$.} Buyers could distinguish certification outcomes from cheap talk. After messages were observed, both sellers simultaneously set prices. Buyers then chose seller 1, seller 2, or no purchase. Finally, payoffs were displayed and true qualities revealed.

Let $c$ be the certification cost, $Price$ the transaction price, $v$ product quality, and $Sold$ the number of units sold by a seller. A seller's per-round profit was $Sold\cdot Price - \mathds{1}_{certified}\cdot c$, while the buyer's profit per round was $v - Price$ if he purchased the product and $0$ otherwise.

We implemented both between- and within-subject variation. Between subjects, we varied (i) the lower bound $l$ of the quality support and (ii) the certification environment. We used $l\in\{50,80\}$. When $l=80$, sellers' qualities are more likely to be close, which, as argued in Section~\ref{sec:theory}, tends to magnify the potential differentiation effect of message dispersion. Certification environments were D1 (no certification), D2 (accurate certification), and D3 (noisy certification), yielding six treatments: D1$\f$, D1$\e$, D2$\f$, D2$\e$, D3$\f$, and D3$\e$. Each subject participated in exactly one treatment. We use D$_l$ (where $l\in\{50,80\}$) to refer to all treatments with support $\{l,\dots,100\}$, and D$k$ (where $k\in\{1,2,3\}$), without a subscript, to refer to both Dk$\f$ and Dk$\e$.

Within certification treatments (D2 and D3), $c$ was drawn each round from $\{10,15,20,25\}$. In D3, precision $\alpha$ was drawn each round from $\{0.5,0.6,0.7,0.8,0.9\}$. The values of $c$ and $\alpha$ were common knowledge.\footnote{While $c$ and $\alpha$ played a limited role in our theoretical analysis, they directly influence certification and disclosure decisions, which
affects the blurring and differentiation effects discussed in Section~\ref{sec:theory}.
Consequently, we chose to investigate their impact in the experimental setting.} Draws were balanced so that, in D2, each $c$ value occurred equally often, and in D3 each $(c,\alpha)$ pair occurred equally often within a treatment. For example, both D3 treatments had 20 rounds, so each $(c,\alpha)$ pair appeared exactly once.

\subsection{Procedure}
\label{subsec:procedure}

The experiments were programmed and conducted in z-Tree (Fischbacher, 2007). A total of 152 students from Phenikaa University (Vietnam) participated in eight sessions conducted between 2021 and 2023. Each session included 12, 20, or 24 participants and lasted for 20 or 24 rounds, yielding 3{,}064 observations.\footnote{Treatment D$\f$ had only 17 rounds due to a computer crash in round 18.} Table \ref{tab:design2} reports the exact number of participants per treatment. We recruited more participants for the D3 treatments due to their greater complexity.

\begin{center}
[Place Table \ref{tab:design2} Here]
\end{center}

Recruitment was conducted via student associations and course-group announcements. All sessions took place in three computer labs at Phenikaa University. At the start of each session, participants received step-by-step instructions in Vietnamese (English translations are provided in Appendix \ref{appendix:instructions}). The experiment began with two non-incentivized practice rounds, followed by 20 or 24 incentivized rounds. Approximately half of the participants also completed an incentivized loss-aversion task at the end of the session.\footnote{Because individual loss aversion was not common knowledge in the market game, the elicitation was not intended to test equilibrium play. Rather, it provided a check that participants' loss-aversion levels were consistent with values reported in the literature.}

Participants' final earnings equaled the sum of their profits from all rounds, a 20{,}000 VND show-up fee (approximately \$0.86), and, when applicable, the payoff from one of the two loss-aversion questions (see Appendix \ref{appendix:LA}).\footnote{We paid based on the sum across all rounds because earnings can vary substantially by role and realized quality. This approach rewards effort rather than luck and follows prior studies (Cai and Wang, 2006; Jin et al., 2022). To limit potential endowment effects, cumulative earnings were not shown during the session.}

On average, participants earned 53{,}280 VND (approximately \$2.25). Sessions lasted about 75 minutes, so this compensation exceeded the typical hourly wage for student jobs in Vietnam (about 20{,}000 VND, approximately \$0.86).

\subsection{Hypotheses}
\label{subsec:hypotheses}

The theory in Section~\ref{sec:theory} delivers testable predictions for profits and competitive intensity through the interaction of blurring and differentiation effects. Without certification, buyers perceive offers as identical and Bertrand competition drives profits to zero (Corollary \ref{coro:D1}). With certification, accurate signals maximize informativeness and, under risk neutrality, yield higher profits than noisy signals (Proposition \ref{pro:RankingRN}). With loss aversion, however, a modest amount of certification noise can raise sellers' profits (Proposition \ref{pro:2Tsame}, Proposition \ref{pro:LAgeneral}, and Example \ref{ex:same}.)

\begin{assumptionp}{(D1$_\pi$)}
Sellers earn zero profit in the no-certification environment.
\label{hyp:D1}
\end{assumptionp}

\begin{assumptionp}{(D2$_\pi$-RN)}
With risk-neutral buyers, sellers' profits are lower in noisy certification (D3) than in accurate certification (D2).
\label{hyp:profit-RN}
\end{assumptionp}

\begin{assumptionp}{(D2$_\pi$-LA)}
With loss-averse buyers, sellers can earn higher profits in noisy certification (D3) than in accurate certification (D2).
\label{hyp:profit-LA}
\end{assumptionp}

The relative profitability of D2 and D3 may differ for two distinct reasons: \emph{certification expenditures} (if sellers certify at different rates, they pay different total fees) and \emph{competition intensity} (differentiation effect changes how intensely sellers compete). Because the theory does not pin down take-up differences between D2 and D3, we adopt a neutral benchmark:

\begin{assumptionp}{(SameCert)}
D2 and D3 sellers are equally likely to obtain certification.
\label{hyp:propensity}
\end{assumptionp}

Under \ref{hyp:propensity}, any D2--D3 profit difference reflects competition intensity rather than fee outlays. Accordingly, we also posit that the D2--D3 profit ranking should not depend on whether profits are computed before or after subtracting certification fees:

\begin{assumptionp}{(FeeNeutral)}
The D3 versus D2 profit comparison is unchanged whether profits are calculated before or after certification fees are deducted.
\label{hyp:expenditure}
\end{assumptionp}

Regarding the intensity of competition, the theoretical analysis and the differentiation effect imply that environments with noisy certification are less competitive. We assess competitiveness using two indicators: average posted prices and sellers' share of total welfare. In less competitive environments, both measures should be higher.

\begin{assumptionp}{(LowerCompD3)}
Sellers in D3 post higher prices than in D1 and D2, and the sellers' profit share in total welfare is higher in D3 than in D1 and D2.
\label{hyp:competitiveness}
\end{assumptionp}

Comparing D$\f$ and D$\e$, we expect D2$\e$ to be more competitive than D2$\f$. The narrower quality distribution in D2$\e$ makes rivals more likely to be similar in quality and to disclose it, which intensifies price competition. Conversely, we expect D3$\e$ to be less competitive than D3$\f$, because greater similarity in underlying quality strengthens the differentiation effect under noisy certification.

\begin{assumptionp}{(HigherCompD2$\e$)}
Prices and sellers' profit shares in D2$\e$ are lower than in D2$\f$.
\label{hyp:D2-80}
\end{assumptionp}
\vspace{-1cm}

\begin{assumptionp}{(LowerCompD3$\e$)}
Prices and sellers' profit shares in D3$\e$ are higher than in D3$\f$.
\label{hyp:D3-80}
\end{assumptionp}

\section{Experimental Results}
\label{sec:results}

\subsection{Sellers and Profitability Ranking}
\label{subsec:summarystat}

\subsubsection{Sellers' Profits}

Table \ref{tab:summaryD} presents summary statistics for key variables across the six treatments: D1$\f$, D1$\e$, D2$\f$, D2$\e$, D3$\f$, and D3$\e$. Figure \ref{fig:sellersprofit} shows sellers' profits in each treatment. Under risk-neutral theory, accurate certification should dominate noisy certification (Proposition \ref{pro:RankingRN}). Empirically, we observe the opposite: across both supports, D3 is more profitable than D2. D3$\f$ exceeds D2$\f$ (44.83 vs 42.61), and D3$\e$ exceeds D2$\e$ by a wide margin (64.06 vs 45.82). Our loss-averse framework is more consistent with these findings: as shown in Section \ref{sec:theory}, loss aversion can make D3 more profitable than D2, matching the observed higher profitability of D3$_l$ relative to D2$_l$.

\begin{center}
[Place Table \ref{tab:summaryD} here]

[Place Figure \ref{fig:sellersprofit} here]
\end{center}

We also observe a substantial increase in profit from D3$\f$ to D3$\e$ (44.83 to 64.06). This is consistent with the theoretical prediction that, under noisy certification, the differentiation effect is stronger when the quality support is narrower, that is, when qualities are more likely to be close. Notably, this increase is specific to noisy certification: profits in D1 and D2 are similar across supports, indicating that the D3$\e$ gain is driven by noise-induced differentiation rather than by a change in the underlying quality support. Finally, profits in D1 are strictly positive under both supports, contrary to the zero-profit benchmark in Corollary \ref{coro:D1}.\footnote{That a Bertrand-like environment, such as D1, does not necessarily yield the Bertrand outcome in the lab has been documented; see Dufwenberg and Gneezy (2000).}

\begin{description}
\item[Result \refstepcounter{ccc}\arabic{ccc}:] \textit{The profitability ranking among
D$\f$ treatments is D$2\f<$D$3\f<$D$1\f$. The profitability ranking among
D$\e$ treatments is D$2\e<$D$1\e<$D$3\e$. Hypotheses \ref{hyp:D1} and \ref{hyp:profit-RN} are  not supported. Hypothesis \ref{hyp:profit-LA} is consistent with the data.}
\end{description}

\subsubsection{Sellers' Profits across Treatments and Certification Costs}

Sellers' profit is determined by two key factors: certification fees and competition intensity. Examining the first factor, Table \ref{tab:certifytab} and Table \ref{tab:certifyreg} show certification take-up in D2 and D3 as a function of $c, \alpha,$ and $l$. The main result is that D2 sellers certify more often than D3 sellers across $c$ and $l$.
The difference is statistically significant in all but one case. Higher take-up in D2 depresses net profits through larger certification fee payments.

\begin{description}
\item[Result \refstepcounter{ccc}\arabic{ccc}:] \textit{$D2_l$ sellers are more likely to obtain certification than $D3_l$ sellers. Hypothesis \ref{hyp:propensity} is not supported.}
\end{description}

\begin{center}
[Place Table \ref{tab:certifytab} here]
\end{center}

\begin{center}
[Place Table \ref{tab:certifyreg} here]
\end{center}

Table \ref{tab:profits} reports sellers' profits by treatment, fee $c$, and
(for D3) accuracy $\alpha$. It includes both \textit{net profits} (after deducting certification costs) and \textit{gross profits} (before deducting costs).
Gross profits allow us to test whether the observed differences in net profitability between treatments stem from certification expenditures or from underlying differences in seller behavior and competition. 

\begin{center}
[Place Table \ref{tab:profits} here]
\end{center}

In terms of net profits, D3 consistently outperforms D2 across all values of $c$ and for both quality distributions ($l=50$ and $l=80$). This pattern holds for most values of $\alpha$ with
one exception. Although not all pairwise differences are statistically significant, D3 sellers typically earn more than D2 sellers. Similarly, D1 sellers (who have no access to certification) earn higher profits than those in D2 for most parameter values.

Turning to gross profits, a different picture emerges. For both $l$, D2 is more profitable than D1 once fees are removed, indicating that accurate certification can enhance profitability absent costs. Comparing D2 and D3, the pattern depends on the support. When $l=50$, D2$\f$ exceeds D3$\f$ in four parameter combinations ($c\in\{15,25\}$ and $\alpha\in\{0.5,0.6\}$), suggesting that accurate certification can yield higher gross profits than noisy certification when quality dispersion is wider. By contrast, when $l=80$, D3$\e$ strictly exceeds D2$\e$ in gross profits for all $c$ and $\alpha$. Thus, even before fees, D2$\e$ does not benefit from its higher certification propensity, pointing to more intense competition under accurate certification, rather than fee payments, as the main driver of D2$\e$'s lower profits.

\begin{description}
\item[Result \refstepcounter{ccc}\arabic{ccc}:] \textit{In terms of gross profits, D1 is the least profitable treatment.}

\item[Result \refstepcounter{ccc}\arabic{ccc}:] \textit{Comparing D2 and D3, certification costs are not the sole factor responsible for lower profitability of D2. In terms of gross profits, D2$\f$ can outperform D3$\f$ in gross profits but only when $\alpha$ is low. D3$\e$ consistently outperforms D2$\e$. Hypothesis \ref{hyp:expenditure} is not supported, except for D2$\f$ with low $\alpha$.}
\end{description}

\subsubsection{Sellers' Profits across Treatments and Competition Intensity}

Since differences in paid certification costs cannot fully explain the observed profitability ranking, we compare the competitiveness of the environments. As discussed in the theoretical section, noisy certification introduces endogenous product differentiation by adding randomness to buyers' WTP, which can reduce competitive pressure because greater dispersion in WTP allows sellers to charge higher prices and earn higher profits (see Proposition \ref{pro:WTPs}). To test this mechanism in our data, we use two measures of competitiveness: prices and the share of sellers' profits in total welfare, which we call the \textit{profit share}:
\begin{equation*}
\mbox{Profit Share} = \frac{\mbox{Sellers' Profit}}{\mbox{Sellers' Profit} + \mbox{Buyers' Profit}}.
\end{equation*}
Lower prices and lower profit shares reflect more intense competition, all else equal. Table \ref{tab:pricessales} and Table \ref{tab:share} present average prices and profit shares across treatments and parameter values. Figure \ref{fig:prices} displays average prices across the six treatments.

\begin{center}
[Place Tables \ref{tab:pricessales} and \ref{tab:share} here]

[Place Figure \ref{fig:prices} here]
\end{center}

Our first finding is that the treatment with noisy certification, D3, is consistently the least competitive. For both supports ($D\f$ and $D\e$), D3 yields the highest prices and profit shares for all $c$ and $\alpha$. The gap is especially pronounced when product qualities are tightly clustered ($D\e$), which is where the differentiation effect is strongest. A useful analogy is the ``Editor's Choice'' badge on tech review sites: in markets with many near-identical products (such as Bluetooth speakers), such a noisy but influential signal lets the highlighted product command a premium while nearly identical competitors without the badge are forced into fiercer price competition.

A second key finding is that the no-certification treatment (D1) is not always the most competitive. When $l=50$, D2$\f$ rather than D1$\f$ has the lowest prices and profit shares. This contrasts with the theoretical benchmark, which predicts that D1 is most competitive because, without certification, buyers regard the two products as identical, implying Bertrand pricing and zero profits (Corollary \ref{coro:D1}). In the laboratory, however, we cannot expect D1 buyers to hold identical beliefs. Without a public, verifiable signal, beliefs and hence WTPs differ across buyers, creating non-equilibrium product differentiation and breaking the Bertrand logic.\footnote{We call it non-equilibrium because, in equilibrium, buyers hold correct and therefore identical beliefs on the equilibrium path.} By contrast, in D2, once quality is disclosed it becomes common knowledge, aligning beliefs and intensifying price competition, which makes D2$\f$ the most competitive. In D3, although certification is available (as in D2),
disclosure does not make quality common knowledge, so belief heterogeneity persists and competition remains weaker than in D2.

\begin{description} \item[Result \refstepcounter{ccc}\arabic{ccc}:] \textit{For a given $l$, D3$_l$ is the least competitive treatment, as evidenced by higher prices and profit shares, confirming Hypothesis \ref{hyp:competitiveness}. At $l=50$, D2$\f$ is more competitive than D1$\f$, likely due to non-equilibrium belief heterogeneity in D1$\f$. At $l=80$, D1$\e$ is the most competitive treatment.}
\end{description}

We now compare treatments across quality supports, $l=50$ and $l=80$. Products in D$\e$ have a higher average quality of 90 than in D$\f$, which has an average quality of 75, so, all else equal, one would expect higher prices and profits in D$\e$. Strikingly, the opposite holds in D1 and D2. Prices and profit shares in D1$\e$ and D2$\e$ are consistently lower than in their D$\f$ counterparts, while profits barely change. This is evidence of D$\e$ environment being more competitive. The greater product homogeneity in D$\e$ compresses perceived differences between products and makes buyers less willing to pay a premium. As a result, D1$\e$ and D2$\e$ sellers cannot extract extra surplus from buyers despite offering better products.

This contrasts sharply with D3, where both prices and sellers' profits are higher in D3$\e$ than in D3$\f$. Noisy certification sustains product differentiation, softening competitive pressure and allowing sellers to earn higher profits.\footnote{A related empirical pattern appears in commercial real estate: buildings with environmental certifications such as LEED or ENERGY STAR are associated with rent and sale price premia relative to comparable noncertified buildings; see Eichholtz, Kok, and Quigley (2010) and Fuerst and McAllister (2011).} Its impact is strongest when quality dispersion is narrow ($l=80$), consistent with the differentiation effect discussed in Section \ref{sec:theory}.

Figure \ref{fig:profitbyquality} provides further evidence that noisy certification effectively softens competition in the environment with narrow quality dispersion (D$\e$). Typically, when certification is available, higher-quality sellers gain and lower-quality sellers lose relative to the no-certification benchmark; this pattern holds for D2$\f$ versus D1$\f$, for D3$\f$ versus D1$\f$, and for D2$\e$ versus D1$\e$. D3$\e$ is the sole exception: both higher- and lower-quality sellers earn more than in D1$\e$. Because weaker price competition raises profits for all sellers, including lower-quality sellers, these across-the-board gains in D3$\e$ indicate that certification noise makes the D$\e$ environment less competitive.

\begin{center}
[Place Figure \ref{fig:profitbyquality} here]
\end{center}

\begin{description} \item[Result \refstepcounter{ccc}\arabic{ccc}:] \textit{Despite higher average product quality (90 vs.\ 75), D$\e$ is more competitive than D$\f$ in D1 and D2: prices and profit shares are lower in D1$\e$ and D2$\e$ than in D1$\f$ and D2$\f$. This supports Hypothesis \ref{hyp:D2-80}.}

\item[Result \refstepcounter{ccc}\arabic{ccc}:] \textit{Noisy certification softens competition in $D\e$. Prices in D3$\e$ are higher than in D3$\f$, and D3$\e$ sellers earn more than D3$\f$ sellers. Hypothesis \ref{hyp:D3-80} is supported.}
\end{description}

\subsection{Buyers}
\label{subsec:buyers}

This section examines buyers' behavior along three dimensions: the optimality of individual choices, realized profits, and how certification disclosure affects purchasing.

\subsubsection{Optimality of Buyers' Decisions: Complete Information}

We begin by asking whether buyers' choices were individually optimal under complete information. Evaluating optimality under uncertainty is difficult because it depends on unobserved risk preferences, beliefs about quality, and expectations about seller behavior. We therefore restrict attention to cases with complete information about product quality.

The D2 treatment provides such cases. When both sellers obtain and disclose certification, buyers observe the qualities of both products and can compute consumer surplus for each option, $CS_i = v_i - p_i$. Optimal behavior is to purchase the product with the higher $CS$, provided the surplus is nonnegative.

Tables \ref{tab:rational1} and \ref{tab:Optimal} report behavior under complete information. Buyers rarely selected options with negative surplus, 6.1\% in D2$\f$ and 0\% in D2$\e$ (Table \ref{tab:rational1}). They chose the highest-surplus product in 87\% of cases in D2$\f$ and 98.1\% in D2$\e$ (Table \ref{tab:Optimal}). Thus, in the absence of uncertainty, buyers' decisions were largely optimal.

\begin{description}
\item[Result \refstepcounter{ccc}\arabic{ccc}:] \textit{When buyers faced no uncertainty, they made mostly optimal decisions.}
\end{description}

\begin{center}
[Place Table \ref{tab:rational1} here]
\end{center}

\begin{center}
[Place Table \ref{tab:Optimal} here]
\end{center}

\subsubsection{Buyers' profit}

Panels A and B of Table \ref{tab:buyersprofit} report average buyer profits across treatments and parameter values. The main takeaway is that buyers earn significantly more in D2$_l$ (accurate certification) than in D3$_l$ (noisy certification). Note that the profit measure in Table \ref{tab:buyersprofit} is \textit{ex post}: it records realized payoffs given choices and prices and therefore does not capture the \textit{ex ante} benefits of certification, such as reduced uncertainty and the associated utility gains. As a result, it likely understates the total value of accurate certification to buyers.

Table \ref{tab:buyersprofit} also shows that buyers in D$\e$ earn more than those in D$\f$, regardless of $c$ and $\alpha$, which is expected given the higher expected product value to be split between buyers and sellers in D$\e$. Within each treatment, buyer profits in D3 peak at $\alpha = 0.5$, although the relationship between $\alpha$ and profits is generally nonmonotonic in both D3$\f$ and D3$\e$.

\begin{description}
\item[Result \refstepcounter{ccc}\arabic{ccc}:] \textit{Buyer profits are highest under accurate certification and higher in $D\e$ than in $D\f$.}
\end{description}

\begin{center}
[Place Table \ref{tab:buyersprofit} here]
\end{center}

\subsubsection{Certification and Buyers' Purchasing Decisions.}

Table \ref{tab:buyers} examines how purchasing behavior varies with certification disclosure. In D2$\f$ and D3$\f$, when neither seller disclosed certification, purchase rates fell sharply, even below those in D1$\f$ where certification was unavailable. This indicates that non-disclosure, when certification is an option, is interpreted as a negative quality signal. In contrast, this pattern does not appear in D$\e$, where purchase rates remain consistently high (above 90\%) regardless of disclosure, likely reflecting the higher quality of D$\e$ products and the correspondingly lower risk of negative payoff.

Across D$_l$ treatments, buyers are most likely to purchase in D2$_l$, but only when at least one seller discloses certification, underscoring buyers' preference for certainty. Importantly, D2's higher purchase rates cannot be explained solely by lower prices: D1$\e$ buyers purchase less often than D2$\e$ buyers despite even lower average prices (see Table \ref{tab:pricessales}).

\begin{description}
\item[Result \refstepcounter{ccc}\arabic{ccc}:] \textit{If no seller discloses a certification outcome, purchase propensity in $D2\f$ and $D3\f$ falls below the $D1\f$ level.}
\end{description}

\begin{center}
[Place Table \ref{tab:buyers} here]
\end{center}

\subsection{Welfare}
\label{subsec:welfare}

We measure welfare as the average profit of all participants within each group (each group has two sellers and two buyers). This reflects a risk-neutral, ex post perspective and does not capture potential ex ante benefits, such as reduced uncertainty from certification. We report both net welfare, based on sellers' net profits, and gross welfare, based on sellers' gross profits.

Table \ref{tab:welfarecalpha} presents average net and gross welfare across $(c,\alpha)$ pairs. The main finding is that certification availability does not raise net welfare and often reduces it. In both supports, D3$_l$ has significantly lower net welfare than D1$_l$. In the narrow support, D2$\e$ also has significantly lower net welfare than D1$\e$, while in the wide support D2$\f$ shows only a small, statistically insignificant increase over D1$\f$.

Gross welfare paints a different picture for accurate certification: D2 raises gross welfare relative to D1 in both supports, consistent with more efficient allocation under accurate disclosure. By contrast, D3 does not increase gross welfare: in the wide support it is statistically similar to D1, and in the narrow support it is lower. Table \ref{tab:welfarecalpha} further shows that, in the narrow support, the gap between D2 and D3 narrows as $\alpha$ increases and becomes statistically indistinguishable at higher $\alpha$, indicating that greater precision mitigates the welfare penalty of noise.

The lack of net-welfare gains in D2 is naturally explained by certification fees. Once fees are excluded, gross welfare in D2 exceeds D1 in both supports. In D3, fees alone cannot account for the losses, since both net and gross welfare are below D1.

\begin{description}
\item[Result \refstepcounter{ccc}\arabic{ccc}:] \textit{Certification does not improve net welfare. Accurate certification increases gross welfare but fails to raise net welfare due to certification costs. Noisy certification reduces net welfare and does not improve gross welfare.}
\end{description}

\begin{center}
[Place Table \ref{tab:welfarecalpha} Here]
\end{center}

\subsection{Robustness Checks}
\label{subsec:robustness}

\indent \textbf{Subsample Analysis.} To test robustness, we replicate the analyses in subsections \ref{subsec:summarystat} through \ref{subsec:welfare} using only later periods of each treatment. Specifically, for each treatment we restrict the data to observations from period 11 onward, ensuring that all subjects had a ten-period learning phase. Table \ref{tab:robustnesscheck} reports three outcomes of interest: sellers' profit, buyers' profit, and group welfare.\footnote{See the supplementary materials for exact replications of the main tables and figures in subsections \ref{subsec:summarystat} through \ref{subsec:welfare} using this subsample.} Overall, Table \ref{tab:robustnesscheck} mirrors the main results: noisy certification reduces buyers' profits and welfare but can increase sellers' profits, and the treatment rankings and qualitative comparative statics are unchanged when we restrict to later periods.

\smallskip
\indent \textbf{Trends in Data.} Figure \ref{fig:data-trends} traces the dynamics of three variables: sellers' posted prices, sellers' profits, and buyers' profits. Prices and sellers' profits decline over time, while buyers' profits rise. The price decline is consistent with experimental evidence on convergence toward more competitive pricing in Bertrand environments (Abbink and Brandts, 2008; Buchheit and Feltovich, 2011). The increase in buyers' profits likely reflects both lower prices and learning about the environment.

\begin{center}
[Place Table \ref{tab:robustnesscheck} Here]
\end{center}

\begin{center}
[Place Figure \ref{fig:data-trends} Here]
\end{center}

\section{Conclusion}
\label{sec:conclusion}

This paper investigates how inaccurate certification, as a noisy quality signal, affects competition, pricing, and welfare. Our model and experiments reveal a clear pattern: in markets with close substitutes and loss-averse buyers, certification noise widens perceived quality dispersion, weakens price competition, and increases sellers' profits relative to accurate certification. These are markets where objective quality differences are small (e.g., bottled water, generic drugs, Bluetooth speakers, and other common electronics accessories), and imprecise signals create noise-induced differentiation among otherwise similar products. Our experiments confirm that when offerings are close substitutes, profits are higher under noisy certification than under accurate certification. In terms of welfare, accurate disclosure improves allocation and gross welfare, though certification fees can offset these gains. By contrast, noisy disclosure consistently reduces welfare, whether or not certification costs are included.

These findings extend beyond third-party certification to other noisy signals, such as platform-assigned badges like ``Amazon's Choice'' on Amazon, ``Top Rated Seller'' on eBay, and ``Superhost'' on Airbnb. These badges are algorithmically assigned and function as noisy proxies for quality rather than verified guarantees. Sellers can strategically pursue them, and earning one has an effect similar to a favorable certification: it tilts demand toward the tagged option among close substitutes, reduces fine-grained comparisons, and softens price competition.

The contribution of this paper is to analyze how imprecision in quality signals affects market outcomes. For scholars, we identify the conditions and mechanisms through which noise increases sellers' profits. For practitioners, we specify when noisy signals sustain margins versus when precise, common-knowledge disclosure compresses them. For policymakers, our experimental results show that imprecision can be privately profitable yet harmful to buyers and overall welfare, highlighting precision, disclosure, and fees as key regulatory levers to protect consumers and improve allocation.

Several questions remain for future research. First, endogenizing certifiers' policies, including thresholds, fees, and disclosure rules, would refine the model's implications (Dranove and Jin 2010). Second, while assuming homogeneous loss aversion is common in the literature (e.g., De and Nabar 1991; Strausz 2005; Marinovic et al. 2018), relaxing this assumption could uncover richer heterogeneity in market responses. Third, field experiments that vary precision, fees, and disclosure defaults across categories with different degrees of substitutability would shed light on the conditions under which noise alters margins and welfare. Finally, because ratings, badges, and other platform signals operate as noisy disclosure under platform governance, a model that allows for strategic certifiers and basic platform objectives would provide a useful bridge across the literatures on disclosure, certification, and marketplace design.

\section*{References}
\begin{hangparas}{.9cm}{1}
\linespread{1}\selectfont
\setlength{\parskip}{.3cm}

Abbink, K., and Brandts, J. (2008). Pricing in Bertrand competition with increasing marginal costs. \emph{Games and Economic Behavior}, \emph{63(1)}, 1-31.



Anderson, L. R., Freeborn, B. A., and Hulbert, J. P. (2012). Risk aversion and tacit collusion in a Bertrand duopoly experiment. \emph{Review of Industrial Organization}, \emph{40}, 37-50.

Beaver, W. H., Shakespeare, C., and Soliman, M. T. (2006). Differential Properties in the Ratings of Certified Versus Non-Certified Bond-Rating Agencies. \textit{Journal of Accounting and Economics}, \textit{42}(3), 303-334.

Becker, G. S., and Milbourn, T. T. (2011). How Did Increased Competition Affect Credit Ratings? \textit{Journal of Financial Economics}, \textit{101}(3), 493-514.

Benabou, R., and Laroque, G. (1992). Using Privileged Information to Manipulate Markets: Insiders, Gurus, and Credibility. \textit{The Quarterly Journal of Economics}, \textit{107}(3), 921-958.

Bloomberg (2008). Bringing Down Wall Street as Ratings Let Loose Subprime Scourge. \emph{Bloomberg, September, 24}.


Bottega, L. and Freitas, J. D. (2019). Imperfect certification in a Bertrand duopoly. \emph{Economics
Letters}, \emph{178}, 33-36.


Buchheit, S., and Feltovich, N. (2011). Experimental Evidence of a Sunk-Cost Paradox: A Study of Pricing Behavior in Bertrand–Edgeworth Duopoly.  \emph{International Economic Review},  \emph{52}(2), 317-347.

Cai, H., and Wang, J. T. Y. (2006). Overcommunication in strategic
information transmission games. \emph{Games and Economic Behavior},
\emph{56}(1), 7-36.

Cain, D. M., Loewenstein, G., and Moore, D. A. (2005). The Dirt on Coming Clean: Perverse Effects of Disclosing Conflicts of Interest. \textit{The Journal of Legal Studies}, \textit{34}(1), 1-25.

Card, D., and DellaVigna, S. (2013). Nine facts about top journals in economics. \emph{Journal of Economic Literature}, \emph{51}(1), 144--161.

Chen, L. and Lee, H. L. (2017). Sourcing Under Supplier Responsibility Risk: The Effects of
Certification, Audit, and Contingency Payment. \emph{Management Science}, \emph{63(9)}, 2795–2812.

Chen, Y., Narasimhan, C., and Zhang, Z. J. (2001). Individual marketing with imperfect targetability.
\emph{Marketing Science}, \emph{20}(1), 23–41.

Chen, Y. J., and Deng, M. (2013). Supplier certification and quality
investment in supply chains. \emph{Naval Research Logistics (NRL)},
\emph{60}(3), 175--189. https://doi.org/10.1002/NAV.21527


De, S. and Nabar, P. (1991). Economic implications of imperfect quality certification. \emph{Economics
Letters}, \emph{37(4)}, 333–37.


Dranove, D., and Jin, G. Z. (2010). Quality Disclosure and Certification:
Theory and Practice. \emph{Journal of Economic Literature},
\emph{48}(4), 935--963. https://doi.org/10.1257/JEL.48.4.935

Dufwenberg, M., and Gneezy, U. (2000). Price competition and market concentration: an experimental study. \emph{International Journal of Industrial Organization}, \emph{18(1)}, 7-22.



Eichholtz, P., Kok, N., and Quigley, J. M. (2010). Doing well by doing good? Green office buildings. \textit{American Economic Review}, 100(5), 2492-2509.

Elfenbein, D. W., Fisman, R., and McManus, B. (2015). Market structure, reputation, and the value of quality certification.  \emph{American Economic Journal: Microeconomics},  \emph{7(4)}, 83-108.

Feltovich, N. (2019). The interaction between competition and unethical behaviour. \emph{Experimental Economics}, \emph{22(1)}, 101-130.

Feinstein, J. S. (1989). The safety regulation of US nuclear power plants: Violations, inspections, and abnormal occurrences. \emph{Journal of Political Economy}, \textit{97}(1), 115-154.

Fischbacher, U. (2007). Z-Tree: Zurich toolbox for ready-made economic
experiments. \emph{Experimental Economics}, \emph{10}(2), 171--178.
https://doi.org/10.1007/s10683-006-9159-4

Fuerst, F., and McAllister, P. (2011). Green noise or green value? Measuring the effects of environmental certification on office values. \textit{Real estate economics}, 39(1), 45-69.

Gul, F. (1991). A theory of disappointment aversion. \emph{Econometrica}, \emph{59}(3), 667--686.


Hong, H., and Kubik, J. D. (2003). Analyzing the Analysts: Career
Concerns and Biased Earnings Forecasts. \emph{The Journal of Finance},
\emph{58}(1), 313--351.

Hu, N., Bose, I., Koh, N. S., and Liu, L. (2012). Manipulation of online reviews: An analysis of ratings, readability, and sentiments. \emph{Decision support systems}, \textit{52}(3), 674-684.

Huh, S., Shapiro, D. A., and Sung Ham (2023).
Profitability Of Noisy Certification In The Presence Of Loss-Averse Buyers.
\emph{Journal of Industrial Economics}, \emph{71}(3), 770--813.

Hui, X., Saeedi, M., Spagnolo, G., and Tadelis, S. (2023). Raising the bar: Certification thresholds and market outcomes. \emph{American Economic Journal: Microeconomics}, \emph{15(2)}, 599-626.


Jin, G. Z., and Leslie, P. (2003). The effect of information on product quality: Evidence from restaurant hygiene grade cards.  \emph{The Quarterly Journal of Economics},  \emph{118(2)}, 409-451.

Jin, G. Z., Luca, M., and Martin, D. (2021). Is No News (Perceived As)
Bad News? An Experimental Investigation of Information Disclosure.
\emph{American Economic Journal: Microeconomics}, \emph{13}(2),
141--173. https://doi.org/10.1257/MIC.20180217

Jin, G. Z., Luca, M., and Martin, D. (2022). Complex disclosure.  \emph{Management Science},  \emph{68}(5), 3236-3261.

Jovanovic, B. (1982). Truthful Disclosure of Information.  \emph{Bell Journal of Economics}, \emph{13}(1),


Kalayci, K. (2015). Price complexity and buyer confusion in markets. \emph{Journal of Economic Behavior \& Organization}, \emph{111}, 154-168.

Karik\'{o}, K., Buckstein, M., Ni, H., and Weissman, D. (2005). Suppression of RNA recognition by Toll-like receptors: the impact of nucleoside modification and the evolutionary origin of RNA. \textit{Immunity}, 23(2), 165-175.

Klemperer, P. (1987). The competitiveness of markets with switching costs. \emph{The RAND Journal of
Economics}, 138-150.

Koszegi, B., and Rabin, M. (2006). A model of reference-dependent preferences. \emph{The Quarterly Journal of Economics}, \emph{121}(4), 1133-1165.

Kov\'{a}cs, B., Lehman, D. W., and Carroll, G. R. (2020). Grade inflation in restaurant hygiene inspections: Repeated interactions between inspectors and restaurateurs. \emph{Food Policy}, \emph{97}, 101960.


Levin, D., Peck, J., and Ye, L. (2009). Quality Disclosure and Competition. \emph{The Journal of Industrial Economics}, \emph{57}(1), 167--196. doi:10.1111/j.1467-6451.2009.00366.x

Lim, T., Kothari, S. P., Stein, J., Stulz, R., Wang, J., and Womack, K.
(2001). Rationality and Analysts' Forecast Bias. \emph{The Journal of
Finance}, \emph{56}(1), 369--385.

Intermediaries. \emph{The RAND Journal of Economics}, \emph{30}(2), 214.
https://doi.org/10.2307/2556078


Marinovic, I., Skrzypacz, A., and Varas, F. (2018). Dynamic certification and reputation for quality. \emph{American Economic Journal: Microeconomics}, \emph{10}(2), 58-82.

Masatlioglu, Y., and Raymond, C. (2016). A Behavioral Analysis of
Stochastic Reference Dependence. \emph{American Economic Review},
\emph{106}(9), 2760--2782.







Peyrache, E., and Quesada, L. (2011). Intermediaries, credibility and incentives to collude. \emph{Journal of Economics \& Management Strategy}, \emph{20(4)}, 1099-1133.

Pollrich, M., and Wagner, L. (2016). Imprecise information disclosure and truthful certification. \emph{European Economic Review}, \emph{89}, 345-360.

Resnick, P., and Zeckhauser, R. (2002). Trust Among Strangers in Internet Transactions: Empirical Analysis of eBay's Reputation System. In M. R. Baye (Ed.), The Economics of the Internet and E-Commerce (pp. 127-157). Emerald Group Publishing Limited.

Sheth, J. D. (2021). Disclosure of information under competition: An experimental study. \emph{Games and Economic Behavior}, \emph{129}, 158-180.

Shin, J. and Yu, J. (2021). Targeted advertising and consumer inference. \emph{Marketing Science}, \emph{40}(5), 900-922.

Shin, J. and Sudhir, K. (2010). A customer management dilemma: When is it profitable to reward
one's own customers? \emph{Marketing Science}, \emph{29}(4), 671-689.

Stahl, K., and Strausz, R. (2017). Certification and market transparency. \emph{The Review of Economic Studies}, \emph{84(4)}, 1842-1868.

Strausz, R. (2005). Honest certification and the threat of capture.
\emph{International Journal of Industrial Organization},
\emph{23}(1--2), 45--62. https://doi.org/10.1016/j.ijindorg.2004.09.002

Suzumura, K. (1990). Strategic information revelation. \emph{Review of
Economic Studies}, \emph{57}(1), 25--47. https://doi.org/10.2307/2297541

Tversky, A., and Kahneman, D. (1992). Advances in prospect theory:
Cumulative representation of uncertainty. \emph{Journal of Risk and Uncertainty}, \emph{5}, 297–323.

Van Der Schaar, Mihaela, and Simpson Zhang. (2015). A dynamic model of certification and reputation. \emph{Economic Theory}, \emph{58(3)}, 509–41.

Villas-Boas, J. M. (1999). Dynamic competition with customer recognition. \emph{The Rand Journal of Economics}, \emph{30(4)}, 604-631.

Wang, J. T. Y., Spezio, M., and Camerer, C. F. (2010). Pinocchio's
Pupil: Using Eyetracking and Pupil Dilation to Understand Truth Telling
and Deception in Sender-Receiver Games. \emph{American Economic Review},
\emph{100}(3), 984--1007.

Wang, M., Rieger, M. O., and Hens, T. (2017). The impact of culture on loss aversion. \emph{Journal of Behavioral Decision Making}, \emph{30(2)}, 270-281.

Wired (2019) What Does It Mean When a Product Is 'Amazon's Choice'? \textit{Wired, June 4, 2019}.

Zhang, J., and Li, K. J. (2020). Quality Disclosure Under Consumer Loss
Aversion. \emph{Management Science}, \emph{67(8)}, 4643-5300.
https://doi.org/10.1287/mnsc.2020.3745

\end{hangparas}

\newpage
\section{Figures}

\begin{figure}[htbp!]
\centering
\caption{Sellers' Profits}
\includegraphics[scale=0.8]{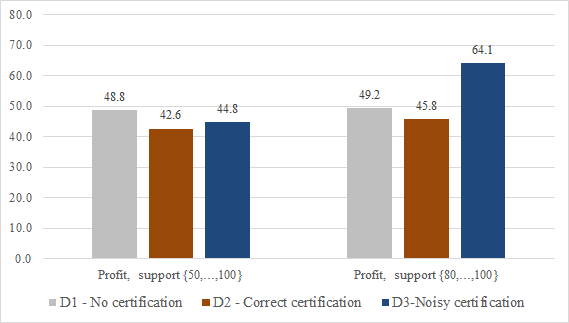}
\label{fig:sellersprofit}
    \captionsetup{font=it,singlelinecheck=off}
\end{figure}

\begin{figure}[!]
\centering
\caption{Prices}
\includegraphics[scale=0.8]{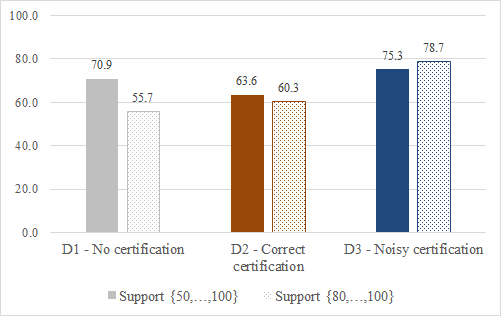}
\label{fig:prices}
    \captionsetup{font=it,singlelinecheck=off}
\end{figure}

\newpage
\begin{figure}[htbp!]
\centering
\caption{Sellers' net profit by quality}
\includegraphics[scale=0.8]{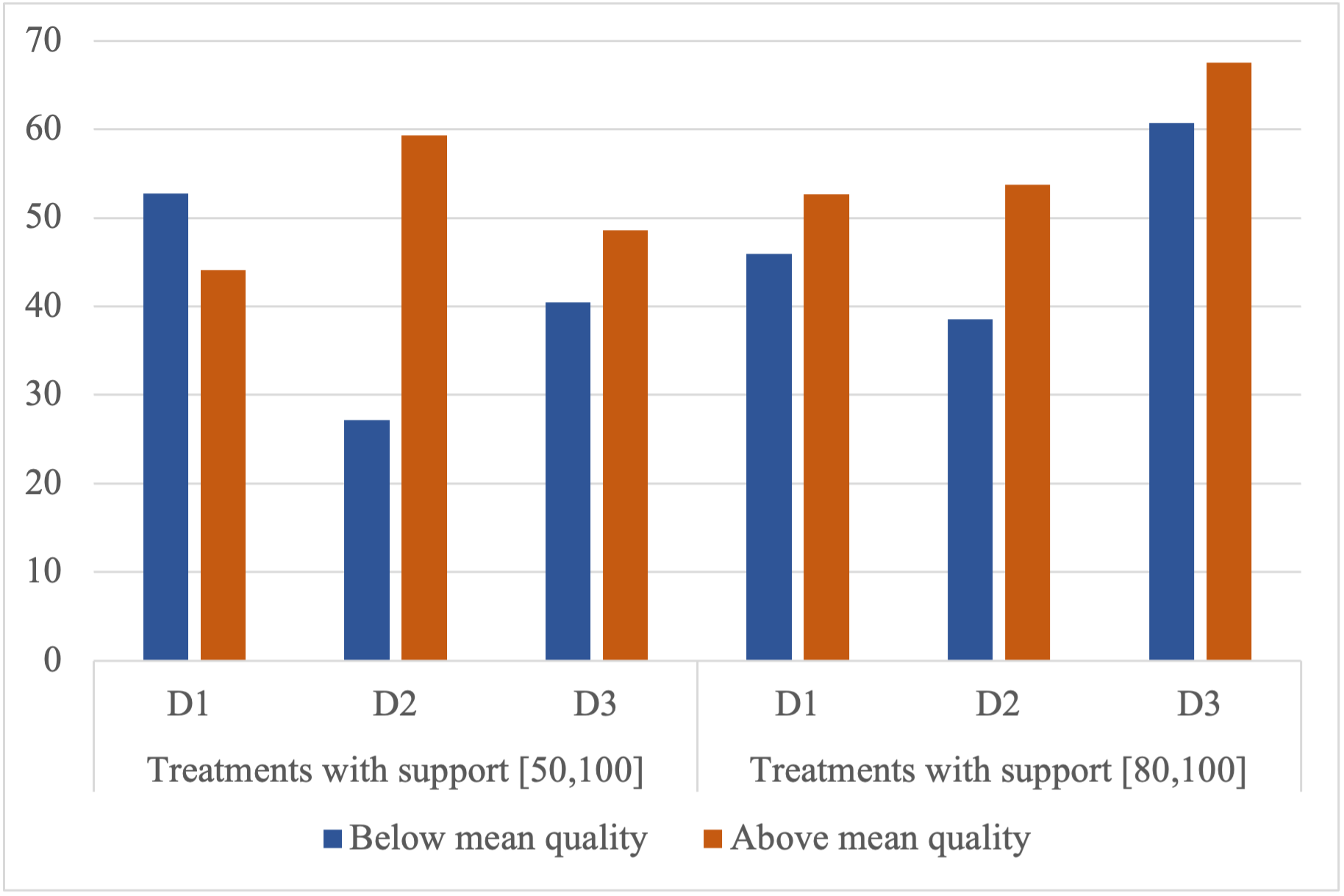}
\label{fig:profitbyquality}
    \captionsetup{font=it,singlelinecheck=off}
\end{figure}

\begin{figure}[ht!]
  \caption{Trends in data}
  \label{fig:data-trends}

  \centering

  \begin{subfigure}[b]{0.45\textwidth}
    \centering
    \includegraphics[width=\linewidth]{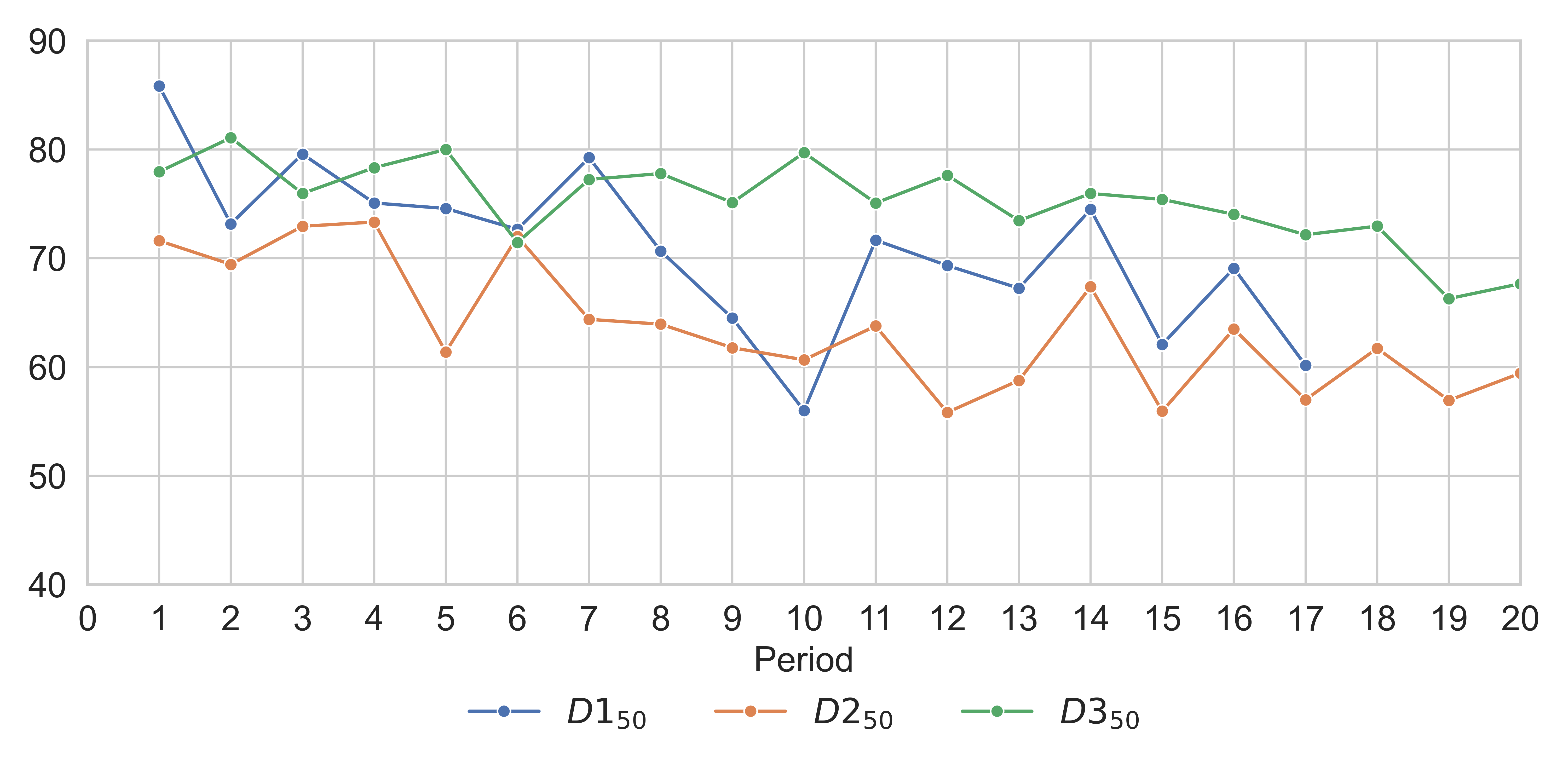}
    \caption{Sellers' price in \(D_{50}\)}
    \label{fig:SellerPrice50}
  \end{subfigure}%
  \hfill
  \begin{subfigure}[b]{0.45\textwidth}
    \centering
    \includegraphics[width=\linewidth]{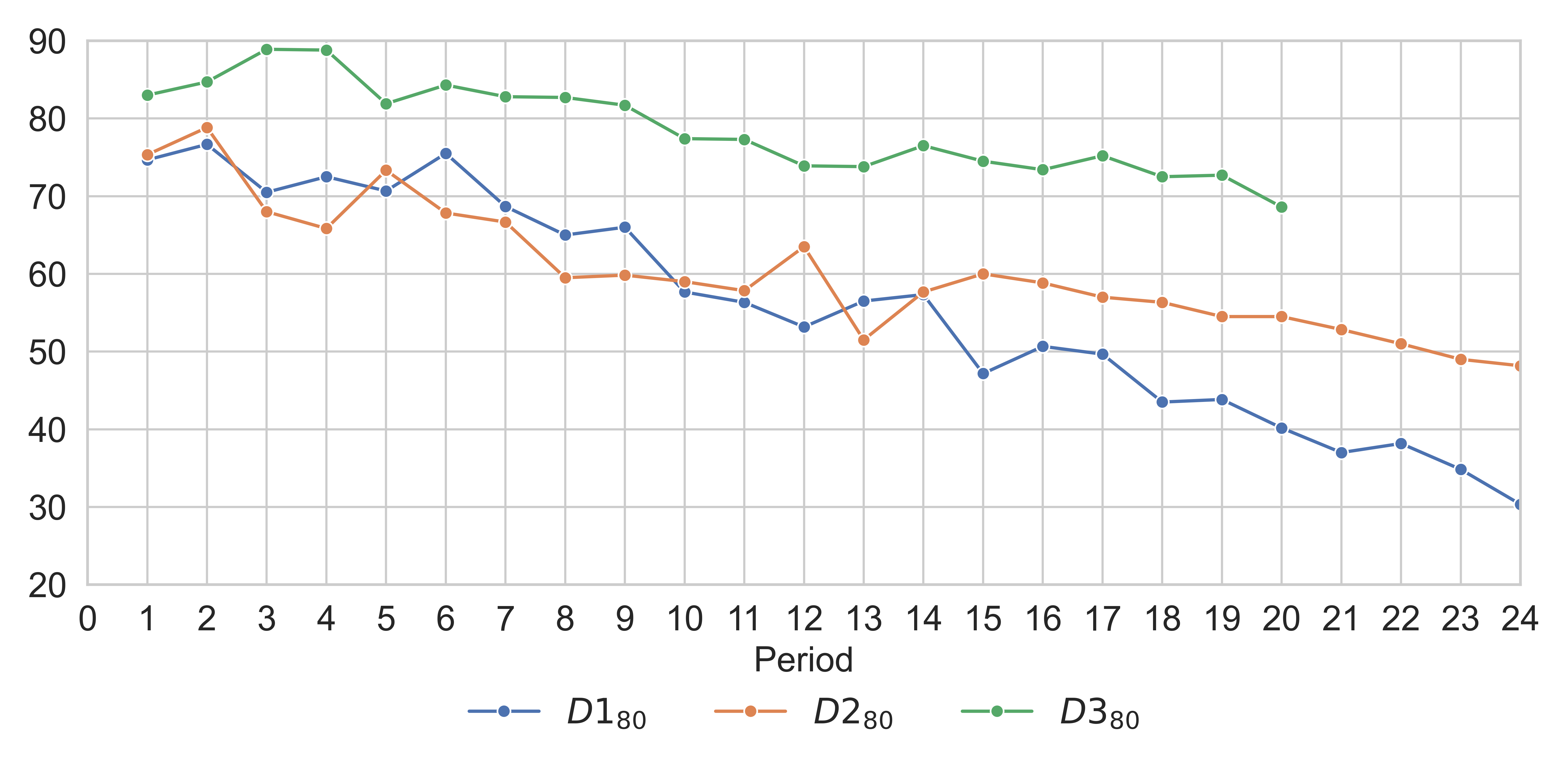}
    \caption{Sellers' price in \(D_{80}\)}
    \label{fig:SellerPrice80}
  \end{subfigure}

  \vspace{2em}

  \begin{subfigure}[b]{0.45\textwidth}
    \centering
    \includegraphics[width=\linewidth]{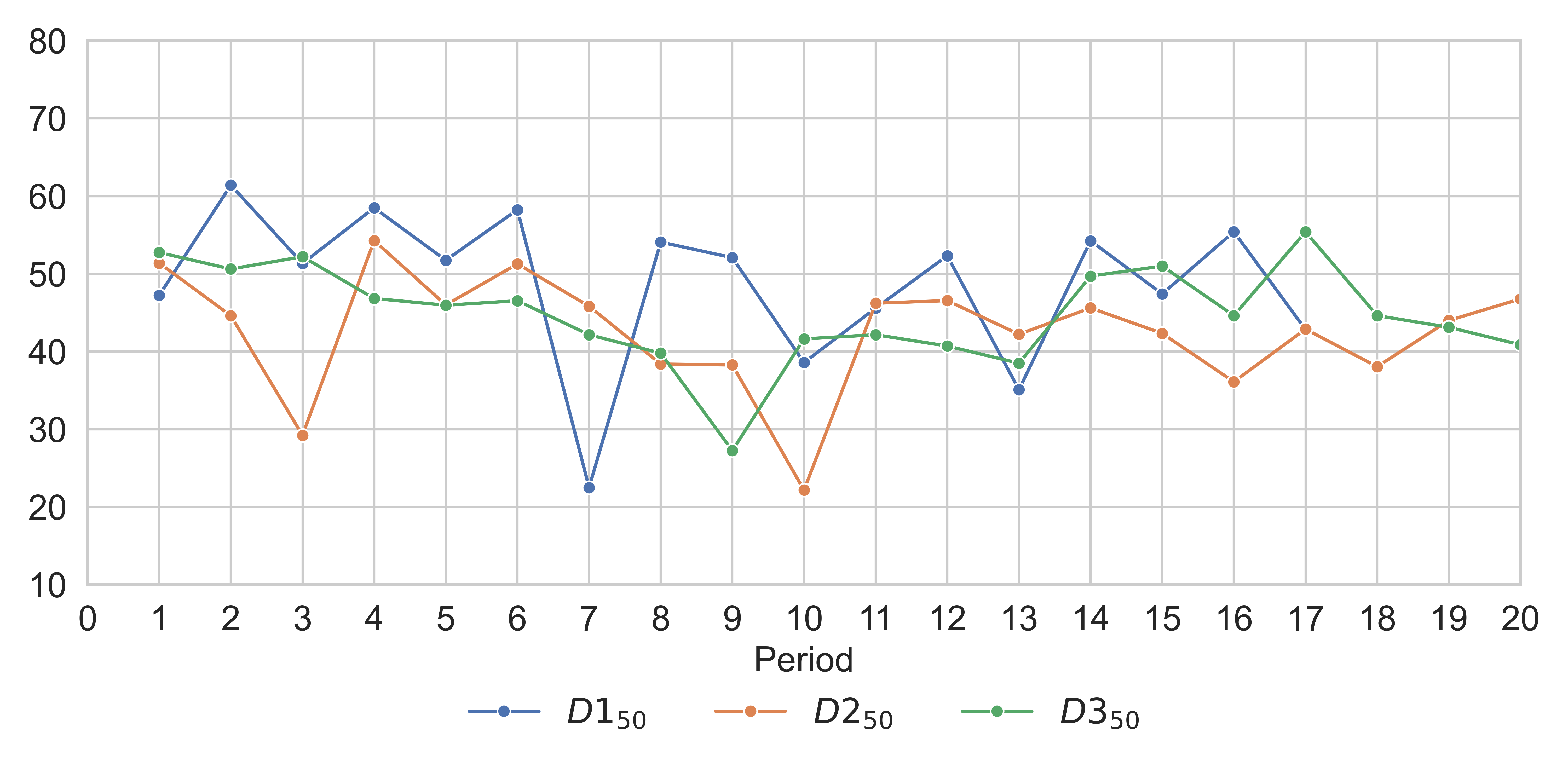}
    \caption{Sellers' profit in \(D_{50}\)}
    \label{fig:SellerProfit50}
  \end{subfigure}%
  \hfill
  \begin{subfigure}[b]{0.45\textwidth}
    \centering
    \includegraphics[width=\linewidth]{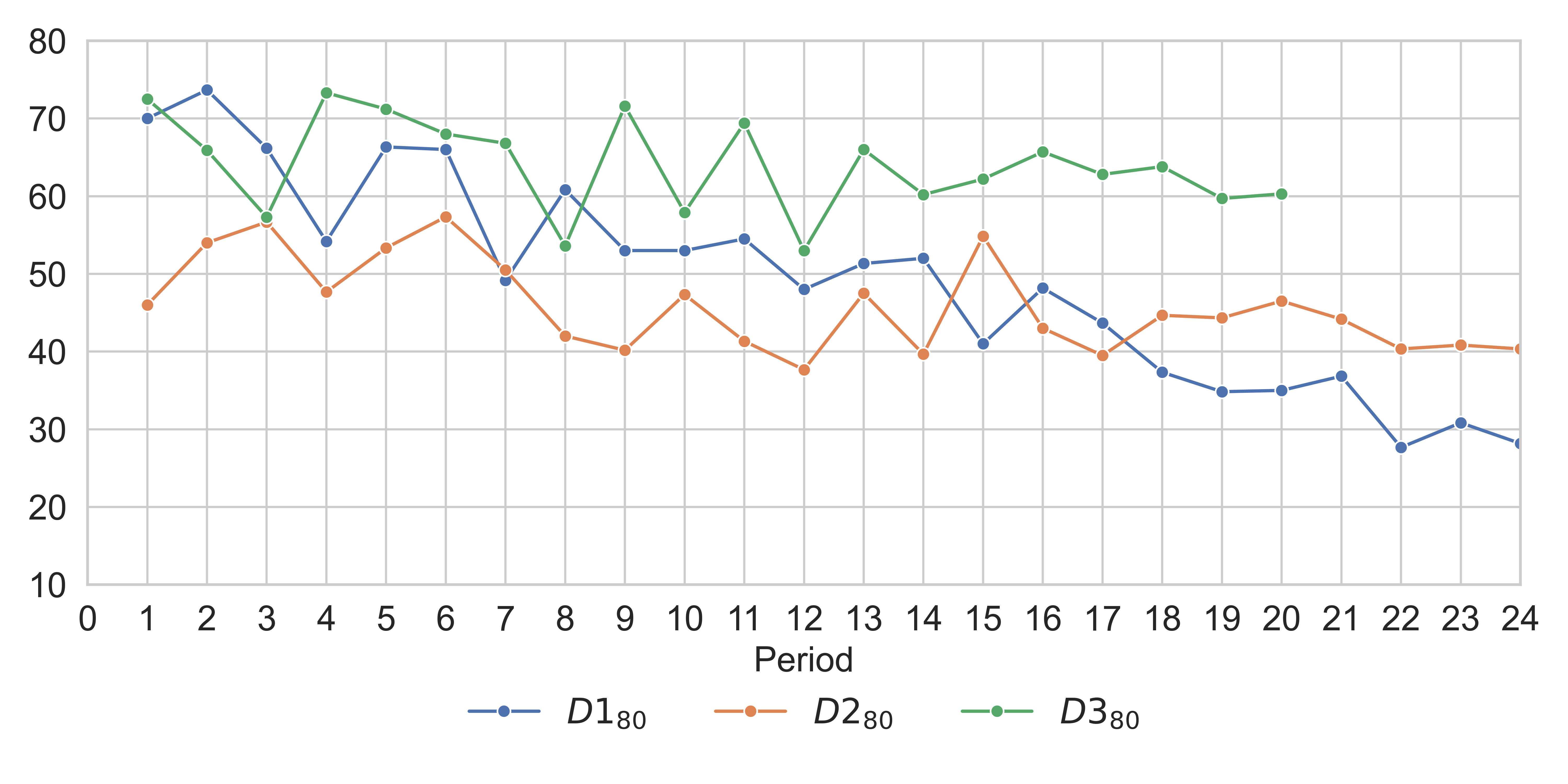}
    \caption{Sellers' profit in \(D_{80}\)}
    \label{fig:SellerProfit80}
  \end{subfigure}

  \vspace{2em}

  \begin{subfigure}[b]{0.45\textwidth}
    \centering
    \includegraphics[width=\linewidth]{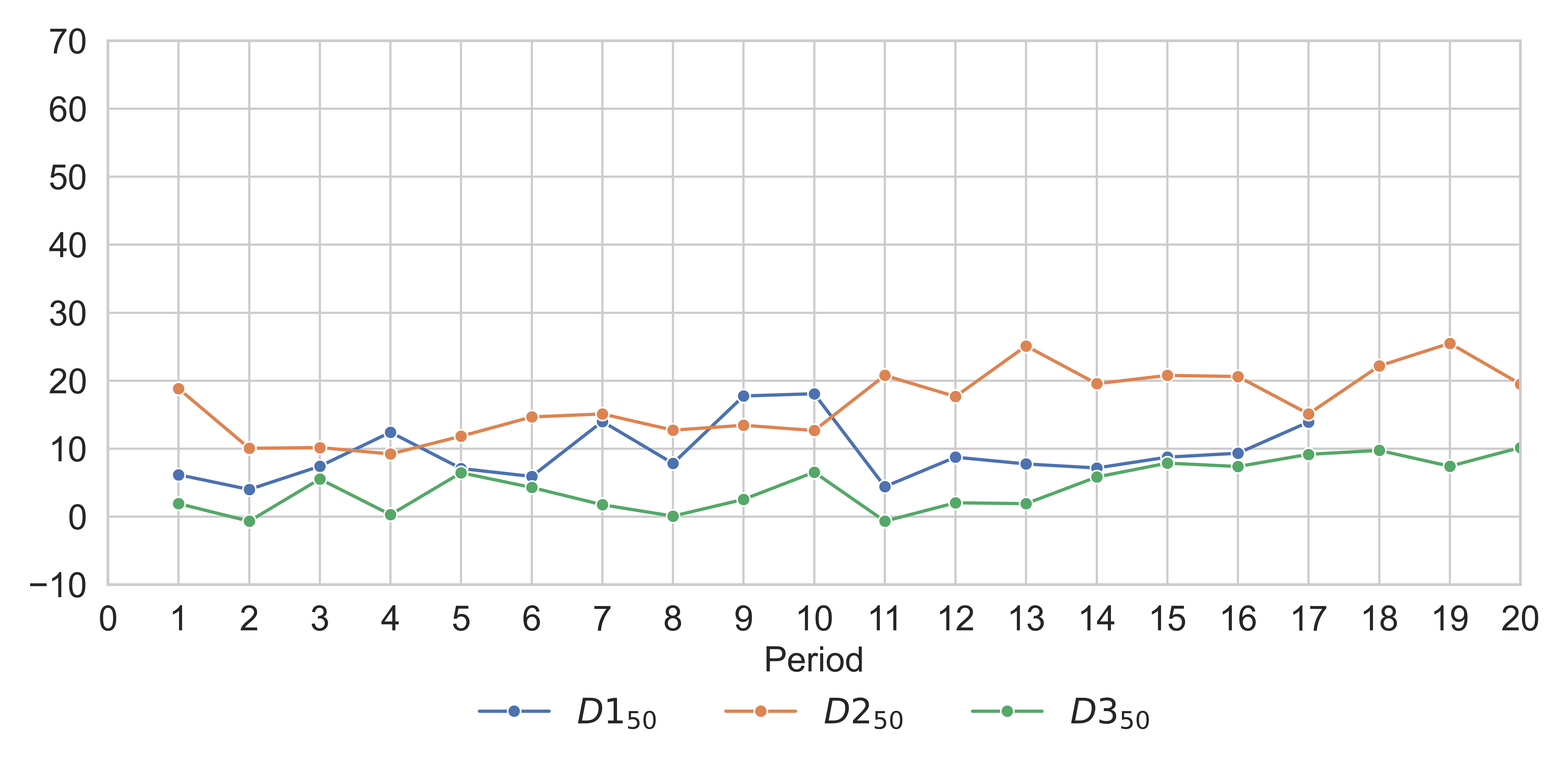}
    \caption{Buyers' profit in \(D_{50}\)}
    \label{fig:BuyerProfit50}
  \end{subfigure}%
  \hfill
  \begin{subfigure}[b]{0.45\textwidth}
    \centering
    \includegraphics[width=\linewidth]{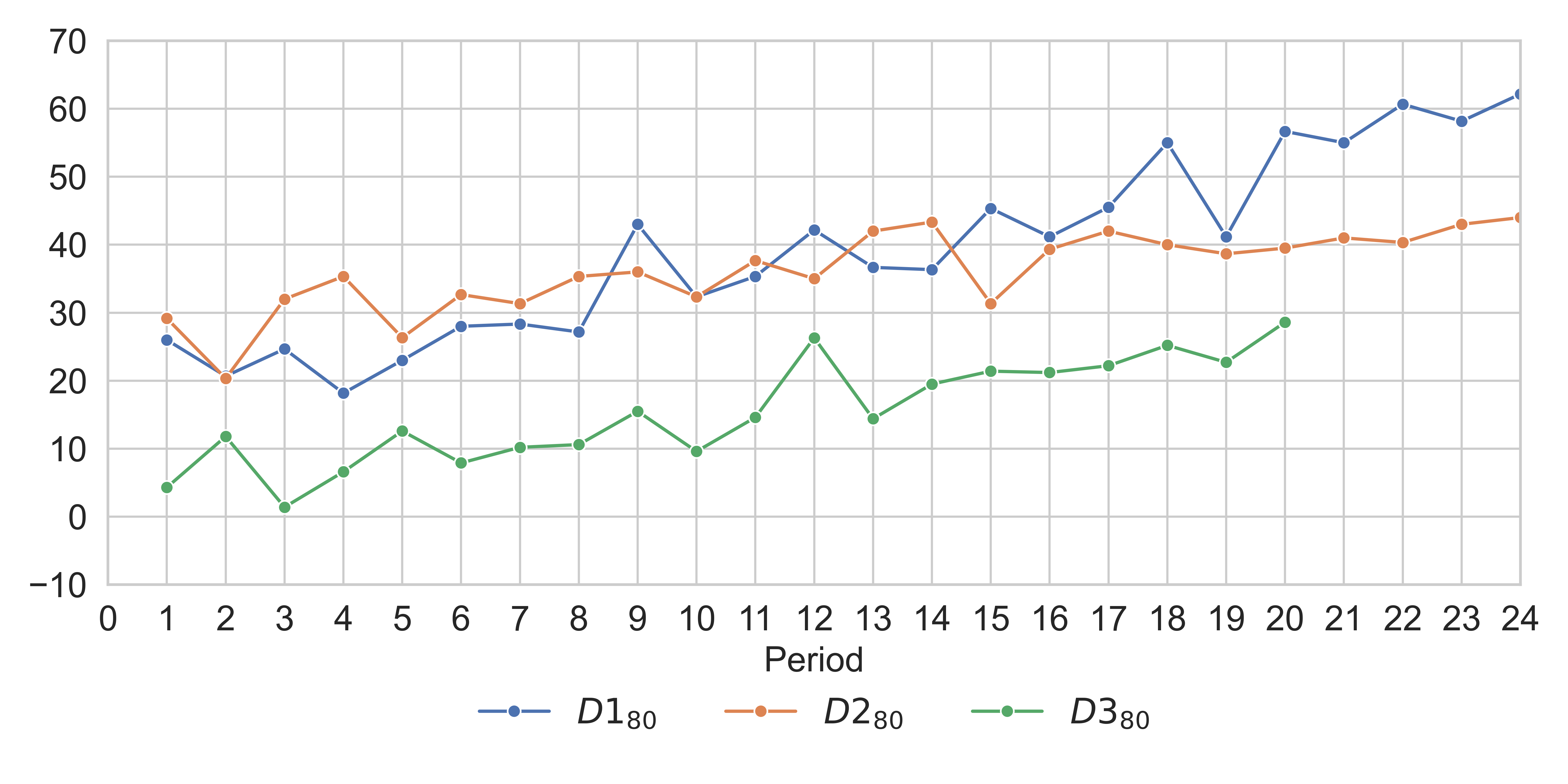}
    \caption{Buyers' profit in \(D_{80}\)}
    \label{fig:BuyerProfit80}
  \end{subfigure}
\end{figure}

\newpage
\section*{\vspace{2cm}\quad}

\section{Tables}
\begin{table}[htbp]
\fontsize{9pt}{11pt}\linespread{1.2}\selectfont
  \centering
  \caption{Number of participants and observations in each treatment}
\begin{tabular}{lcc|cc|cc}
\hline
\hline
\multicolumn{1}{c}{} &\mycomment{ \multicolumn{3}{c}{\textbf{Monopoly}} &       &} \multicolumn{6}{c}{\textbf{Treatments}} \mycomment{ &       & \multicolumn{3}{c}{\textbf{Differences}}} \bigstrut\\

\cline{2-7}    & \multicolumn{1}{c}{\textbf{D1}$\f$} & \textbf{D1}$\e$  & \textbf{D2}$\f$ & \textbf{D2}$\e$    & \textbf{D3}$\f$ & \textbf{D3}$\e$ \bigstrut\\
\hline
\textbf{No. of rounds} &       17 & 24     & 20 & 24     & 20 & 20 \bigstrut[t]\\
\textbf{No. of participants} & 24 & 12   & 36 & 12 &    48 & 20 \bigstrut[b]\\
\textbf{No. of observations} & 408 & 288   &  720 & 288   &  960 & 400 \bigstrut[b]\\
\hline
\hline
\end{tabular}%
\label{tab:design2}
\begin{flushleft}
\emph{Notes}: Data for treatment D1$\f$ is limited to the first 17 rounds due to
a computer crash during round 18.
\end{flushleft}
\end{table}

\mycomment{
\begin{table}[htbp]
\fontsize{9pt}{11pt}\linespread{1.2}\selectfont
  \centering
  \caption{Demographics}
\begin{tabular}{llcc}
\hline
\hline
      &       & \textbf{N} & \textbf{\%} \bigstrut\\
\hline
\multicolumn{1}{l}{\textbf{Major}} & Economics or business & 128   & 90.14\% \bigstrut[t]\\
      & Information technology & 5     & 3.52\% \\
      & Language & 6     & 4.23\% \\
      & Others & 3     & 2.11\% \\
\multicolumn{1}{l}{\textbf{Year}} & Freshman & 140   & 98.59\% \\
      & Sophomore and above & 2     & 1.41\% \\
\multicolumn{1}{l}{\textbf{Gender}} & Male  & 41    & 28.87\% \\
      & Female & 101   & 71.13\% \bigstrut[b]\\
\hline
\hline
\end{tabular}%
\label{tab:design3}
\end{table}
}

\begin{table}[h!]
\fontsize{9pt}{11pt}\linespread{1.2}\selectfont
  \centering
  \caption{Summary statistics }
\begin{tabular}{cl|rr|rr|rr|rr|rr|rr}
\hline
\hline
\multicolumn{2}{l}{\textbf{Treatments}} & \multicolumn{2}{c}{\textbf{Mean}} & \multicolumn{2}{c}{\textbf{Median}} & \multicolumn{2}{c}{\textbf{SD}} & \multicolumn{2}{c}{\textbf{Min}}& \multicolumn{2}{c}{\textbf{Max}} &
\multicolumn{2}{c}{\textbf{N}} \bigstrut\\
&\multicolumn{1}{c}{} &
\multicolumn{1}{c}{\textbf{D$_{50}$}} & \multicolumn{1}{c}{\textbf{D$_{80}$}} & \multicolumn{1}{c}{\textbf{D$_{50}$}} & \multicolumn{1}{c}{\textbf{D$_{80}$}} & \multicolumn{1}{c}{\textbf{D$_{50}$}} & \multicolumn{1}{c}{\textbf{D$_{80}$}} & \multicolumn{1}{c}{\textbf{D$_{50}$}} & \multicolumn{1}{c}{\textbf{D$_{80}$}} & \multicolumn{1}{c}{\textbf{D$_{50}$}} & \multicolumn{1}{c}{\textbf{D$_{80}$}} & \multicolumn{1}{c}{\textbf{D$_{50}$}} & \multicolumn{1}{c}{\textbf{D$_{80}$}} \\
\hline
\multicolumn{2}{l|}{\textit{D1 -- No certification}} &     &       &       &       &       &   &&&&&& \bigstrut[t]\\
      & Product quality &  74.92&90.15   &  74.5&90   &  15.00&5.78     &  50&80     &  100&100    &  204&144  \\
      & Announced quality &  87.78&96.45   &  90&99     &  9.43&4.40    &  55&80     &  100&100    &  204&144  \\
      & Offered price &  70.91&55.69   &  70&55     &  21.80&17.97   &  18&25     &  195&99    & 204&144  \\
      & Purchasing likelihood &  0.79&0.98   &  1&1      &  0.41&0.14   &  0&0      & 1&1      &  204&144  \\
      & Sellers' profit &  48.75&49.24   &  50&50     &  50.41&47.82   &  0&0      &  200&164    & 204&144  \\
      & Buyers' profit &  9.46&39.28    &  3&40      &  21.44&16.68   &  -36&-5    &  79&71     & 204&144  \bigstrut[b]\\
\hline
\cline{1-14}
\multicolumn{2}{l|}{\textit{D2 -- Accurate certification}} &       &       &       &       &       &  &&&&&&\bigstrut[t]\\
      & Product quality &  74.39&90.25   &  74.5&90   &  14.90&5.91   &  50&80     & 100 & 100   & 360&144  \\
      & Probability of certifying &  0.55&0.56   &  1&1      &  0.50&0.50  &  0&0      &  1&1      &  360&144  \\
      & Quality | certifying &  77.96&92.25   &  79&93.5     &  14.09&5.29   &  50&80     & 100&100    &  197&80  \\
      & Quality | not certifying &  70.07&87.75   &  68&87     &  14.74&5.74   &  50&80     & 100&99    &  163&64  \\
      & Offered price &  63.59&60.28   &  64&59  &  16.90&13.33   &  30&25     &  130&96    & 360&144  \\
      & Purchasing likelihood &  0.89&1   &  1&1      &  0.32&0   &  0&0      &  1&1      &  360&144  \\
      & Certification expenses & 9.26 &  9.06 & 10 &  10&9.42 & 9.03 & 0& 0 & 25 & 25 &360 & 144  \bigstrut[b]\\
      & Sellers' profit &  42.61&45.82   &  40&49.5   &  49.80&53.13   &  -25&-25    &  170&150    &  360&144  \\
      & Buyers' profit &  16.78&36.17   &  19.5&39   &  13.39&11.98     &  -34&7    &  46&60     & 360&144 \\
\hline
\cline{1-14}
\multicolumn{2}{l|}{\textit{D3 -- Noisy certification}} &       &       &       &       &       &  &&&&&&\bigstrut[t]\\
      & Product quality &  76.35&90.26   &  77&90     &  14.97&5.94     &  50&80     &  100&100    &  480&200 \\
      & Probability of certifying &  0.45&0.27  &  0&0      &  0.50&0.45    &  0&0      &  1&1      &  480&200  \\
      & Quality | certifying &  80.19&91.69  &  82&92     &  13.49&5.58   &  50&80     &  100&100    &  214&54  \\
      & Quality | not certifying &  73.26&89.73   &  72.5&90   &  15.41&6.00   &  50&80     & 100&100    &  266&146  \\
      & Disclosure | certifying &  0.86&0.93   &  1&1      &  0.35&0.26   &  0&0      &  1&1      & 214&54  \\
      & Disclosed outcome | certifying & 83.67 & 91.86   & 87  & 92     & 12.62 & 5.34  & 52 &  80  &100  & 100  & 183 &50  \\
      & Offered price &  75.27&78.73   &  76&80     &  15.67&11.19   &  34&50     &  123&100    & 480&200  \\
      & Purchasing likelihood &  0.74&0.93  &  1&1      &  0.44&0.26  &  0&0      &  1&1      &  480&200  \\
      & Certification expenses & 7.49 & 4.35 & 0 & 0 & 9.16 & 7.67 & 0& 0 & 25 & 25&480  & 200  \bigstrut[b]\\
      & Sellers' profit &  44.83&64.06   &  46&70     &  52.99&61.13   &  -25&-20    &  190&196   & 480&200  \\
      & Buyers' profit &  4.48&15.33    &  0&14    &  13.94&11.29   &  -38&-13  &  47&40     & 480&200  \\
\hline
\hline
\end{tabular}%
\label{tab:summaryD}
\begin{flushleft}
\end{flushleft}
\end{table}

\newpage
\begin{table}[htbp]
\fontsize{9pt}{11pt}\linespread{1.2}\selectfont
  \centering
  \caption{Sellers' propensity to obtain product certification}
\begin{tabular}{p{3.715em}ccccccccccc}
\hline
\hline
\multicolumn{12}{c}{\textbf{Panel A. D$\f$}}\\
\hline
\multicolumn{1}{r}{} & \multirow{2}[4]{*}{\textbf{Mean}} & \multicolumn{4}{c}{\textbf{Certification cost}} &       & \multicolumn{5}{c}{\textbf{Certification precision}} \bigstrut\\
\cline{3-6}\cline{8-12}\multicolumn{1}{r}{} &       & \textbf{10} & \textbf{15} & \textbf{20} & \textbf{25} &       & \textbf{0.5} & \textbf{0.6} & \textbf{0.7} & \textbf{0.8} & \textbf{0.9} \bigstrut\\

\cline{3-6}\cline{8-12}D2    & 0.55  & 0.63  & 0.58   & 0.48   & 0.50  &       &       &       &       &       & \multicolumn{1}{c}{} \bigstrut[t]\\
D3    & 0.45  & 0.54   & 0.43 & 0.45  & 0.37  &       & 0.32   & 0.38  & 0.54  & 0.47  & 0.52 \\
\cline{2-6}\cline{8-12}
D2-D3 &  ***    & *   & **   & -    & **     &       &       &       &       &       & \multicolumn{1}{c}{} \bigstrut[b]\\

\hline
\hline
\multicolumn{12}{c}{\textbf{Panel B. D$\e$}}\\
\hline
\multicolumn{1}{r}{} & \multirow{2}[4]{*}{\textbf{Mean}} & \multicolumn{4}{c}{\textbf{Certification cost}} &       & \multicolumn{5}{c}{\textbf{Certification precision}} \bigstrut\\
\cline{3-6}\cline{8-12}\multicolumn{1}{r}{} &       & \textbf{10} & \textbf{15} & \textbf{20} & \textbf{25} &       & \textbf{0.5} & \textbf{0.6} & \textbf{0.7} & \textbf{0.8} & \textbf{0.9} \bigstrut\\

\cline{3-6}\cline{8-12}D2    & 0.56  & 0.67  & 0.64   & 0.58   & 0.33  &       &       &       &       &       & \multicolumn{1}{c}{} \bigstrut[t]\\
D3    & 0.27  & 0.32   & 0.38  & 0.20  & 0.18  &       & 0.18   & 0.28  & 0.23  & 0.30  & 0.38 \\
\cline{2-6}\cline{8-12}
D2-D3 &  ***    & ***   & ***   & ***    & *    &       &       &       &       &       & \multicolumn{1}{c}{} \bigstrut[b]\\
\hline
\hline
\end{tabular}%
\label{tab:certifytab}
\begin{flushleft}
\emph{Notes:}  Asterisks indicate significance levels (* p\textless0.1, ** p\textless0.05, *** p\textless0.01), based on one-sided $t$-test with unequal variances. 
\end{flushleft}
\end{table}

\begin{table}[h!]
\fontsize{9pt}{11pt}\linespread{1.2}\selectfont
  \centering
  \caption{Sellers' propensity to obtain product certification. Regression analysis}
\begin{tabular}{llllll}
\hline
\hline
      & \multicolumn{2}{c}{\textbf{D$\f$}} & &\multicolumn{2}{c}{\textbf{D$\e$}}\bigstrut\\
\cline{2-3} \cline {5-6}& \multicolumn{1}{p{4.215em}}{\textbf{D2}} & \multicolumn{1}{p{4.215em}}{\textbf{D3}}& &\multicolumn{1}{p{4.215em}}{\textbf{D2}} & \multicolumn{1}{p{4.215em}}{\textbf{D3}} \bigstrut\\
\hline
True quality & \multicolumn{1}{p{4.215em}}{0.054***} & \multicolumn{1}{p{4.215em}}{0.059***} &&\multicolumn{1}{p{4.215em}}{0.26***} & \multicolumn{1}{p{4.215em}}{0.09***} \bigstrut[t]\\
      & \multicolumn{1}{p{4.215em}}{(5.60)} & \multicolumn{1}{p{4.215em}}{(6.22)} & &\multicolumn{1}{p{4.215em}}{(4.63)} & \multicolumn{1}{p{4.215em}}{(2.47)} \\
Certification cost & \multicolumn{1}{p{4.215em}}{-0.059**} & \multicolumn{1}{p{4.215em}}{-0.084***} & &\multicolumn{1}{p{4.215em}}{-0.163**} & \multicolumn{1}{p{4.215em}}{-0.103***} \\
      & \multicolumn{1}{p{4.215em}}{(-2.53)} & \multicolumn{1}{p{4.215em}}{(-3.65)}& &\multicolumn{1}{p{4.215em}}{(-3.47)} & \multicolumn{1}{p{4.215em}}{(-2.76)} \\
Certification precision &             & \multicolumn{1}{p{4.215em}}{4.39***} &             && \multicolumn{1}{p{4.215em}}{3.38***} \\      & & \multicolumn{1}{p{4.215em}}{(4.61)} & & & \multicolumn{1}{p{4.215em}}{(2.25)} \bigstrut[b]\\
\cline{2-3} \cline{5-6}No. of observations (N) & \multicolumn{1}{p{4.215em}}{342}   & 430 & &\multicolumn{1}{p{4.215em}}{144}   & 152 \bigstrut[t]\\
log-likelihood & \multicolumn{1}{p{4.215em}}{-131.3} & \multicolumn{1}{p{4.215em}}{-140.3} & &\multicolumn{1}{p{4.215em}}{-40.80} & \multicolumn{1}{p{4.215em}}{--55.41} \bigstrut[b]\\
\hline
\hline
\end{tabular}%
\label{tab:certifyreg}
\begin{flushleft}
\emph{Notes}: Asterisks indicate significance levels (* p\textless0.1, ** p\textless0.05, *** p\textless0.01), with $t$-statistics shown in parentheses. A fixed-effect logistic model (xtlogit) is used. 
Displayed estimates are odds ratios.
\end{flushleft}
\end{table}


\newpage
\begin{table}[htbp]
\fontsize{9pt}{11pt}\linespread{1.2}\selectfont
  \centering
  \caption{Sellers' profit by certification cost and precision}
\begin{tabular}{p{12.715em}ccccccccccc}
\hline
\hline
\multicolumn{1}{r}{} & \multirow{2}[4]{*}{\textbf{Mean}} & \multicolumn{4}{c}{\textbf{Certification cost}} &       & \multicolumn{5}{c}{\textbf{Certification precision}} \bigstrut\\
\cline{3-6}\cline{8-12}\multicolumn{1}{r}{} &       & \textbf{10} & \textbf{15} & \textbf{20} & \textbf{25} &       & \textbf{0.5} & \textbf{0.6} & \textbf{0.7} & \textbf{0.8} & \textbf{0.9} \bigstrut\\

\hline
\cline{1-12}

\multicolumn{1}{l}{\textbf{Panel A. Net Profit. D$\f$}}\\
D1 & 48.75\\
D2    & 42.61  & 44.22  & 45.54   & 42.14   & 38.53  &       &       &       &       &       & \multicolumn{1}{c}{} \bigstrut[t]\\
D3    & 44.83  & 45.38   & 46.34  & 46.96  & 40.63  &       & 41.58   & 44.70  & 46.59  & 45.75  & 45.50 \\
\cline{2-6}\cline{8-12}
D1-D2 &  *    &    &    &    &    &       &       &       &       &       & \multicolumn{1}{c}{} \bigstrut[b]\\
D1-D3 &  -    &    &    &   &    &       &       &       &       &       & \multicolumn{1}{c}{} \bigstrut[b]\\
D2-D3 &  -    & -   & -   & -    & -    &       &       &       &       &       & \multicolumn{1}{c}{} \bigstrut[b]\\

\hline
\cline{1-12}

\multicolumn{1}{l}{\textbf{Panel B. Net Profit. D$\e$}}\\
D1& 49.24\\
D2  & 45.82  & 50.19  & 45.33   & 44.53  & 43.22  &       &       &       &       &       & \multicolumn{1}{c}{} \bigstrut[t]\\
D3  & 64.06  & 65.76   & 60.54  & 62.94  & 67.00  &       & 57.72   & 61.50  & 69.10  & 65.13  & 66.85 \\
\cline{2-6}\cline{8-12}
D1-D2 &  -    &    &    &    &    &       &       &       &       &       & \multicolumn{1}{c}{} \bigstrut[b]\\
D1-D3 &  ***    &    &    &   &    &       &       &       &       &       & \multicolumn{1}{c}{} \bigstrut[b]\\
D2-D3 &  ***    & -   & -   & *    & **    &       &       &       &       &       & \multicolumn{1}{c}{} \bigstrut[b]\\
\hline
\cline{1-12}

\multicolumn{1}{r}{\textbf{Panel C. Gross Profit. D$\f$}}\\
D1 & 48.75\\
D2   & 51.88  & 50.56  & 54.21   & 51.70   & 51.03  &       &       &       &       &       & \multicolumn{1}{c}{} \bigstrut[t]\\
D3   & 52.31  & 50.79   & 52.72  & 55.96  & 49.79  &       & 46.95   & 50.69  & 55.50  & 53.72  & 54.72 \\
\cline{2-6}\cline{8-12}
D1-D2 &  -    &    &    &    &    &       &       &       &       &       & \multicolumn{1}{c}{} \bigstrut[b]\\
D1-D3 &  -    &    &    &   &    &       &       &       &       &       & \multicolumn{1}{c}{} \bigstrut[b]\\
D2-D3 &  -    & -   & -   & -    & -    &       &       &       &       &       & \multicolumn{1}{c}{} \bigstrut[b]\\

\hline
\cline{1-12}
\multicolumn{1}{l}{\textbf{Panel D. Gross Profit. D$\e$}}\\
D1 & 49.24\\
D2 & 54.88  & 56.86  & 54.91   & 56.19  & 51.56  &       &       &       &       &       & \multicolumn{1}{c}{} \bigstrut[t]\\
D3  & 68.41  & 68.96   & 66.24  & 66.94  & 71.50  &       & 60.73   & 65.38  & 72.85  & 70.25  & 72.85 \\
\cline{2-6}\cline{8-12}
D1-D2 &  -    &    &    &    &    &       &       &       &       &       & \multicolumn{1}{c}{} \bigstrut[b]\\
D1-D3 &  ***   &    &    &   &    &       &       &       &       &       & \multicolumn{1}{c}{} \bigstrut[b]\\
D2-D3 &  ***    & -   & -   & *    & **    &       &       &       &       &       & \multicolumn{1}{c}{} \bigstrut[b]\\
\hline
\hline
\end{tabular}%
\label{tab:profits}
\begin{flushleft}
\emph{Notes:} Asterisks indicate significance levels (* p\textless0.1, ** p\textless0.05, *** p\textless0.01), based on one-sided $t$-test with unequal variances.  Non-significant results are marked with a dash ("-"). Sellers' gross profits exclude certification costs. Net profits are calculated by subtracting certification costs from gross profits.
\end{flushleft}
\end{table}

\newpage
\begin{table}[htbp]
\fontsize{9pt}{11pt}\linespread{1.2}\selectfont
  \centering
  \caption{Prices by certification cost and precision}
\begin{tabular}{p{12.715em}ccccccccccc}
\hline
\hline
\multicolumn{1}{r}{} & \multirow{2}[4]{*}{\textbf{Mean}} & \multicolumn{4}{c}{\textbf{Certification cost}} &       & \multicolumn{5}{c}{\textbf{Certification precision}} \bigstrut\\
\cline{3-6}\cline{8-12}\multicolumn{1}{r}{} &       & \textbf{10} & \textbf{15} & \textbf{20} & \textbf{25} &       & \textbf{0.5} & \textbf{0.6} & \textbf{0.7} & \textbf{0.8} & \textbf{0.9} \bigstrut\\

\hline
\cline{1-12}

\multicolumn{1}{l}{\textbf{Panel A. Prices. D$\f$}}\\
D1 & 70.91 \\
D2    & 63.59   &  66.23 &  61.20  &   65.70&  61.23 &       &       &       &       &       & \multicolumn{1}{c}{} \bigstrut[t]\\
D3    & 75.27   & 76.11    & 73.20 & 77.22 &  74.54 &       & 72.82   & 75.72 & 76.63  & 76.08   &  75.08 \\
\cline{2-6}\cline{8-12}
D2-D3 & ***    & ***  & ***  &  ***  &  ***  &   &       &       &       &       & \multicolumn{1}{c}{} \bigstrut[b]\\

\hline
\cline{1-12}

\multicolumn{1}{l}{\textbf{Panel B. Prices. D$\e$}}\\
D1& 55.69 \\
D2  &  60.28 &   61.19 &  59.72&  63.00&  57.22 &       &       &       &       &       & \multicolumn{1}{c}{} \bigstrut[t]\\
D3  &   78.73 &  80.86 &   74.70 &  80.86 &  78.50 &       & 73.15    & 80.58   & 81.78  & 79.53  & 78.63 \\
\cline{2-6}\cline{8-12}
D2-D3 & ***  &  *** &  *** & *** & ***  &  &       &       &       &       & \multicolumn{1}{c}{} \bigstrut[b]\\
%
%
\hline
\hline
\end{tabular}%
\label{tab:pricessales}
\begin{flushleft}
\emph{Notes:} Asterisks indicate significance levels (* p\textless0.1, ** p\textless0.05, *** p\textless0.01), based on one-sided $t$-test with unequal variances.
\end{flushleft}
\end{table}

\newpage
\begin{table}[htbp]
\fontsize{9pt}{11pt}\linespread{1.2}\selectfont
  \centering
  \caption{Sellers' profit shares by cost and precision}
\begin{tabular}{p{12.715em}ccccccccccc}
\hline
\hline
\multicolumn{1}{r}{} & \multirow{2}[4]{*}{\textbf{Mean}} & \multicolumn{4}{c}{\textbf{Certification cost}} &       & \multicolumn{5}{c}{\textbf{Certification precision}} \bigstrut\\
\cline{3-6}\cline{8-12}\multicolumn{1}{r}{} &       & \textbf{10} & \textbf{15} & \textbf{20} & \textbf{25} &       & \textbf{0.5} & \textbf{0.6} & \textbf{0.7} & \textbf{0.8} & \textbf{0.9} \bigstrut\\

\hline
\cline{1-12}

\multicolumn{1}{l}{\textbf{Panel A. Net profit. D$\f$}} \\
D1 & 0.87\\
D2    & 0.73  & 0.77   &  0.74 &  0.75   & 0.68  &       &       &       &       &       & \multicolumn{1}{c}{} \bigstrut[t]\\
D3    &  1.08 &  0.98 & 0.93   & 0.97 & 1.46  &       &  0.93  & 0.98  & 0.96  & 1.62  & 0.92\\
\cline{2-6}\cline{8-12}
D2-D3 &  ***  & ***  & *** & ***  &  ****  &       &       &       &       &       & \multicolumn{1}{c}{} \bigstrut[b]\\

\hline
\cline{1-12}

\multicolumn{1}{l}{\textbf{Panel B. Net profit. D$\e$}} \\
D1& 0.56 \\
D2  & 0.56  & 0.59 & 0.55  &  0.57 &  0.53&       &       &       &       &       & \multicolumn{1}{c}{} \bigstrut[t]\\
D3  & 0.81  & 0.86 & 0.74 &  0.85 &  0.80 &       & 0.74   & 0.85 &  0.86 & 0.79  & 0.83 \\
\cline{2-6}\cline{8-12}
D2-D3 & ***   & *** &  *** &  ***  & *** &       &       &       &       &       & \multicolumn{1}{c}{} \bigstrut[b]\\
%
%
%
\hline
\hline
\end{tabular}%
\label{tab:share}
\begin{flushleft}
\emph{Notes:} Asterisks indicate significance levels (* p\textless0.1, ** p\textless0.05, *** p\textless0.01), based on one-sided $t$-test with unequal variances. Sellers' profit shares can exceed 1 if buyers' profits are negative.
\end{flushleft}
\end{table}

\newpage
\begin{table}[ht!]
\fontsize{9pt}{11pt}\linespread{1.2}\selectfont
\centering
\caption{\textbf{Rationality of buyers' participation under complete information}}
\begin{tabular}{lccccc}
\hline
\hline
& \multicolumn{2}{c}{\textbf{D2$\f$ treatment}} & & \multicolumn{2}{c}{\textbf{D2$\e$ treatment}} \\
& Rational & Irrational  & & Rational & Irrational \\
\hline
$\max\{CS_i,CS_j\}<0$ & 6 & 2 &&0 &0 \\
$CS_i> 0>CS_j$ & 29 & 3 &&2 &0\\
$\min\{CS_i,CS_j\}>0$ & 57 & 1 && 50 &0 \\
\hline
All situations &92 & 6 &&52 &0 \\
Percentages &$93.9\%$& $6.1\%$& & $100\%$ &$0\%$ \\
\hline
\hline
\end{tabular}
\label{tab:rational1}
 \begin{flushleft}
 \emph{Notes}: Consumer surplus ($CS_i$) is defined as the difference between a product's value ($v_i$) and its price ($p_i$). This analysis focuses solely on situations where both D2 sellers have certified, enabling buyers to determine the consumer surpluses associated with each seller. Rational participation is defined as buying a product with positive $CS$ if available, and not buying if no product has positive $CS$. In the first row, rational participation dictates not buying any product. In the second row, it means buying from seller $i$, and in the third row, it means buying from either seller.
\end{flushleft}
\end{table}

\begin{table}[ht!]
\fontsize{9pt}{11pt}\linespread{1.2}\selectfont
\centering
\caption{\textbf{Optimality of buyers' purchasing behavior under complete information}}
\begin{tabular}{lccccc}
\hline
\hline
& \multicolumn{2}{c}{\textbf{D2$\f$ treatment}} & & \multicolumn{2}{c}{\textbf{D2$\e$ treatment}} \\
& Optimal & Not optimal  & & Optimal & Not optimal \\
\hline
$\max\{CS_i,CS_j\}<0$ & 6 & 2 && 0&0 \\
$\max\{CS_i,CS_j\}>0$ & 81 & 11 &&51 &1 \\
\hline
All situations &87 & 13 &&51 &1 \\
Percentages &$87\%$& $13\%$& & $98.1\%$ &$1.9\%$ \\
\hline
\hline
\end{tabular}
\label{tab:Optimal}
 \begin{flushleft} \emph{Notes}: Consumer surplus ($CS_i$) is defined as the difference between a product's value ($v_i$) and its price ($p_i$). This analysis focuses solely on situations where both D2 sellers have certified, enabling buyers to determine the consumer surpluses associated with each seller. Buyers behave optimally when their choice maximizes their consumer surplus. When $\max\{CS_i,CS_j\}<0$, it is optimal not to purchase. When $\max\{CS_i,CS_j\}>0$, the optimal decision is to buy the product with the higher surplus.
 \end{flushleft}
\end{table}


\newpage
\begin{table}[htbp]
\fontsize{9pt}{11pt}\linespread{1.2}\selectfont
  \centering
  \caption{Buyers' profits and purchasing decisions by certification cost and precision}
\begin{tabular}{p{11.715em}ccccccccccc}
\hline
\hline
\multicolumn{1}{r}{} & \multirow{2}[4]{*}{\textbf{Mean}} & \multicolumn{4}{c}{\textbf{Certification cost}} &       & \multicolumn{5}{c}{\textbf{Certification precision}} \bigstrut\\
\cline{3-6}\cline{8-12}\multicolumn{1}{r}{} &       & \textbf{10} & \textbf{15} & \textbf{20} & \textbf{25} &       & \textbf{0.5} & \textbf{0.6} & \textbf{0.7} & \textbf{0.8} & \textbf{0.9} \bigstrut\\

\hline
\cline{1-12}
\multicolumn{1}{l}{\textbf{Panel A. Profit. D$\f$}}\\
D1 & 9.46\\
D2    & 16.78  & 16.07 &  16.47 & 15.44  & 19.12  &       &       &       &       &       & \multicolumn{1}{c}{} \bigstrut[t]\\
D3    & 4.48  &  3.26  & 7.41 &  2.15 &  5.11 &       & 6.54   & 3.08  & 3.43 & 4.22  & 5.14 \\
\cline{2-6}\cline{8-12}
D1-D2 &  ***    &    &    &    &    &       &       &       &       &       & \multicolumn{1}{c}{} \bigstrut[b]\\
D1-D3 &  ***    &    &    &   &    &       &       &       &       &       & \multicolumn{1}{c}{} \bigstrut[b]\\
D2-D3 &  ***    &  *** &  ***  & *** &  ***  &       &       &       &       &       & \multicolumn{1}{c}{} \bigstrut[b]\\

\hline
\cline{1-12}

\multicolumn{1}{l}{\textbf{Panel B. Profit. D$\e$}}\\
D1& 39.28\\
D2  & 36.17  & 35.06  &  36.64  &  34.14 & 38.83 &       &       &       &       &       & \multicolumn{1}{c}{} \bigstrut[t]\\
D3  & 15.33  &  11.64  &  21.60  &  11.64 & 16.44   &       & 21.08   &  11.95 & 11.58 & 17.25  & 14.08 \\
\cline{2-6}\cline{8-12}
D1-D2 &  **   &    &    &    &    &       &       &       &       &       & \multicolumn{1}{c}{} \bigstrut[b]\\
D1-D3 &  ***    &    &    &   &    &       &       &       &       &       & \multicolumn{1}{c}{} \bigstrut[b]\\
D2-D3 &  ***    &  *** &  *** &  ***   &  ***  &       &       &       &       &       & \multicolumn{1}{c}{} \bigstrut[b]\\
%
%
%
%
\hline
\hline
\end{tabular}%
\label{tab:buyersprofit}
\begin{flushleft}
\emph{Notes:} Asterisks indicate significance levels (* p\textless0.1, ** p\textless0.05, *** p\textless0.01), based on one-sided $t$-test with unequal variances.
\end{flushleft}
\end{table}

\newpage
\begin{table}[htbp]
\fontsize{9pt}{11pt}\linespread{1.2}\selectfont
  \centering
  \caption{Buyers' purchasing decisions}
    \begin{tabular}{lccccccc}
    \hline
    \hline
  \multicolumn{8}{l}{\textbf{Panel A. Buyers' purchasing decisions, D$\f$}}\\
  & {\textit{Mean}} & {\textit{Two Cert.}} & {\textit{One Cert.}} & {\textit{Zero Cert.}} &\multicolumn{3}{c}{\textit{Differences}}\\
\cline{6-8}&& \textit{(2)} & \textit{(1)} & \textit{(0)} & \textit{(2)-(1)} & \textit{(2)-(0)} & \textit{(1)-(0)}   \bigstrut\\
\cline{2-8}
    D1    & 0.79  &  no obs     &   no obs    &0.79       &       &  &  \\
    D2    & 0.89 & 0.92 & 0.94  &0.69 &  -     &*** &***  \\
    D3    & 0.74  & 0.88  & 0.79  &0.63 &   **   &***    &***   \\
    \cline{2-8}
    D1-D2 &***   &       &       &       &       &    &       \\
    D1-D3 &  *   &       &       &       &       &     &        \\
    D2-D3 &***   & -  & ***     &-      &       &    &    \\
    \hline
    \cline{1-8}
      \multicolumn{8}{l}{\textbf{Panel B. Buyers' purchasing decisions, D$\e$}}\\
  & {\textit{Mean}} & {\textit{Two Cert.}} & {\textit{One Cert.}} & {\textit{Zero Cert.}} &\multicolumn{3}{c}{\textit{Differences}}\\
\cline{6-8}&& \textit{(2)} & \textit{(1)} & \textit{(0)} & \textit{ (2)-(1)} & \textit{(2)-(0)} & \textit{(1)-(0)}   \bigstrut\\
\cline{2-8}
    D1    & 0.98  &  no obs     &   no obs    &0.98       &       &  &  \\
    D2    & 1.00 & 1.00 & 1.00  &1.00 &  -     &- &-  \\
    D3    & 0.93  & 1.00  & 0.92  &0.93 &   ***   &***   &-   \\
     \cline{2-8}
    D1-D2 &**   &       &       &       &       &    &       \\
    D1-D3 &  ***   &       &       &       &       &     &        \\
    D2-D3 &***   & -  & ***     &***      &       &    &    \\
    \hline
\hline
    \end{tabular}%
  \label{tab:buyers}%
 \begin{flushleft} \emph{Notes}: Asterisks indicate significance levels (* p\textless0.1, ** p\textless0.05, *** p\textless0.01), based on one-sided $t$-test with unequal variances.  Non-significant results are marked with a dash ("-"). "Two Cert.," "One Cert.," and "Zero Cert." represent scenarios where two sellers, one seller, or no sellers disclose certification outcomes, respectively.
\end{flushleft}
\end{table}%

\newpage
\begin{table}[htbp]
\fontsize{9pt}{11pt}\linespread{1.2}\selectfont
  \centering
  \caption{Welfare by certification cost and precision}
\begin{tabular}{p{9.715em}ccccccccccc}
\hline
\hline
\multicolumn{1}{r}{} & \multirow{2}[4]{*}{\textbf{Mean}} & \multicolumn{4}{c}{\textbf{Certification cost}} &       & \multicolumn{5}{c}{\textbf{Certification precision}} \bigstrut\\
\cline{3-6}\cline{8-12}\multicolumn{1}{r}{} &       & \textbf{10} & \textbf{15} & \textbf{20} & \textbf{25} &       & \textbf{0.5} & \textbf{0.6} & \textbf{0.7} & \textbf{0.8} & \textbf{0.9} \bigstrut\\

\hline
\cline{1-12}

\multicolumn{1}{l}{\textbf{Panel A. Net Welfare. D$\f$}}\\
D1   &29.10 \\
D2    & 29.69  & 30.14 &  31.01&  28.79 &  28.83&       &       &       &       &       & \multicolumn{1}{c}{} \bigstrut[t]\\
D3    & 24.65  &  24.32 & 26.88 & 24.55 &  22.87&       &24.06    &23.89   &25.01 & 24.98 &25.32  \\
\cline{2-6}\cline{8-12}
D1-D2 &  -   &    &    &    &    &       &       &       &       &       & \multicolumn{1}{c}{} \bigstrut[b]\\
D1-D3 &  ***   &    &    &   &    &       &       &       &       &       & \multicolumn{1}{c}{} \bigstrut[b]\\
D2-D3 & ***    & ** &  ** & ** &  *** &       &       &       &       &       & \multicolumn{1}{c}{} \bigstrut[b]\\

\hline
\cline{1-12}

\multicolumn{1}{l}{\textbf{Panel B. Net Welfare. D$\e$}}\\
D1   &  44.26 \\
D2    &  40.99 & 42.63&  40.99 &  39.33&  41.03&       &       &       &       &       & \multicolumn{1}{c}{} \bigstrut[t]\\
D3    &  39.70 &  38.70 & 41.07 & 37.29 & 41.72 &       & 39.76   & 36.73  & 40.34 & 41.19 & 40.46 \\
\cline{2-6}\cline{8-12}
D1-D2 &  ***   &    &    &    &    &       &       &       &       &       & \multicolumn{1}{c}{} \bigstrut[b]\\
D1-D3 & ***  &    &    &   &    &       &       &       &       &       & \multicolumn{1}{c}{} \bigstrut[b]\\
D2-D3 & * &  **& -  & -&  - &       &       &       &       &       & \multicolumn{1}{c}{} \bigstrut[b]\\
\hline
\cline{1-12}

\multicolumn{1}{l}{\textbf{Panel C. Gross Welfare. D$\f$}}\\
D1 & 29.10\\
D2    & 34.33  & 33.31& 35.34&  33.57&  35.08 &       &       &       &       &       & \multicolumn{1}{c}{} \bigstrut[t]\\
D3    & 28.40 &  27.03 & 30.06& 29.05& 27.45 &       & 26.74  & 26.89  & 29.46& 28.97 & 29.93 \\
\cline{2-6}\cline{8-12}
D1-D2 &  ***   &    &    &    &    &       &       &       &       &       & \multicolumn{1}{c}{} \bigstrut[b]\\
D1-D3 &  -   &    &    &   &    &       &       &       &       &       & \multicolumn{1}{c}{} \bigstrut[b]\\
D2-D3 &   ***  & *** & ***  &** &  *** &       &       &       &       &       & \multicolumn{1}{c}{} \bigstrut[b]\\

\hline
\cline{1-12}

\multicolumn{1}{l}{\textbf{Panel D. Gross Welfare. D$\e$}}\\
D1 & 44.26 \\
D2    &  45.52 & 45.96&  45.78&  45.17&  45.19 &       &       &       &       &       & \multicolumn{1}{c}{} \bigstrut[t]\\
D3    &  41.87 &40.30  & 43.92& 39.29&  43.97&       &41.26    & 38.66  & 42.21& 43.75  &43.46  \\
\cline{2-6}\cline{8-12}
D1-D2 &   **  &    &    &    &    &       &       &       &       &       & \multicolumn{1}{c}{} \bigstrut[b]\\
D1-D3 &    ** &    &    &   &    &       &       &       &       &       & \multicolumn{1}{c}{} \bigstrut[b]\\
D2-D3 &   ***  & *** &  - & ***& -  &       &       &       &       &       & \multicolumn{1}{c}{} \bigstrut[b]\\
\hline
\hline
\end{tabular}%
\label{tab:welfarecalpha}
\begin{flushleft}
\emph{Notes}: Asterisks indicate significance levels (* p\textless0.1, ** p\textless0.05, *** p\textless0.01), based on one-sided $t$-test with unequal variances.
\end{flushleft}
\end{table}

\begin{table}[ht!]
\fontsize{9pt}{11pt}\linespread{1.2}\selectfont
  \centering
  \caption{Robutness check for sellers' profit, buyers' profit, and welfare}
\begin{tabular}{p{8.715em}lllll}
\hline
\hline
\multicolumn{1}{r}{} & \multirow{2}[4]{*}{\textbf{Mean}} & \multicolumn{4}{c}{\textbf{Certification cost}}  \bigstrut\\
\cline{3-6}\multicolumn{1}{r}{} &       & \textbf{10} & \textbf{15} & \textbf{20} & \textbf{25} \bigstrut\\
\hline

\multicolumn{1}{l}{\textbf{Net Profit. D$\f$}}\\

D1   & (47.6, 8.6, 28.1) \\
D2   & (43.1, 20.7, 31.9) & (43.2, 23.1, 33.2) & (45.3, 18.5, 31.9) & (38.1, 22.2, 30.1) & (42.6, 20.7, 31.7) \bigstrut[t]\\
D3   & (45.1, 6.1, 25.6) & (45.9, 2.6, 24.3) & (48.0, 9.2, 28.6) & (39.6, 2.0, 20.8) & (43.9, 7.4, 25.6) \\
\cline{2-6}
D1-D2 &  (-,***,**)   &    &    &    &     \\
D1-D3 &  (-,***,-)   &    &    &   &   \\
D2-D3 &  (-,***,***)  & (-,***,***) & (-,***,*)  &(-,***,*) & (-,***,**)  \bigstrut[b]\\
\hline

\multicolumn{1}{l}{\textbf{Net Profit. D$\e$}}\\
D1   & (40.7, 48.0, 44.3) \\
D2   & (43.2, 39.8, 41.5) & (46.5, 36.8, 41.6) & (42.0, 40.5, 41.3) & (42.5, 42.0, 42.3) & (42.4, 40.1, 41.3) \bigstrut[t]\\
D3   & (62.3, 21.6, 42.0) & (64.8, 17.1, 40.9) & (62.3, 24.4, 43.3) & (59.5, 20.4, 39.9) & (62.7, 22.0, 42.3) \\
\cline{2-6}
D1-D2 &  (-,***,***)   &    &    &    &     \\
D1-D3 &  (***,***,**)   &    &    &   &   \\
D2-D3 &  (***,***,-)  & (-,***,-) & (-,***,**)  &(-,***,-) & (-,***,-)  \bigstrut[b]\\
\hline
\multicolumn{1}{l}{\textbf{Gross Profit. D$\f$}}\\

D1   & (47.6, 8.6, 28.1) \\
D2   & (52.9, 20.7, 36.8) & (49.3, 23.1, 36.2) & (54.2, 18.5, 36.3) & (46.9, 22.2, 34.6) & (55.1, 20.7, 37.9) \bigstrut[t]\\
D3   & (51.6, 6.1, 28.9) & (51.4, 2.6, 27.0) & (54.7, 9.2,  32.0) & (48.4, 2.0, 25.2) & (49.1, 7.4, 28.2) \\
\cline{2-6}
D1-D2 &  (-,***,***)   &    &    &    &     \\
D1-D3 &  (-,***,-)   &    &    &   &   \\
D2-D3 &  (-,***,***)  & (-,***,***) & (-,***,*)  &(-,***,*) & (-,***,***)  \bigstrut[b]\\
\hline

\multicolumn{1}{l}{\textbf{Gross Profit. D$\e$}}\\
D1   & (40.7, 48.0, 44.3) \\
D2   & (50.2, 39.8, 45.0) & (52.1, 36.8, 44.4) & (49.5, 40.5, 45.0) & (47.5, 42.0, 44.8) & (50.7, 40.1, 45.4 ) \bigstrut[t]\\
D3   & (67.1, 21.6, 44.3) & (68.8, 17.1, 42.9) & (67.9, 24.4, 46.1) & (65.5, 20.4, 42.9) & (65.2, 22.0, 43.6) \\
\cline{2-6}
D1-D2 &  (*,***,-)   &    &    &    &     \\
D1-D3 &  (***,***,-)   &    &    &   &   \\
D2-D3 &  (**,***,-)  & (-,***,-) & (-,***,-)  &(-,***,-) & (-,***,-)  \bigstrut[b]\\
\hline
\hline
\end{tabular}%
\label{tab:robustnesscheck}
\begin{flushleft}
\emph{Notes}: Asterisks indicate significance levels (* p\textless0.1, ** p\textless0.05, *** p\textless0.01), based on one-sided $t$-test with unequal variances. In each triple of numbers, the first, second, and third entries represent sellers' profit, buyers' profit, and welfare, respectively. In a tripleton of significance levels, the first, second, and third asterisks correspond to significance levels for comparisons of sellers' profit, buyers' profit, and welfare comparison, respectively. The analysis is restricted to the subsample of data from period 11 onward.
\end{flushleft}
\end{table}

\clearpage
\noindent \Large{\textbf{APPENDIX}}
\normalsize
\begin{appendices}


\section{Proofs}

\textbf{Proof of Proposition \ref{pro:WTPs}:}
Let $p(v|s_i)>p(v|s_j)$.  By the standard argument, in the equilibrium of this subgame, seller $j$ sets a price $0$, while seller $i$ sets a price $p(v|s_i)-p(v|s_j)>0$. All buyers
purchase from seller $i$, and the sellers' profits are $\pi_i(s_i,s_j)=p(v|s_i) - p(v|s_j)$ and
$\pi_j(s_i,s_j)=0$. Note that even though buyers are indifferent between the two sellers, the equilibrium outcome involves all buyers purchasing from seller $i$. A scenario where buyers randomly choose between sellers would not constitute an equilibrium, as seller $i$  would have a profitable deviation of setting a price slightly below  $p(v|s_i)-p(v|s_j)$.
Let $p(v|s_i)=p(v|s_j)$, then both sellers set a price of 0.
Regardless of how buyers split between the two sellers, both sellers earn zero profit.$~\blacksquare$

\smallskip
\smallskip
\noindent \textbf{Proof of Proposition \ref{pro:2Tsame}:} Let $\Pi_i^\alpha$ and $\Pi_i^1$ denote the ex ante expected profit of seller $i$ under noisy certification with precision $\alpha$ and
accurate certification, respectively.
Due to firms' symmetry, it is sufficient to compare the profits of seller $i$.

Under accurate certification, $\Pi^1_i=q_H(1-q_H)\Delta v-cq_H$. Player $i$ earns positive profit only when his type is $v_H$ and the competitor's type is $v_L$, which happens with probability $q_H(1-q_H)$. The profit earned in this $(s_H,ND)$-subgame is $\Delta v=v_H-v_L$, and the expected certification cost is $cq_H$.

Under noisy certification,
\[
\Pi_i^\alpha=\Pr(s_i=s_H)\Pr(s_j=ND)(p(v|s_H)-p(v|ND))-cq_H.
\]
Given the sellers' strategy, $\Pr(s_i=s_H)=q_H(\alpha+q_H(1-\alpha))$. Either, with probability $\alpha$, the certification succeeds and results in $s_H$ or, with probability $1-\alpha$, the certification fails and it returns the certification outcome $s_H$ with the prior probability of $q_H$. Additionally, $\Pr(s_j=ND)=(1-q_H)[1+q_H(1-\alpha)]$. Either, with probability $1-q_H$, the competitor does not certify as his type is low or, with probability $q_H(1-q_H)(1-\alpha)$, the competitor certifies but the certification fails and results in outcome $s_L$.

Buyers' WTP conditional on $s_H$ and $ND$ can be calculated as follows:
\begin{eqnarray*}
\Pr(v_L|ND)&=&\frac{1}{1+q_H(1-\alpha)}\\
E(v|ND)&=&v_H-\Pr(v_L|ND)\cdot \Delta v\\
p(v|s_H) &=& v_H\\
p(v|ND)&=&E(v|ND)+\Pr(v_L|ND)\cdot b\cdot (v_L-E(v|ND)).
\end{eqnarray*}
Combining all the terms, we get:
\[
\Pi^\alpha_i=q_H(\alpha+q_H(1-\alpha))\left[\Delta v (1-q_H) \Big(1+b\cdot \frac{q_H(1-\alpha)}{1+q_H(1-\alpha)}\Big)\right]-cq_H.
\]
The inequality $\Pi^\alpha_i> \Pi^1_i$ is equivalent to
\[
(\alpha+q_H(1-\alpha))\left[1+b\cdot \frac{q_H(1-\alpha)}{1+q_H(1-\alpha)}\right] > 1,
\]
which, as one can show, is equivalent to (\ref{eq:sameEq}). $\blacksquare$

\noindent\textbf{Proposition A1 [Equilibrium Existence. Two Types]} \textit{Let $\displaystyle \Lambda=\frac{c}{\Delta v(1-q_H)}$ where $\Delta v=v_H-v_L$. Let $X_1= (0,1]$ and $X_2= (\underline{\gamma}, \bar{\gamma}]$, where}
\[
\underline{\gamma}=(1-\alpha)q_H\left[1+b\cdot \frac{(1-\alpha)q_H}{1+(1-\alpha)q_H}\right]; \text{ and }\bar{\gamma}=[\alpha+(1-\alpha)q_H]\left[1+b\cdot \frac{(1-\alpha)q_H}{1+(1-\alpha)q_H}\right].
\]
\noindent\textit{Then:}

\textit{ i) $(\Co,\Do)=(\{v_H\},\{s_H\})$ is an equilibrium in the 100\%-case if and only if $\Lambda\in X_1$;}

\textit{ ii) $(\Ca,\Da)=(\{v_H\},\{s_H\})$ is an equilibrium in the $\alpha$-case if and only if $\Lambda\in X_2$;}

\begin{proof}
\noindent \textit{\underline{Part i):}} Type $v_L$ will never find it optimal to deviate and certify when certification is accurate. Following equilibrium strategy, type $v_H$ earns
profit of $(1 - q_H)(v_H - v_L)- c$. Hence, type $v_H$ will find it optimal to certify when $\Lambda=\frac{c}{\Delta v(1-q_H)}\le 1$ or $\Lambda\in X_1$.

\bigskip
\noindent \textit{\underline{Part ii):}} Equilibrium profits of type $v_H$ and  $v_L$ are as follows:
\[
\Pi^\alpha(v_H)=\Pr(s_H|v_H) \Pr(ND)(p(v|s_H)-p(v|ND))-c,  \text{ and   } \Pi^\alpha (v_L)=0.
\]
If type $v_H$ deviates and does not certify, its profit is zero. If type $v_L$ deviates,
certifies and discloses $s_H$, its profit is:
\[
\Pi^{\alpha, dev}(v_L)=\Pr(s_H|v_L) \Pr(ND)(p(v|s_H)-p(v|ND))-c.
\]
In equilibrium, it must be that $\Pi^\alpha (v_H)\ge 0> \Pi^{\alpha, dev}(v_L)$. Given the sellers' strategy, $\Pr(s_H|v_H) =\alpha+(1-\alpha)q_H$, and $\Pr(s_H|v_L)=(1-\alpha)q_H$. The remaining terms have been calculated above. One can show that $\Pi^\alpha (v_H)\ge 0> \Pi^{\alpha, dev}(v_L)$ is equivalent to $\Lambda \in X_2$.\end{proof}

\bigskip
\textbf{Proof of Proposition \ref{pro:D2RN}:} Note that in any equilibrium of the environment with the 100\%-certification, the disclosing decision is trivial as
all certifying types disclose: $v_C=v_D$. Since sellers face no uncertainty about the certification outcome, they pay the certification fee if and only if they plan to disclose the certification outcome.

First, we prove that $(\Crn,\Drn)$ is an equilibrium. Let $\qND$ denote the probability that seller $j$ does not disclose in equilibrium. Given
$(\Crn,\Drn)$, it is equal to $q_1+\dots+q_{m-1}$. Let $p(v|ND)$ be buyers' WTP conditional
on non-disclosure. Note that $p(v|ND)<p(v|s=v_k)=v_k$ for any $k\ge m$ and, therefore, $E\pi_i(ND)=0$. All types $v_k\ge v_m$ find it optimal to certify:
\begin{eqnarray*}
E\Pi_i(v_k)&=&\sum_{j=m}^k q_j (v_k-v_j)+ \qND(v_k-E(v|ND))-c\\
&=&\sum_{j=m}^k q_j (v_k-v_j)+ \qND\left(v_k-\sum_{j=1}^{m-1}\frac{1}{\qND} q_jv_j\right)-c\\
&=&\sum_{j=m}^k q_j (v_k-v_j)+ \sum_{j=1}^{m-1} q_j(v_k-v_j)-c=\sum_{j=1}^k q_j (v_k-v_j)-c\ge 0.
\end{eqnarray*}

All types $v_k<v_m$ find it optimal not to certify. The equilibrium profit of $v_k$ is zero. If type $v_k$ deviates and certifies then its expected deviation profit is $E\Pi^{dev}_i(v_k)=\sum_{j=1}^k q_j (v_k-v_j)-c<0.$ Thus, $(\Crn,\Drn)$ is an equilibrium.

Next, we prove that this is a unique equilibrium. The proof is based on the observation that the expression for $E\Pi_i(v_k)$, derived above, does not depend on $m$. That is, it does not depend
on the lowest certifying type. It only depends on whether type $v_k$ certifies or not.

Consider an equilibrium where $v_C=v_D=v_l$. If $l>m$, then it cannot be an equilibrium since $v_{l-1}$ has a strictly profitable deviation of certifying
and earning a positive profit. Note that we use here the off-equilibrium beliefs refinement, which is that disclosing message $s=v_{l-1}$ leads to buyers' beliefs $\Pr(v|s=v_{l-1})=v_{l-1}$.
If $l<m$, then type $l$ would earn a negative profit in such an equilibrium
and has a profitable deviation of not certifying.

The last part of the Proposition follows immediately from the expression for $E\Pi_i(v_k)$: if such $v_m$ does not exist, then no certifying type could earn a non-negative payoff in equilibrium.$~\blacksquare$


\bigskip
\noindent \textbf{Proof of Proposition \ref{pro:RankingRN}}: The proof is based on two lemmas.
Lemma \ref{le:higherprofit} shows that equilibrium $(\Crn,\Drn)$ is more profitable under the 100\%-certification than any (not necessarily equilibrium) strategy profile $(C,D)$ such
that $v_C=v_D$. Lemma \ref{le:RNsame} shows that any such strategy profile $(C,D)$
is more profitable under the 100\%-certification than under the $\alpha$-certification.
The combination of the two Lemmas proves the Proposition statement.

Throughout the proof, when we say that a strategy profile, e.g., $(C, D)$, is not necessarily an equilibrium, we mean that sellers' certification and disclosure strategies are not necessarily optimal. Nevertheless, we do assume that buyers' beliefs and WTP are determined by the sellers' strategies, and that buyers make optimal purchasing decisions, ensuring that Proposition \ref{pro:WTPs} holds.

\begin{lemma}
Let $(C,D)$ be a (not necessarily equilibrium) strategy profile such that $v_C=v_D=v_l$. Under the 100\%-certification $(C,D)$ is weakly less profitable than $(\Crn,\Drn)$.
\label{le:higherprofit}
\end{lemma}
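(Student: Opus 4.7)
\emph{Proof plan.} The plan is to show that, under $100\%$-certification, the joint expected profit achievable with any threshold profile $(C,D)$ of the form $v_C=v_D=v_l$ can be written as $\Pi(v_l)=2\sum_{k\ge l}q_k f(k)$, where
\[
f(k):=\sum_{j=1}^{k}q_j(v_k-v_j)-c
\]
is the expected profit of a certifying type $v_k$. The key observation will be that $f(k)$ depends on the seller's own type but not on the threshold $l$. Since $v_m$ is by definition the smallest $k$ with $f(k)\ge 0$, choosing $v_l=v_m$ picks up exactly those $k$ with $f(k)\ge 0$ and discards the rest, which maximizes the sum.

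First, I would fix an arbitrary (not necessarily equilibrium) profile $(C,D)$ with $v_C=v_D=v_l$. Because certification is accurate, disclosure pins down quality, so $p(v\mid s=v_k)=v_k$ for $k\ge l$; non-disclosure pools the types below $v_l$, giving $p(v\mid ND)=E(v\mid v<v_l)$. Applying Proposition \ref{pro:WTPs} subgame-by-subgame, the expected profit of a certifying type $v_k$ is
\[
\sum_{j=l}^{k-1}q_j(v_k-v_j)+\Bigl(\sum_{j<l}q_j\Bigr)\bigl(v_k-E(v\mid v<v_l)\bigr)-c.
\]
A direct simplification, mirroring the calculation in the proof of Proposition \ref{pro:D2RN}, uses $\sum_{j<l}q_j\cdot E(v\mid v<v_l)=\sum_{j<l}q_jv_j$ to collapse the non-disclosure term into the ``missing'' low-type contributions, yielding exactly $f(k)$ with no residual dependence on $l$. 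A non-certifying type $v_k$ (with $k<l$) earns $0$: against a certifying rival the rival's disclosed quality strictly exceeds $p(v\mid ND)$ so the rival outbids, while against a non-certifying rival both WTPs coincide and Proposition \ref{pro:WTPs} yields zero.

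Next, aggregating over types gives $\Pi(v_l)=2\sum_{k\ge l}q_k f(k)$. I would then verify that $f$ is strictly increasing, since $f(k+1)-f(k)=(v_{k+1}-v_k)\sum_{j=1}^{k}q_j>0$. By definition of $v_m$, this means $f(k)<0$ for $k<m$ and $f(k)\ge 0$ for $k\ge m$. Therefore
\[
\Pi(v_m)-\Pi(v_l)=
\begin{cases}
-2\sum_{k=l}^{m-1}q_k f(k)\ge 0, & l<m,\\
\phantom{-}2\sum_{k=m}^{l-1}q_k f(k)\ge 0, & l>m,
\end{cases}
\]
with equality at $l=m$, delivering the desired weak ranking.

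The main hurdle will be the simplification in step one: carefully accounting for the non-disclosure contribution $\bigl(\sum_{j<l}q_j\bigr)\bigl(v_k-E(v\mid v<v_l)\bigr)$ and checking that it precisely completes the sum over low types, so that all dependence on $l$ vanishes and the per-type profit reduces to $f(k)$. The rest is straightforward bookkeeping given the monotonicity of $f$ and the characterization of $v_m$.
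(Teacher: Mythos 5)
Your proof is correct and follows essentially the same route as the paper: both arguments rest on the observation that a certifying type's expected profit $\sum_{j=1}^{k}q_j(v_k-v_j)-c$ is independent of the threshold $v_l$ (because the non-disclosure term collapses into the missing low-type contributions), so moving the threshold to $v_m$ simply adds types with nonnegative per-type profit and removes types with negative per-type profit. Your explicit check that $f(k+1)-f(k)=(v_{k+1}-v_k)\sum_{j=1}^{k}q_j>0$, which guarantees $f(k)\ge 0$ for all $k\ge m$ and not just at $k=m$, is a detail the paper leaves implicit, but otherwise the two proofs coincide.
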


\begin{proof}
First, suppose $v_l>v_m$. Types with $v_k<v_m$ earn zero profit under both strategy profiles
as they do not certify. According to Proposition \ref{pro:D2RN}, types with $v_k\ge v_l$ also earn equal profits under both strategy profiles: $E\Pi_i(v_k)=\sum_{j=1}^k q_j (v_k-v_j)-c$. Finally, consider types $v_k$ such that $m\le k<l$. Under $(C,D)$, type $v_k$ earns zero profit as it does not certify, while under $(\Crn,\Drn)$, it earns a positive profit. Thus, the sellers' aggregated profit is higher under $(\Crn,\Drn)$. Next, suppose $v_l<v_m$. All types such that $v<v_l$ and $v\ge v_m$ earn equal expected profits under both strategy profiles: 0 and $E\pi(v_k)=\sum_{j=1}^k q_j (v_k-v_j)-c$, respectively. Consider type $v_k$ such that $l\le k<m$. Under $(C,D)$, type $v_k$ earns negative profit
since it certifies and $k<m$. Under $(\Crn, \Drn)$, it earns zero profit as it does not certify.
Thus, the sellers' aggregated profit is higher under $(\Crn,\Drn)$.
\end{proof}

\begin{lemma}
Let $(C,D)$ be a (not necessarily equilibrium) strategy profile such that $v_C=v_D=v_l$. Then, sellers' aggregate expected profit under the $\alpha$-certification is a strictly increasing function of $\alpha$.
\label{le:RNsame}
\end{lemma}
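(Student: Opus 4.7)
The plan is to derive a closed-form expression for the sellers' aggregate expected profit under the $\alpha$-technology and show by direct differentiation that it is strictly increasing in $\alpha$. Since the strategy profile $(C,D)$ is fixed, expected certification expenditure $c(1-Q_<)$, with $Q_<=\sum_{j:v_j<v_l}q_j$, is independent of $\alpha$, so it suffices to work with gross aggregate profit. Let $\bar v_C$ and $\bar v_{NC}$ denote the prior-expected values conditional on certifying and not certifying.

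First I would compute the unconditional distribution of messages. Using the noise structure and $v_C=v_D=v_l$, a short calculation gives $\rho_i=q_i\beta(\alpha)$ for every $i\ge l$ with $\beta(\alpha)=(1-Q_<)+\alpha Q_<$, and $\rho_{ND}=Q_<\delta(\alpha)$ with $\delta(\alpha)=1+(1-\alpha)(1-Q_<)$. Bayes' rule then yields
\[
E(v\mid s_i)=\frac{\alpha v_i+(1-\alpha)(1-Q_<)\bar v_C}{\beta(\alpha)},\qquad E(v\mid ND)=\frac{\bar v_{NC}+(\delta(\alpha)-1)\bar v_C}{\delta(\alpha)},
\]
so $E(v\mid s_j)-E(v\mid s_i)=(\alpha/\beta(\alpha))(v_j-v_i)$ for $l\le i<j\le n$. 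I would then verify $E(v\mid s_i)\ge E(v\mid ND)$ for every $i\ge l$ by showing that the posterior of $v$ given $s_l$ first-order stochastically dominates that given $ND$; after comparing tail probabilities the inequality reduces to $\delta(\alpha)\ge\beta(\alpha)$, which is equivalent to $1-\alpha\ge 0$, and monotonicity of $E(v\mid s_i)$ in $i$ handles the remaining cases.

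With this monotonicity secured, Proposition \ref{pro:WTPs} lets me write the joint gross profit as a sum over message pairs and split it into a ``both disclose'' block and a ``one discloses'' block. The former equals $2\alpha\beta(\alpha)A$ with $A=\sum_{l\le i<j\le n}q_iq_j(v_j-v_i)\ge 0$. For the latter, the law of total expectation gives $\sum_{i\ge l}\rho_i\bigl(E(v\mid s_i)-E(v\mid ND)\bigr)=\bar v-E(v\mid ND)$, and substituting the formulas above yields $\rho_{ND}[\bar v-E(v\mid ND)]=\beta(\alpha)\,Q_<(1-Q_<)(\bar v_C-\bar v_{NC})$; hence this block contributes $2\beta(\alpha)B$ with $B=Q_<(1-Q_<)(\bar v_C-\bar v_{NC})\ge 0$. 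Putting the pieces together,
\[
\Pi_{\mathrm{joint}}(\alpha)=2\beta(\alpha)\bigl[\alpha A+B\bigr].
\]

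Differentiating and using $\beta'(\alpha)=Q_<$ gives $\Pi'_{\mathrm{joint}}(\alpha)=2Q_<B+2A\bigl[(1-Q_<)+2\alpha Q_<\bigr]$, which is strictly positive because the primitives rule out $A=B=0$: if $l<n$ then $A>0$, while if $l=n$ then $C=\{v_n\}$, so $Q_<>0$ and $\bar v_C=v_n>\bar v_{NC}$ force $B>0$. The main obstacle will be the FOSD check $E(v\mid s_l)\ge E(v\mid ND)$ needed to drop the absolute value in the one-disclosure term; once that is in hand, the law-of-total-expectation collapse eliminates all posterior formulas and strict monotonicity follows by inspection.
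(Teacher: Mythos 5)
Your proof is correct and follows essentially the same route as the paper's: the same decomposition into a ``both disclose'' block and a ``one discloses'' block, the same posterior formulas, and the same law-of-total-expectation collapse of the one-disclosure term to $\rho_{ND}\bigl(\bar v-E(v\mid ND)\bigr)$. The only differences are presentational and in your favor: you package the two blocks into the single closed form $2\beta(\alpha)[\alpha A+B]$ and differentiate once (where the paper instead argues each block is separately nondecreasing with at least one strictly increasing), and you explicitly verify $E(v\mid s_i)\ge E(v\mid ND)$ via the first-order stochastic dominance comparison $\delta(\alpha)\ge\beta(\alpha)$, a step the paper merely asserts.
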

\begin{proof}
Let $\qD=q_l+\dots+q_n$ and $\qND=1-\qD$. Define $\rho_i=\Pr(s=v_i)=\alpha q_i+(1-\alpha)\qD q_i$, where $i\ge l$, as the \textit{unconditional} probability of observing the certification outcome $s=v_i$. Let $\pND=\Pr(ND)=\qND+(1-\alpha)\qD \qND$ be the \textit{unconditional} probability of non-disclosure. Non-disclosure occurs if either $v=v_k$, for $k<l$, which happens with probability $q_k$; or if $v=v_k$, for $k\ge l$, and the certification outcome is wrong with a realization
below $v_l$, which happens with probability $q_k(1-\alpha)\qND$. The sum of these probabilities yields $\pND$. Lastly, let $\pD=1-\pND$ denote the \textit{unconditional} probability of disclosure.

Buyers' WTP given $s=v_k$, where $k\ge l$, is the expected quality conditional on $s=v_k$:
\begin{equation}
p(v|s=v_k)=E(v|s=v_k)=\alpha\frac{q_k}{\rho_k}v_k+\sum_{j=l}^n (1-\alpha)\frac{q_k}{\rho_k}q_jv_j.
\label{eq1:Lemma1}
\end{equation}
Buyers' WTP conditional on non-disclosure is:
$$
p(v|ND)=E(v|ND)=\sum_{j=1}^{l-1} \frac{1}{\pND}q_j v_j+\sum_{j=l}^n (1-\alpha)\frac{\qND}{\pND}q_jv_j.
$$
Each seller sends messages $(ND, s=v_l,\dots, s=v_n)$ with
probabilities $(\pND,\rho_l,\dots, \rho_n)$. For the sake of brevity, let $p(v|s_i)$ denote
$p(v|s=v_i)$. By Proposition \ref{pro:WTPs}, the joint aggregated profit of the sellers can be expressed as:
\begin{equation*}
\underbrace{\pND^2 (p(v|ND)-p(v|ND))}_{\mbox{\footnotesize neither seller discloses}} + \underbrace{\sum_{i,j=l}^n \rho_i \rho_j |p(v|s_i)-p(v|s_j)|}_{\mbox{\footnotesize both sellers disclose}} +\underbrace{2\sum_{j=l}^n \rho_j \pND(p(v|s_j)-p(v|ND))}_{\mbox{\footnotesize one seller discloses}}.
\label{eq:jointprofitRN}
\end{equation*}
In this expression, the first term corresponds to the joint profit when neither seller discloses, the second term is the joint profit when both sellers disclose, and the third term is the joint profit when exactly one seller discloses. For the second term, we took into account that Proposition \ref{pro:WTPs} implies that $\pi_i(s_i,s_j)+\pi_j(s_i,s_j)=|p(v|s_i)-p(v|s_j)|$. For the last term, we took into account that $p(v|s_j)-p(v|ND)\ge 0$, meaning that a non-disclosing seller earns zero.

\textbf{Term 1:} Neither seller discloses, the joint profit is zero.

\textbf{Term 2:} Both sellers disclose. We will prove that Term 2 is an
increasing function of $\alpha$. We can re-write Term 2 as:
\begin{equation*}
\sum_{i,j=l}^n \rho_i \rho_j |p(v|s_i)-p(v|s_j)|=\pD^2 \sum_{i,j} \frac{\rho_i}{\pD} \frac{\rho_j}{\pD} |p(v|s_i)-p(v|s_j)|
=\pD^2 \sum_{i,j} \frac{q_i}{\qD} \frac{q_j}{\qD}|p(v|s_i)-p(v|s_j)|,
\label{eq2: Lemma1}
\end{equation*}
where the second equality uses $\rho_i/q_i=\rho_j/q_j=\pD/\qD$ for all $i,j\ge l$ and $\alpha$.
The terms $q_i/\qD$ and $q_j/\qD$ do not depend on $\alpha$. The term $\pD=1-\pND$ is
an increasing function of $\alpha$ because $\pND$ is a decreasing function of $\alpha$. Finally, the term $|p(v|s_i)-p(v|s_j)|$ is also an increasing function of $\alpha$. Indeed, from
(\ref{eq1:Lemma1}), and the fact that $\rho_i/q_i=\rho_j/q_j=\pD/\qD$ we have
\[
|p(v|s_i)-p(v|s_j)|=\Bigg|\alpha\frac{q_i}{\rho_i}v_i-\alpha\frac{q_j}{\rho_j}v_j\Bigg|=\alpha \frac{\qD}{\pD}|v_i-v_j|,
\]
which is a strictly increasing function of $\alpha$, whenever $i\ne j$. Thus, Term 2 is a strictly increasing function of $\alpha$ unless $l=n$.

\textbf{Terms 3}: Only one seller discloses. We will prove that Term 3 is a strictly increasing function of $\alpha$. Ignoring the constant 2, we can re-write it as:
\begin{eqnarray*}
\sum_{j=l}^n \rho_j \pND(p(v|s_j)-p(v|ND))&=&\pND^2 (p(v|ND)-p(v|ND))+\sum_{j=l}^n \rho_j \pND(p(v|s_j)-p(v|ND))\\
&=&\pND(Ev - p(v|ND)).
\end{eqnarray*}
The first equality holds because we just added a zero. The second equality holds due to
the fact that certification is unbiased, $\pND E(v|ND)+\sum_{j=l}^n \rho_j E(v|s_j) = Ev$,
and that buyers are risk neutral, so $p(v|ND)=E(v|ND)$ and $p(v|s_j)=E(v|s_j)$. We can re-write $\pND Ev$ as:
$$
\pND Ev =  \sum_{j=1}^{l-1} q_j Ev + (1-\alpha)\qND \sum_{j=l}^n q_j Ev,
$$
and we can re-write $\pND p(v|ND)$ as:
\begin{eqnarray*}
\pND p(v|ND)&=&\pND\sum_{j=1}^n \Pr(v_j|ND)v_j=\pND\sum_{j=1}^n \frac{\Pr(ND|v_j)\Pr(v=v_j)}{\Pr(ND)}v_j= \\
&=&\sum_{j=1}^{l-1} q_j v_j + (1-\alpha)\qND \sum_{j=l}^n q_j v_j,
\end{eqnarray*}
where we use that  $\pND=\Pr(ND)$, $\Pr(ND|v_j)=1$ if $j<l$, and $\Pr(ND|v_j)=(1-\alpha)\qND$ otherwise.
Thus,
$$
\pND(Ev - p(v|ND))=\sum_{j=1}^{l-1} q_j (Ev-v_j) + (1-\alpha)\qND \sum_{j=l}^n q_j (Ev-v_j).
$$
The first term does not depend on $\alpha$. The second term is $(1-\alpha)\qND$ times a negative term:
$$
\sum_{j=l}^n q_j (Ev-v_j)=Ev\sum_{j=l}^nq_j -\sum_{j=l}^nq_jv_j=\qD Ev - \qD E(v|v\ge v_l)\le 0.
$$
The inequality is strict unless $l=1$. Recall that Term 2 is an increasing function of $\alpha$ and is strictly increasing unless $l=n$. Thus, the sum of Terms 2 and 3 is a strictly increasing function of $\alpha$.\end{proof}

Combining the two Lemmas proves the Proposition statement. $\blacksquare$

\medskip
\textbf{Proof of Proposition \ref{pro:LAgeneral}:}
Let $q_D=q_l+\dots+q_{n}$ denote the probability of disclosure, and let $\qND=1-q_D$ denote the probability of non-disclosure. Define $E_C=E[v|v\ge v_l]=\frac{1}{q_D}\sum_{k=l}^{n} q_kv_k$ and $\END=E[v|v<v_l]=\frac{1}{\qND}\sum_{k=1}^{l-1} q_kv_k$ as the average quality of certified and non-certified types, respectively.

As in the proof of Lemma \ref{le:RNsame}, let $\rho_i=\Pr(s=v_i)=\alpha q_i+(1-\alpha)\qD q_i$, where $i\ge l$, be the \textit{unconditional} probability of observing the certification outcome $s=v_i$. Let $\pND=\Pr(ND)=\qND+(1-\alpha)\qD \qND$ be the \textit{unconditional} probability of non-disclosure. Let $\pD=1-\pND$ denote the \textit{unconditional} probability of disclosure. We use the following calulations in the proof.
\begin{eqnarray*}
\Pr(v=v_j|s_j)&=& \frac{\alpha+(1-\alpha)q_j}{\alpha+(1-\alpha)q_D}\quad \text{ when $j\ge l$} \\
\Pr(v=v_k|s_j)&=& \frac{(1-\alpha)q_k}{\alpha+(1-\alpha)q_D}\quad \text{ when $j\ge l$ and $k\ge l$ and $k\ne j$} \\
\Pr(v=v_k|s_j)&=& 0 \quad \text{ when $j\ge l$ and $k< l$} \\
\Pr(v=v_k|ND)&=& \frac{(1-\alpha)\qND q_k}{\qND+(1-\alpha)\qD \qND} \quad \text{ when $k\ge l$} \\
\Pr(v=v_k|ND)&=& \frac{q_k}{\qND+(1-\alpha)\qD \qND}\quad \text{ when $k< l$} \\
E(v|s_j)&=&\alpha\frac{q_j}{\rho_j}v_j+\sum_{k=l}^n (1-\alpha)\frac{q_j}{\rho_j}q_kv_k \quad \text{ when $j\ge l$} \\
E(v|ND)&=&\sum_{k=1}^{l-1} \frac{1}{\pND}q_k v_k+\sum_{k=l}^n (1-\alpha)\frac{\qND}{\pND}q_kv_k = \END+\sum_{k=l}^n (1-\alpha)\frac{\qND}{\pND}q_kv_k
\end{eqnarray*}

Let $D_j=\frac{\partial EL_j}{\partial\alpha}\big|_{\alpha=1}$, where $j\in\{ND,l,\dots,n\}$.
If $v_j\ge E_C$ and $\alpha$ is sufficiently close to 1, then $v_{j-1}<E(v|s_j)<v_j$. Consequently, conditional on observing signal $s_j$, buyers experience a loss if and only if $v_l\le v\le v_{j-1}$, where $v$ is the quality of the purchased product. It follows
\begin{eqnarray*}
D_j=\frac{\partial}{\partial\alpha}\left(\sum_{k=l}^{j-1} \Pr(v=v_k|s_j)(v_k-E(v|s_j))\right)\Bigg|_{\alpha=1}=\sum_{k=l}^{j-1} q_k(v_j-v_k).
\end{eqnarray*}
The second equation comes from the product rule of differentiation and the fact that when $j,k\ge l$ and $k<j$, we have
\begin{eqnarray*}
\Pr(v=v_k|s_j)\big |_{\alpha=1}= 0, \quad \text{ and } \quad \frac{\partial \Pr(v=v_k|s_j)}{\partial \alpha}\bigg |_{\alpha=1}= -q_k, \quad \text{ and } \quad E(v|s_j)\big |_{\alpha=1} = v_j.
\end{eqnarray*}

Similarly, if $v_j< E_C$ and $\alpha$ is sufficiently close to 1, then $v_{j}<E(v|s_j)<v_{j+1}$.
Buyers experience a loss when the product quality is $v_j$ or lower, $v_l\le v\le v_j$.
Thus,
\begin{eqnarray*}
D_j&=&\sum_{k=l}^{j-1} q_k(v_j-v_k)+(q_D E_C-q_D v_j)=\sum_{k=j+1}^n q_k (v_k-v_j).
\end{eqnarray*}
The term $q_D E_C-q_D v_j$ is indeed $\frac{\partial}{\partial\alpha}(\Pr(v=v_j|s_j)(v_j-E(v|s_j)))\Big|_{\alpha=1}$, which results from the facts that when $j\ge l$, we have
\begin{eqnarray*}
\Pr(v=v_j|s_j)\big |_{\alpha=1}= 1, \quad \text{ and } \quad \frac{\partial \Pr(v=v_j|s_j)}{\partial \alpha}\bigg |_{\alpha=1}&=& q_D-q_j, \quad \text{ and } \quad E(v|s_j)\big |_{\alpha=1}=v_j,
\end{eqnarray*}
and $\frac{\partial E(v|s_j)}{\partial \alpha}\big |_{\alpha=1}= q_Dv_j-\sum_{k=l}^n q_kv_k=q_Dv_j-q_DE_C$.

The derivative of $K_1(\alpha)$ at $\alpha=1$ can thus be expressed as the sum of two terms:
\begin{eqnarray*}
\frac{\partial K_1(\alpha)}{\partial\alpha}\Big|_{\alpha=1}=\underbrace{\frac{\partial}{\partial\alpha}\sum_{l\le i<j\le n} \rho_i\rho_j (EL_j-EL_i)\Big|_{\alpha=1}}_{S1}+\underbrace{\frac{\partial}{\partial\alpha} \sum_{k=l}^n \rho_k\rho_{ND} (EL_k-EL_{ND})\Big|_{\alpha=1}}_{S2}
\end{eqnarray*}

\paragraph{Term $S1$:} Given that $EL_j(1)=0$ for all $j\ge l$, as the product's quality is known when $\alpha=1$, and $\rho_j\big |_{\alpha=1}=q_j$ for all $j\ge l$, we have:
\begin{eqnarray*}
S1=\sum_{j=l}^n \sum_{i=l}^{j-1} q_i q_j (D_j-D_i).
\end{eqnarray*}

\paragraph{Term $S2$:}
Let $K$ be such that $v_K\le \END<v_{K+1}$. It is straightforward to see that $K<l$, as $\END$ is the average quality of non-certified types, and all types with quality $v\ge v_l$ certify.

When $\alpha<1$, $E(v|ND)>\END$ and as $\alpha\rightarrow 1$, $E(v|ND)\rightarrow \END$.
Thus, when $\alpha$ is close to 1, buyers purchasing a product without disclosure
experience a loss if and only if  $v\le v_K<v_l$. Then:
\begin{equation}
EL_{ND}(\alpha)=\sum_{k=1}^{K}\underbrace{\frac{q_k}{\qND+(1-\alpha)q_D\qND}}_{\Pr(v=v_k|ND)}
\Bigg(\underbrace{v_k-\END-\sum_{j=l}^n (1-\alpha)\frac{\qND}{\pND}q_jv_j}_{v_k-E(v|ND)}\Bigg).
\label{eq:EL_ND}
\end{equation}
Note that in the expression of $EL_{ND}(\alpha)$ above, we compute $\Pr(v=v_k|ND)$ for $k=1,..,K$, meaning that $k<l$ as $K<l$ argued earlier. Taking the derivative at $\alpha=1$, we obtain:
\begin{eqnarray*}
D_{ND}=\frac{q_D}{\qND}\sum_{k=1}^{K}q_k\left(E_C+(v_k-2\END)\right).
\end{eqnarray*}
To obtain the expression of $D_{ND}$ above, for $k\le K<l$, we used
\[
\Pr(v=v_k|ND)\big|_{\alpha=1}=\frac{q_k}{\qND}, \quad \quad E(v|ND)\big|_{\alpha=1}=\END, \quad \text{and} \quad \frac{\partial \Pr(v=v_k|ND)}{\partial \alpha} \big|_{\alpha=1}=\frac{q_kq_D}{\qND},
\]
and
\[
\frac{\partial E(v|ND)} {\partial \alpha} \big|_{\alpha=1}=\frac{q_D}{\qND} \sum_{k=1}^{l-1} q_kv_k -\sum_{k=l}^n q_kv_k=\frac{q_D}{\qND} \qND \END-q_DE_C =q_D(\END-E_C).
\]


So, term $S2$ becomes
\begin{eqnarray*}
S2&=&\sum_{k=l}^n (\rho_k\rho_{ND})^\prime\big|_{\alpha=1} (EL_k(1)-EL_{ND}(1)) + \sum_{k=l}^n q_k\qND (D_k-\DND)\\
&=&(1-2q_D)\sum_{k=l}^n q_k\left(-\sum_{i=1}^K q_i(v_i-\END)\right) + \sum_{k=l}^n q_k\qND (D_k-\DND),
\end{eqnarray*}
where we used $\rho_k\big|_{\alpha=1}=q_k$, $\pND\big|_{\alpha=1}=\qND$, $EL_k(1)=0$ for all $k\ge l$ and
$$
\ELND(1)=\sum_{i=1}^K \frac{q_i}{\qND}(v_i-\END)
$$
following the calculation of $\ELND(\alpha)$ in $\ref{eq:EL_ND}$.

To prove the Proposition, we must show $S_1+S_2<0$. The proof proceeds as follows.
First, we consider the case $l=n$, where only the highest type certifies, and show that in this case $S_1+S2<0$. Then, we consider the case $l<n$ and show that $S_1+S_2$ is smaller when $l<n$ than when $l=n$.

\begin{lemma}
$E[v|v\le v_k]>v_{k-1}$ for every $k>1$, and $E[v|v\ge v_k]>v_{n-1}$ for every $k\ge 1$.
\label{le:Evk}
\end{lemma}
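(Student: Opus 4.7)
The plan is to reduce both claims to a single algebraic form and then use the two structural hypotheses in Proposition \ref{pro:LAgeneral} to control the tails. Writing out each conditional expectation, the inequality $E[v\mid v\le v_k]>v_{k-1}$ is equivalent to
\[
q_k(v_k-v_{k-1})>\sum_{j=1}^{k-1}q_j(v_{k-1}-v_j),
\]
while $E[v\mid v\ge v_k]>v_{n-1}$ is equivalent to
\[
q_n(v_n-v_{n-1})>\sum_{j=k}^{n-1}q_j(v_{n-1}-v_j).
\]
Both have the common shape $q_M(v_M-v_{M-1})>\sum_{j}q_j(v_{M-1}-v_j)$, with $M=k$ and summation over $j\in\{1,\dots,k-1\}$ for the first claim, and $M=n$ with summation over $j\in\{k,\dots,n-1\}$ for the second. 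I will therefore bound the right-hand side uniformly and check the common inequality.

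First, condition (1) says the gaps $d_i=v_{i+1}-v_i$ are non-decreasing, so for every $j\le M-1$,
\[
v_{M-1}-v_j=\sum_{i=j}^{M-2}d_i\le (M-1-j)\,d_{M-1}=(M-1-j)(v_M-v_{M-1}).
\]
Second, iterating condition (2) gives $q_j\le q_{M-1}/2^{M-1-j}$ for all $j\le M-1$: from $q_{i+1}>2S_i\ge 2q_i$, each drop of one in the index loses at least a factor of two. Plugging both bounds into the right-hand side and substituting $m=M-1-j$,
\[
\sum_{j}q_j(v_{M-1}-v_j)\le (v_M-v_{M-1})\,q_{M-1}\sum_{m=0}^{M-1-k'}\frac{m}{2^m}<2\,q_{M-1}(v_M-v_{M-1}),
\]
where $k'=1$ for Claim~1 and $k'=k$ for Claim~2, and the strict inequality uses $\sum_{m=0}^\infty m/2^m=2$ together with the fact that the sum on the left is finite.

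Finally, condition (2) applied at $M$ yields $q_M>2S_{M-1}\ge 2q_{M-1}$; the first inequality is strict, and the second is strict whenever $M-1>1$ because all $q_j>0$. Combining with the previous step,
\[
\sum_{j}q_j(v_{M-1}-v_j)<2\,q_{M-1}(v_M-v_{M-1})<q_M(v_M-v_{M-1}),
\]
which is the desired inequality. The boundary cases (namely $k=2$ for the first claim, and $k\in\{n-1,n\}$ for the second) follow directly because the right-hand sum is either empty or has vanishing coefficients, and the left-hand side is strictly positive. The main conceptual point, and the only place where both hypotheses genuinely interact, is that the geometric decay of $q_j$ imposed by condition (2) dominates the linear factor $(M-1-j)$ produced by the gap bound from condition (1); this is captured cleanly by the series identity $\sum_m m/2^m=2$. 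I do not anticipate any substantive obstacle beyond carefully tracking strictness of the inequalities in the boundary cases.
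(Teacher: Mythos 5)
Your proof is correct, and it takes a genuinely different route from the paper's. The paper proves the first claim by induction on $k$: the induction hypothesis lets it replace $E[v\mid v\le v_{k-1}]$ by the lower bound $v_{k-2}$, after which the step closes using only $q_k>Q_{k-1}$ (a weak consequence of condition (2)) together with the non-decreasing gaps; the second claim is then obtained for free by applying the first claim to the renormalized conditional distribution on $\{v_k,\dots,v_n\}$, which inherits both hypotheses. You instead give a direct, non-inductive argument that treats both claims uniformly: reduce each to $q_M(v_M-v_{M-1})>\sum_j q_j(v_{M-1}-v_j)$, bound each summand by combining the geometric decay $q_j\le q_{M-1}2^{-(M-1-j)}$ (from iterating condition (2)) with the linear gap bound $v_{M-1}-v_j\le(M-1-j)(v_M-v_{M-1})$ (from condition (1)), sum the series $\sum_m m2^{-m}=2$, and close with $q_M>2q_{M-1}$. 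All steps check out, including the boundary and empty-sum cases, since $q_M(v_M-v_{M-1})>0$ throughout. What your approach buys is transparency about why the factor $2$ in condition (2) suffices — the geometric decay exactly dominates the linear growth of the gaps; what the paper's approach buys is brevity and the (implicit) observation that the first claim actually only needs the weaker condition $q_k>q_1+\dots+q_{k-1}$, plus an elegant change-of-measure reuse for the second claim.
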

\begin{proof}
We prove the first statement by induction. When $k=2$ it is satisfied. Assume the statement
holds for $k-1$. Let $Q_k=q_1+\dots+q_k$. Then:
\begin{eqnarray*}
E[v|v\le v_k]&=&\sum_{i=1}^{k-1} \frac{q_i}{Q_k}v_i+\frac{q_k}{Q_k}v_k=\frac{1}{Q_k}
\left(Q_{k-1}\sum_{i=1}^{k-1} \frac{q_i}{Q_{k-1}}v_i+q_kv_k \right)\\
&>&\frac{1}{Q_k}\left(Q_{k-1}v_{k-2}+q_kv_k \right)\\
&=&\frac{1}{Q_k}(Q_k v_{k-1}- Q_{k-1}(v_{k-1}-v_{k-2}) + q_k(v_k-v_{k-1}))>v_{k-1}.
\end{eqnarray*}

The second part follows by applying the first part to the conditional distribution
$q^{cond}=\left(\frac{q_k}{1-Q_{k-1}},\dots,\frac{q_n}{1-Q_{k-1}}\right)$ over support $\{v_k,\dots,v_n\}$. $q^{cond}$ satisfies the requirement $q_j^{cond}>2Q_{j-1}^{cond}$
and, therefore, the first statement is applicable.
\end{proof}

\begin{lemma}
If $l=n$ then $S1=0$ and $S2<0$.
\label{le:l=n}
\end{lemma}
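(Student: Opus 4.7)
The plan is to dispatch $S_1=0$ by an empty-sum observation and then reduce $S_2<0$ to a single scalar inequality that I can verify by using hypotheses (1) and (2) in concert. For $S_1$: when $l=n$, the inner index range $\{l,\dots,j-1\}=\{n,\dots,n-1\}$ is empty, so the double sum defining $S_1$ is vacuous.

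For $S_2$: specializing $l=n$ gives $q_D=q_n$, $\qND=Q_{n-1}$, $E_C=v_n$, and the same vacuous-sum argument yields $D_n=0$. Define $\bar v_m:=Q_m^{-1}\sum_{k=1}^m q_k v_k$. Using the identity $\sum_{k=1}^{n-1}q_k(v_k-\END)=0$ to write $\sum_{k=1}^K q_k(v_k-\END)=-Q_K(\END-\bar v_K)$, a direct substitution into the formula for $S_2$ collapses it to
\begin{equation*}
S_2 \;=\; -\,q_n\, Q_K\bigl[q_n(v_n-\END)-\qND(\END-\bar v_K)\bigr],
\end{equation*}
so $S_2<0$ is equivalent to
\begin{equation*}
q_n(v_n-\END)\;>\;\qND(\END-\bar v_K). \tag{$\star$}
\end{equation*}

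To verify $(\star)$, set $d_m:=v_m-v_{m-1}$ and $\epsilon_m:=v_m-\bar v_m$. Hypothesis (2) gives $Q_{m-1}/Q_m<1/3$, and the identity $\epsilon_m=(Q_{m-1}/Q_m)(d_m+\epsilon_{m-1})$ with $\epsilon_1=0$ yields, by induction using hypothesis (1) (so $d_{m-1}\le d_m$), the sharp bound $\epsilon_m<d_m/2$. Feeding this into
\begin{equation*}
v_{n-1}-\END \;=\; (Q_{n-2}/\qND)(d_{n-1}+\epsilon_{n-2}) \;<\; \tfrac{1}{3}\cdot\tfrac{3}{2}d_{n-1} \;=\; d_{n-1}/2
\end{equation*}
gives $\END>v_{n-2}+d_{n-1}/2>v_{n-2}$, so $K=n-2$; the same bound then gives $\END-\bar v_K\le v_{n-1}-\bar v_{n-2}=d_{n-1}+\epsilon_{n-2}\le \tfrac{3}{2}d_{n-1}\le \tfrac{3}{2}d_n$. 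Combining with $v_n-\END\ge d_n$ and $q_n>2\qND$ (hypothesis (2) at $k=n$), the chain
\begin{equation*}
q_n(v_n-\END) \;>\; 2\qND d_n \;>\; \tfrac{3}{2}\qND d_n \;\ge\; \qND(\END-\bar v_K)
\end{equation*}
delivers $(\star)$.

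The main obstacle I expect is getting $\epsilon_m<d_m/2$ sharp enough to pin down $K=n-2$: without it, $\END-\bar v_K$ could scale with $v_{n-1}-v_1$ and swamp the factor $q_n/\qND>2$ coming from hypothesis (2). The telescoping recursion for $\epsilon_m$, combined with the monotonicity of gaps from hypothesis (1), is exactly what keeps the two sides of $(\star)$ comparable in the single gap $d_n$, after which the one-line chain above closes the argument. The edge case $n=2$ (where $\bar v_K=\END=v_1$) makes $(\star)$ trivial, so no separate treatment is needed.
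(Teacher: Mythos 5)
Your proof is correct, and it shares the paper's skeleton: $S_1$ vanishes because the inner index set is empty when $l=n$, and $S_2$ reduces to the single scalar inequality $q_n(v_n-\END)>\qND(\END-E_K)$, where $E_K=\bar v_K$ in your notation; this is exactly the paper's target condition $\qND E_K-\END+v_nq_n>0$ after splitting $\END=\qND\END+q_n\END$, so your closed form $S_2=-q_nQ_K\bigl[q_n(v_n-\END)-\qND(\END-\bar v_K)\bigr]$ matches the paper's. Where you genuinely diverge is in the verification of that inequality. The paper leans on its Lemma 3 (that $E[v\mid v\le v_k]>v_{k-1}$) to get the coarse bounds $E_K>v_{n-3}$ and $\END<v_{n-1}$, and then closes by comparing $q_n>2/3$ (from hypothesis (2)) against the three-gap ratio $(v_{n-1}-v_{n-3})/(v_n-v_{n-3})<2/3$ (from hypothesis (1)), with $n=2,3$ relegated to "analogous" cases. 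You instead prove the sharper recursive bound $\epsilon_m<d_m/2$ via $\epsilon_m=(Q_{m-1}/Q_m)(d_m+\epsilon_{m-1})$ and $Q_{m-1}/Q_m<1/3$, which simultaneously pins down $K=n-2$ and delivers $\END-\bar v_K\le\tfrac32 d_n$, so the whole comparison lives in the single gap $d_n$ and closes with $q_n>2\qND$ rather than $q_n>2/3$. Your route is self-contained (it does not need the paper's Lemma 3), handles the small-$n$ cases uniformly, and makes explicit why the competing quantities are commensurable in $d_n$; the paper's route is shorter given that its Lemma 3 is already established for use elsewhere in the proof of Proposition 5. Both arguments use hypotheses (1) and (2) in essentially the same roles, and I find no gap in yours.
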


\begin{proof}
When only the highest type certifies, $D_n=0$ and thus $S_1=0$. Since $q_D=q_n$, term $S2$ becomes:
$$
S_2=(1-2q_n)q_n\left(-\sum_{k=1}^K q_k(v_k-\END)\right)+q_n\qND(0-\DND),
$$
and
\begin{eqnarray*}
\DND=\frac{q_n}{\qND}\sum_{k=1}^K q_k(v_n+v_k-2\END).
\end{eqnarray*}
Plugging it into $S2$ we get,
\begin{eqnarray*}
S_2&=&-q_n\sum_{k=1}^K q_k [(1-2q_n)(v_k-\END)+q_n(v_n+v_k-2\END)]=-q_n\sum_{k=1}^K q_k[\qND v_k-\END+v_nq_n].
\end{eqnarray*}
To prove that it is negative, we need to prove that
$$
\sum_{k=1}^K q_k[\qND v_k-\END+v_nq_n]=Q_K(\qND E_K-\END +v_nq_n)>0,
$$
where $E_K=E[v|v\le v_K]$. By Lemma \ref{le:Evk}, $E_K>\max\{v_{n-3},v_1\}$,
and $\END<v_{n-1}$. If $n\ge 4$, the inequality holds iff
$$
q_n>\frac{v_{n-1}-v_{n-3}}{v_n-v_{n-3}}.
$$
From $q_n>2(q_1+\dots+q_{n-1})$ follows that $q_n>2/3$. The RHS on the other hand is less than 2/3 because $v$'s have increasing differences:
\begin{eqnarray*}
&&\frac{v_{n-1}-v_{n-3}}{v_{n}-v_{n-3}}=1-\frac{v_{n}-v_{n-1}}{v_{n}-v_{n-3}}<1-
\frac{v_{n}-v_{n-1}}{3(v_{n}-v_{n-1})}=\frac{2}{3}.
\end{eqnarray*}
The cases $n=2,3$ are analogous.
\end{proof}

\begin{lemma}
Suppose $l\le n-1$. Let $v_l^\prime=v_l+\eps$, $v_n^\prime=v_n-\tau$, where $\eps>0$, $\tau>0$ and $E_C=E_C^\prime$. Then $S_1(v^\prime)+S_2(v^\prime)>S_1(v)+S_2(v)$.
\label{le:DeltaS}
\end{lemma}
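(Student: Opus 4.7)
The plan is to prove Lemma \ref{le:DeltaS} by computing $\Delta(S_1+S_2)$ directly, to first order in $\eps$. For $\eps$ small, the inequalities $v_l < E_C < v_n$ and $v_K < \END < v_{K+1}$ remain strict after the perturbation, so the piecewise expressions for $D_j$ derived in the proof of Proposition \ref{pro:LAgeneral} continue to apply. The constraint $E_C = E_C'$ is equivalent to $q_n\tau = q_l\eps$, which I will use repeatedly.

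First I would catalog the invariants. Because $\END$, $K$, $\qND$, $q_D$, and $\{(q_k,v_k)\}_{k\le K}$ involve only indices $k<l$, none is affected by the perturbation. Since $E_C$ is preserved by hypothesis, so are $\DND$ and the first summand of $S_2$. Hence $\Delta(S_1+S_2) = \Delta S_1 + \qND \sum_{k=l}^n q_k\,\Delta D_k$.

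Next I would compute $\Delta D_j$ using the piecewise formula. A short calculation yields $\Delta D_l = -q_D\eps$, $\Delta D_n = -q_D q_l \eps/q_n$, and, pleasantly, $\Delta D_j = -q_l\eps$ for every $l<j<n$ regardless of the sign of $v_j-E_C$. Substituting these into $\Delta S_1 = \sum_{l\le i<j\le n} q_iq_j(\Delta D_j-\Delta D_i)$ and using the symmetry identity $\sum_j q_j Q_j^{<} = \sum_j q_j Q_j^{>}$ (with $Q_j^{<} = \sum_{l\le k<j} q_k$, which kills a symmetric cross-sum), I obtain
\[
\Delta S_1 = q_l\eps\,(q_n-q_l)(2q_D-q_l-q_n), \qquad \sum_{k=l}^n q_k\,\Delta D_k = -q_l\eps\,(3q_D-q_l-q_n).
\]

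The main obstacle is signing the resulting bracket $(q_n-q_l)(2q_D-q_l-q_n) - \qND(3q_D-q_l-q_n)$. Writing $M=\sum_{l<k<n} q_k$ so that $2q_D-q_l-q_n = (q_l+q_n) + 2M$ and $3q_D-q_l-q_n = 2(q_l+q_n) + 3M$, I would rewrite the bracket as $(q_l+q_n)[(q_n-q_l)-2\qND] + M\,[2(q_n-q_l)-3\qND]$. The hypothesis $q_n > 2\sum_{k<n} q_k$ gives $q_n-q_l > 2\qND + q_l + 2M > 2\qND$, so the first square bracket is strictly positive; a fortiori $2(q_n-q_l)-3\qND > \qND\ge 0$, making the second nonnegative. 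Combining, $\Delta(S_1+S_2) > 0$, completing the argument.
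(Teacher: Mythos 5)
Your proof is correct and, for most of its length, is the paper's own argument: the same inventory of invariants ($\END$, $K$, $D_{ND}$ and the first summand of $S_2$ are untouched by the perturbation), the same case values $\Delta D_l=-\qD\eps$, $\Delta D_n=-\qD q_l\eps/q_n$, and $\Delta D_j=-q_l\eps$ for $l<j<n$, and the same intermediate totals $\Delta S_1=q_l\eps(q_n-q_l)(2\qD-q_l-q_n)$ and $\sum_{k=l}^n q_k\Delta D_k=-q_l\eps(3\qD-q_l-q_n)$. Where you genuinely depart is the closing positivity step. The paper collects the bracket into a quadratic $f(\qD)=3\qD^2+(q_n-3q_l-3)\qD+(q_l+q_n)(1+q_l-q_n)$, argues that its minimum over $[q_l+q_n,1]$ sits at the left endpoint because $q_n>2/3$, and verifies $f(q_l+q_n)>0$. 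You instead substitute $\qD=q_l+q_n+M$ with $M=\sum_{l<k<n}q_k$ and sign the two resulting brackets directly from $q_n>2(\qND+q_l+M)$. Your route is shorter, avoids the endpoint-versus-vertex bookkeeping, and uses the full hypothesis $q_n>2\sum_{k<n}q_k$ transparently rather than only the weaker consequence $q_n>2/3$; both arguments are valid. One caveat: drop the framing ``to first order in $\eps$'' and ``for $\eps$ small.'' The lemma is invoked downstream with the finite step $\eps=v_{l+1}-v_l$, so an asymptotic statement would not suffice. What saves you is that each $D_j$ is piecewise linear in $(v_l,v_n)$, so your computation is \emph{exact}, not merely first-order, for any $\eps,\tau$ that preserve the orderings $v_l'<E_C<v_n'$ and $v_K\le\END<v_{K+1}$ (the former is what the main proof guarantees via Lemma \ref{le:Evk}, i.e., $E_C>v_{n-1}$, and the latter is automatic since $\END$ is unchanged). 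State the hypothesis as ``the perturbation preserves these orderings'' rather than ``$\eps$ small,'' and the proof stands as written.
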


\begin{proof}
Let $\Delta S = (S_1(v^\prime)+S_2(v^\prime)) - (S_1(v)+S_2(v))$. We aim to show that $\Delta S > 0$. The condition $E_C=E_C^\prime$ implies $q_l \eps = q_n \tau$, and therefore $\tau = \frac{\eps q_l}{q_n}$.

First, we compute $\Delta D_j = D_j(v^\prime) - D_j(v)$, which depends on $j\ge l$:
\begin{itemize}
    \item \textbf{Case $v_j=v_l$:} Since $v_l<E_C$, $D_l=\sum_{k=l+1}^{n}q_{k}(v_{k}-v_{l})$ and
    \begin{align*}
        \Delta D_l &= -\eps \sum_{k=l+1}^{n}q_k - \tau q_n=-\eps q_D.
    \end{align*}

\item \textbf{Case $v_l<v_j<E_C$:} Here $D_j=\sum_{k=j+1}^{n}q_{k}(v_{k}-v_{j})$, so
$\Delta D_j = -q_n\tau = -\eps q_l$.

        \item \textbf{Case $E_C\le v_j<v_n$:} Here $D_j=\sum_{k=l}^{j-1}q_{k}(v_{j}-v_{k})$, so again $\Delta D_j = -\eps q_l$.

    \item \textbf{Case $v_j=v_n$:} Since $v_n>E_C$, $D_n=\sum_{k=l}^{n-1}q_{k}(v_{n}-v_{k})$ and:
    \begin{align*}
        \Delta D_n = \sum_{k=l}^{n-1}q_{k}(-\tau) - q_l(-\eps) = -\frac{\eps q_l}{q_n}q_D.
    \end{align*}
\end{itemize}

By definition, $\Delta S_1 = \sum_{l \le i < j \le n} q_i q_j(\Delta D_j - \Delta D_i)$. The terms where $l<i<j<n$ are zero, since $\Delta D_j = \Delta D_i = -\eps q_l$. Thus,
\begin{align*}
    \Delta S_1 &= \sum_{j=l+1}^n q_j q_l(\Delta D_j - \Delta D_l) + \sum_{i=l+1}^{n-1} q_n q_i(\Delta D_n - \Delta D_i) \\
    &= q_l \eps \left[ (q_D-q_l)(q_D-q_l-q_n) + q_D(q_n-q_l) + (q_n-q_D)(q_D-q_l-q_n) \right] \\
    &= q_l\eps(q_n-q_l)(2q_D-q_l-q_n).
\end{align*}

As for $\Delta S_2$, note that $E_{ND}$, $K$, and $D_{ND}$ remain unchanged under the perturbation. Thus, the change in $S_2$ comes only from the change in the $D_k$ terms:
\begin{align*}
    \Delta S_2 = q_{ND} \sum_{k=l}^{n}q_k \Delta D_k=q_{ND} \cdot \eps q_l(q_l+q_n-3q_D).
\end{align*}

We now verify that $\Delta S_1 + \Delta S_2 > 0$.  Substitute $q_{ND}=1-q_D$ into the expression:
\begin{align*}
    \Delta S_1+\Delta S_2 &= \eps q_l \left[ (q_n-q_l)(2q_D-q_l-q_n) + (1-q_D)(q_l+q_n-3q_D) \right]\\
     &= \eps q_l \left[ 3q_D^2 + (q_n - 3q_l - 3)q_D + (q_l + q_n)(1 + q_l - q_n) \right].
\end{align*}
Let $f(q_D)$ denote a quadratic function inside the brackets. The valid domain for $q_D$ is inside the interval $[q_n+q_l, 1]$. When $q_n>2/3$, which holds since $q_n>2\sum_{j=1}^{n-1}q_j$,
its minimum of $f(q_D)$ over this interval is attained at $q_D = q_n+q_l$.
Evaluating at this point gives:
$$
f(q_n+q_l) = (3q_n^2 - 2q_n) + (q_l^2 + (4q_n-2)q_l)>0,
$$
when $q_n>2/3$. Thus, $\Delta S_1+\Delta S_2>f(q_n+q_l)>0$, completing the proof.
\end{proof}

We can now prove the Proposition. If $l=n$, then by Lemma \ref{le:l=n}, we are done. If $l<n$, then by Lemma \ref{le:Evk}, $E_C>v_{n-1}$. Let $\eps_l=v_{l+1}-v_l$, and choose $\tau$ so that
$E_C$ does not change. Then $v_l^\prime=v_l+\eps=v_{l+1}$
and $v_n^\prime>v_n-\tau$. Having $E_C>v_{n-1}$, guarantees that $v_{n}^\prime>E_C$.
By Lemma \ref{le:DeltaS}, $S_1^\prime+S_2^\prime>S_1+S_2$. We continue perturbations until
all quality types become equal to each other and equal to $E_C$. Each perturbation step increases $S_1+S_2$ by Lemma \ref{le:DeltaS}. When all quality values are the same, $S_1^{final}+S_2^{final}<0$ by
Lemma \ref{le:l=n}. Therefore, for the original distribution, we must have $S_1+S_2<0$. Thus,
$\frac{\partial K_1(\alpha)}{\partial\alpha}\Big|_{\alpha=1}<0$ and, therefore,
$\frac{\partial \Pi^\alpha}{\partial\alpha}\Big|_{\alpha=1}<0$ in a left neighborhood of $\alpha=1$ when $b$ is sufficiently large. $\blacksquare$

\newpage
\section{Translated Experimental Instructions}
\label{appendix:instructions}

Welcome to the experiment! Thank you very much for taking the time to participate. I will be your experimenter for today. This experiment is expected to last approximately 1 hour and 15 minutes. You will receive payment upon completing the entire session. Your payment will depend on your decisions and performance during the experiment. I will provide a detailed explanation of how your payment is calculated later. The experiment consists of three completely separate parts. In Part 1, you will play a game with others. In Part 2, you will participate in two individual decision problems. Finally, in Part 3, you will be asked to answer a simple questionnaire. After completing Part 1, I will provide instructions for Part 2, and once we finish Part 2, we will proceed to Part 3.

\begin{center}
\textbf{PART 1 - OVERVIEW}\footnote{These instructions are for a D3 treatment with 20 rounds and 20 subjects, where seller qualities range from 50 to 100. For other treatments with different parameters (number of subjects, rounds, or quality distribution), we adapt the instructions accordingly and provide them to the subjects.}
\end{center}

First, let me provide you with an overview of the first part of the experiment. You will be playing a game with others, comprising 20 rounds numbered from 1 to 20. In each round, you will be randomly assigned the role of either a buyer or a seller. The computer will inform you of your role for each round. It's important to note that the role assignments are completely random, meaning that your role in the next round is not influenced by your previous assignments. Also, given that there are only two roles, the likelihood of being assigned either the seller or buyer role is equal in every round.

There are 20 participants in total, and the computer will randomly divide you into five groups of four people each. Within each group, there will be two sellers and two buyers. Importantly, your group composition may change every round, and you will not be aware of the identities of your group members.

There will be two practice rounds before we proceed to the main twenty-round game. Throughout the two practice rounds and twenty main rounds, you can ask questions at any time. However, during the main rounds, please refrain from speaking out loud and instead raise your hand for assistance. I will be available to help you. The twenty main rounds will be played consecutively without any breaks. In Practice Round 1, some of you will assume the roles of sellers, while others will be buyers. In Practice Round 2, the roles will be reversed. If you are a seller in Practice Round 1, you will become a buyer in Practice Round 2, and vice versa. The purpose of the practice rounds is to help you become acquainted with the program. It is important to note that in the main rounds, your roles may or may not switch after every round. As explained earlier, role assignments in the main rounds are completely random. Additionally, please be aware that the outcomes of the practice rounds are completely unrelated to those in the main rounds.

\begin{center}
    \textbf{PART 1 - SELLER ROLE}
\end{center}

Second, let's talk about the seller role. As a seller, you will be provided with two units of a product, and your objective is to sell them. The quality of your product is represented by an integer ranging from 50 to 100 in each round. The computer will randomly select this integer, with each value having an equal chance of being chosen. The product quality is your private information, meaning that the two buyers and the other seller in the market are unaware of it (although everyone knows it's an integer between 50 and 100).

A third party in the market provides an imperfect certification at a cost. You can choose to utilize this certification to improve the credibility of the message you send to buyers. The certification cost is 10, 15, 20, or 25, depending on the specific round. The precision level of the technology is 0.5, 0.6, 0.7, 0.8, or 0.9, also depending on the round. To illustrate the meaning of precision level, consider the following example:

\medskip
\noindent \textit{Example 1:} Suppose the computer informs you that your true quality level is 70 and you decide to purchase the certification. Suppose in the considered round, the precision level of the certification technology is 0.8. This implies:
\begin{itemize}
    \item There is an 80\% chance that the certification succeeds. In this case, the certification technology will produce an outcome identical to your true quality level (so the outcome will be 70).

    \item However, with the remaining 20\%, the certification fails. In this case, the computer will randomly select an integer between 50 and 100 as your certification outcome, with each integer having an equal chance of being chosen. Hence, under this circumstance, the certification outcome you receive may or may not equal your true quality level.
\end{itemize}

Please note that the precision level of the certification outcome may or may not change after each round, as it is randomly selected by the computer.

After purchasing certification, you have the choice to disclose the outcome to buyers. If you choose to reveal the certification outcome, buyers will be notified that your product is certified by a third party. They will be able to view your certification outcome in the following form:

\bigskip
\hspace{2cm} I sell a product with a CERTIFIED quality of \_\_\_\_\_.
\bigskip

\noindent Alternatively, you may choose not to purchase certification technology or withhold the certification outcome after purchasing quality certification. In such instances, you can send a message to the buyers in the following form:

\bigskip
\hspace{2cm} I sell a product with a quality of\_\_\_\_\_.
\bigskip

You are free to choose a quality level (an integer) between 50 and 100 in the message above. You can inflate or deflate the quality of your product, or you can choose to be honest. The decision is entirely yours.

Once you finalize your disclosure decision, you will observe the disclosure decision of the competing seller. After that, you and your competing seller will simultaneously set prices. Your price must be a positive integer, and unlike the quality message, you are free to choose any price you like between 1 and 10,000. After setting a price, your task is complete, and you must wait to see if buyers purchase your products. Please note that you have a zero production cost and your profit in each round is calculated as follows:

\begin{table}[ht!]
\centering
\footnotesize
\begin{tabular}{|c|c|c|}
\hline
 & If you purchase certification & If you don't purchase certification \\
\hline
No buyer purchases your product & - certification cost & 0 \\
\hline
One buyer purchases your product & Your price – certification cost & Your price \\
\hline
Two buyers purchase your product & 2*Your price – certification cost & 2*Your price \\
\hline
\end{tabular}
\end{table}

Your profits from each round will be accumulated, and the total profit over twenty rounds will be converted to payment in Vietnamese dong. The conversion rate is set at 1,000 Vietnamese dongs per 20 units of profit. It's important to note that, as a seller, it's possible to make a negative profit, particularly when you purchase the certification technology (and pay the certification cost) but no buyers choose your products. In such cases, the same conversion rate will be applied. For instance, if you have a profit of -20, your payment will decrease by 1,000 Vietnamese dongs.

\begin{center}
\textbf{PART 1 - BUYER ROLE}
  \end{center}

Finally, let me talk about the buyer role. If you are a buyer, you will decide whether to buy a product. The information available to you includes: (i) there are two sellers in the market, referred to as Seller 1 and Seller 2; (ii) the quality of available products in the market varies from 50 to 100, with no exceptions--this range is fixed and known by everyone; and (iii) the quality messages from the sellers, including whether their quality messages are the result of certification or not.

If a seller does not purchase a quality certification or does not disclose the certification outcome (despite purchasing it), the message you will receive has the following form:

\bigskip
\hspace{2cm} I sell a product with a quality of \_\_\_\_\_\_.
\bigskip

With this message, sellers can inflate or deflate their product qualities, or they can choose to be honest. However, if the seller purchases a quality certification and discloses the certification outcome, you will receive the following message:

\bigskip
\hspace{2cm}  I sell a product with a CERTIFIED quality of \_\_\_\_\_\_.
\bigskip

In this case, the seller's quality was certified with a certification technology that has a precision of 0.5, 0.6, 0.7, 0.8, or 0.9, depending on the specific round. You will be informed of this precision level. To illustrate the meaning of a precision level, consider the following example.

\noindent \textit{Example 2:} Suppose you received the following message:

\bigskip
\hspace{2cm} I sell a product with a CERTIFIED quality of 70.
\bigskip

Also, suppose that the certification technology has a precision level of 0.8. This implies:
\begin{itemize}
\item There is an 80\% chance that the certification succeeds and the true quality is 70.

\item However, with the remaining 20\% probability, the certified quality you observed is randomly selected by the computer from all integers between 50 and 100, each with an equal likelihood of being chosen. You see 70 because the computer selected it. Hence, the observed certified quality (70) may or may not match the actual quality.
\end{itemize}

After observing the certification precision, certified or non-certified quality messages, and prices of both sellers, you will have three options to consider with corresponding profits as follows:

\begin{table}[ht!]
\centering
\footnotesize
\begin{tabular}{|l|c|}
\hline
 Options & Your profit  \\
\hline
Buy from Seller 1 &  Seller 1's true quality minus her price \\
\hline
Buy from Seller 2 &  Seller 2's true quality minus her price \\
\hline
Do not buy from any sellers & 0 \\
\hline
\end{tabular}
\end{table}

You can buy a maximum of one product, and you can always choose the product you prefer. Your profit from each round will be accumulated, and the total profit over twenty rounds will be converted to payment in Vietnamese dongs. The conversion rate is set at 1,000 Vietnamese dongs per 20 units of profit. As a buyer, you can also make a negative profit, especially when the actual quality of the purchased product is lower than the price paid. In such cases, the same conversion rate will be applied.

\begin{center}
\textbf{PART 1 - RESULT SCREEN - ALL ROLES}
\end{center}

Once all members in a group have made their decisions, a result screen will be displayed summarizing the outcomes. At this stage, you will know everything about the market. As a seller, you will be able to view the number of products you have sold, your profit, as well as the actual product quality, certification decision, certification outcome (if any), quality message (if any), price, and number of products sold by the competing seller. As a buyer, you will see your profit and the true product qualities of the two sellers, as well as the purchasing decision and profit of the other buyer in the market.

\begin{center}
    \textbf{EXPERIMENT PART 2}
\end{center}

Thank you for completing the first part of the experiment. In the second part, you will participate in two independent decision problems, one after the other, with no interaction with others. The two decision problems are unrelated to each other.

In Decision Problem X, you will be presented with 11 lotteries. In all lotteries, you have a 50\% chance of winning 10,000 Vietnamese dongs. However, the lotteries differ in potential losses. In lottery $i$, where $i$ is an integer between 0 and 10, you face a potential loss of $i$ thousand Vietnamese dongs. For instance, in lottery number 2, you have a 50\% chance of winning 10,000 Vietnamese dongs and a 50\% chance of losing 2,000 Vietnamese dongs. Your task is to choose whether to accept or reject each lottery.

In Decision Problem Y, the setup is similar to Decision Problem X, except now you will have 21 lotteries. In all lotteries, there is a $50\%$ of winning 20,000 Vietnamese dongs, which is double the winning amount in Decision Problem X. The lotteries also differ in terms of potential losses. In lottery $i$, where $i$ is an integer between 0 and 20, there is a $50\%$ chance of losing $i$ thousand Vietnamese dongs. For example, in lottery number 13, there is a $50\%$ chance of winning 10,000 Vietnamese dongs and a $50\%$ chance of losing 13,000 Vietnamese dongs. Your task is once again to decide whether to accept or reject each lottery.

Let's explain how your payment is determined in the second part of the experiment. You'll be paid based on your decision in one randomly chosen lottery. A fair coin toss determines the decision problem: heads for X, tails for Y. Then, a lottery is randomly selected from 11 (for X) or 21 (for Y) corresponding lotteries. Your payment will be determined based on your decision in the selected lottery. If you reject the lottery, your payment will be zero. If you accept the lottery, another fair coin will be tossed. If it lands on heads, you will receive the winning amount, and that will be added to your final payment. If it lands on tails, the losing amount will be deducted from your final payment.

\begin{table}[ht]
\footnotesize
\fontsize{9pt}{11pt}\linespread{1.2}\selectfont
\captionsetup{labelformat=empty}
\caption{\textbf{Table \ref{tab:X}: Decision Problem X}}
\begin{tabular}{|c|l|l|c|c|}
\hline
\textbf{Lottery \#} & \textbf{Winning chances and amount} &  \textbf{Losing chances and amount} & \textbf{Accept} & \textbf{Reject} \\
\hline
0 & 50\% chance of winning 10,000 VND & 50\% chance of losing 0 VND & & \\
\hline
1 & 50\% chance of winning 10,000 VND & 50\% chance of losing 1,000 VND & & \\
\hline
2 & 50\% chance of winning 10,000 VND & 50\% chance of losing 2,000 VND & & \\
\hline
3 & 50\% chance of winning 10,000 VND & 50\% chance of losing 3,000 VND & & \\
\hline
4 & 50\% chance of winning 10,000 VND & 50\% chance of losing 4,000 VND & & \\
\hline
5 & 50\% chance of winning 10,000 VND & 50\% chance of losing 5,000 VND & & \\
\hline
6 & 50\% chance of winning 10,000 VND & 50\% chance of losing 6,000 VND & & \\
\hline
7 & 50\% chance of winning 10,000 VND & 50\% chance of losing 7,000 VND & & \\
\hline
8 & 50\% chance of winning 10,000 VND & 50\% chance of losing 8,000 VND & & \\
\hline
9 & 50\% chance of winning 10,000 VND & 50\% chance of losing 9,000 VND & & \\
\hline
10 & 50\% chance of winning 10,000 VND & 50\% chance of losing 10,000 VND & & \\
\hline
\end{tabular}
\label{tab:X}
\begin{flushleft}
\emph{Notes}: For Decision Problem Y, subjects were offered a table similar to Table \ref{tab:X} except that, as described above, it had 21 lotteries, and the winning amount was 20,000 VND. We do not present it here for brevity.
\end{flushleft}
\end{table}

\newpage
\begin{center}
    \textbf{EXPERIMENT PART 3}
\end{center}

\begin{flushleft}
Thank you for completing the first two parts of the experiment. Before we finish, please complete the following brief questionnaire:

\begin{enumerate}
\item What is your age?

\item What is your gender?  $\square$ Male    $\square$ Female    $\square$ Others    $\square$ Prefer not to say

\item What is your major?

 $\square$ Economics or Business   $\quad$ $\square$ STEM (Science, Technology, Engineering, Mathematics)

 $\square$  Languages $\quad$ $\quad \quad$ $\quad$ $\quad$ $\square$ Others

\item What is your year in the program?  $\square$ Freshman     $\square$ Sophomore     $\square$ Junior     $\square$ Senior
\end{enumerate}
\end{flushleft}

\begin{center}
    \textbf{EXPERIMENT PAYMENT}
\end{center}
Thank you for your participation in the entire experiment. Your compensation consists of a show-up fee of 20,000 Vietnamese dongs plus additional payments earned in parts 1 and 2. Please allow me a few moments to calculate your total payment. I will then individually notify each of you and distribute the payments in cash. To ensure confidentiality, your payment will be discreetly placed in a sealed envelope. Upon receiving your envelope, please kindly exit the computer room.

\newpage
\section{Loss Aversion Estimation}
\label{appendix:LA}

We use two Multiple Price Lists (MLPs) to elicit the loss aversion of subjects referred to as  Decision Problem X and Decision Problem Y (see Table \ref{tab:X}). The subjects' choices were incentivized as described in Appendix \ref{appendix:instructions}.
If a subject displays a unique and normal switching point in both decision problems (i.e., accepting lotteries with small potential losses and rejecting lotteries with large potential losses), the loss aversion parameter is computed as:
$b=\frac{1}{2}\big(10/\theta_X+20/\theta_Y\big)$, where $\theta_X\ne 0$ and $\theta_Y\ne 0$ are the switching points in decision problems X and Y, respectively. If a subject displays a unique and normal switching point in only one decision problem, loss aversion is estimated based on that problem alone. Subjects who did not display unique and normal switching points in either X or Y were excluded from the analysis. In total, 27.5\% of subjects in decision problem X and 28.8\% of subjects in decision problem Y exhibited multiple switching points. Table \ref{tab:LA} provides estimates for loss aversion. For comparison, the median degree of loss aversion is estimated at
2.25 by Kahneman and Tversky (1992), while Wang et al. (2017) estimate it at 1.8 for Vietnam.

%

\begin{table}[ht!]
\fontsize{9pt}{11pt}\linespread{1.2}\selectfont
\centering
\caption{\textbf{Estimated values of loss aversion}}
\begin{tabular}{l|c|c|c|ccc|c|c|c}
\hline
\hline
& \multicolumn{4}{c}{\textbf{Panel A}} && \multicolumn{4}{c}{\textbf{Panel B}} \\
\cline{2-5} \cline{7-10}
 & Median & 25\% perc. &  75\% perc. & N &  & Median & 25\% perc. &  75\% perc. & N \\
\hline
$D2_{50}$  & 4.50 & 3.67 & 6.67 & 5 && 4.08 & 1.91 & 6.67 & 6\\
$D3_{50}$ &  2.06 & 1.97 & 3.05 & 12 && 2.00 & 1.94 & 2.36 & 17 \\
$D1_{80}$ & 2.34 & 1.67 & 8.3 & 6 && 1.83 & 1.50 & 5.51 & 8 \\
$D2_{80}$ & 2.25 & 2.00 & 2.67 & 9 && 2.13 & 2.00 & 2.67 & 10 \\
$D3_{80}$  & 3.00 & 2.31 & 3.67 & 12 && 3.00 & 2.00 & 3.67 & 15 \\
\hline
All treatments &2.58 & 2.00 & 3.67 & 44 && 2.25 & 1.93 & 3.17 & 56\\
\hline
\hline
\end{tabular}
\label{tab:LA}
\begin{flushleft}
\emph{Notes:} Panel A presents results for subjects exhibiting a unique switching point in both decision problems. Panel B expands upon Panel A by including subjects who exhibited a unique switching point in one of the decision problems. Loss aversion for these subjects is estimated based on the decision problem in which they had the unique switching point.
\end{flushleft}
\end{table}
\FloatBarrier

\end{appendices}
\end{document}